\definecolor{myg}{RGB}{220,220,220}
\theoremstyle{definition}
\newtheorem{theorem}{Theorem}[subsection]
\newtheorem{corollary}[theorem]{Corollary}
\newtheorem{proposition}[theorem]{Proposition}
\newtheorem{lemma}[theorem]{Lemma}
\newtheorem{definition}[theorem]{Definition}
\newtheorem{example}[theorem]{Example}
\newtheorem{notation}[theorem]{Notation}
\newtheorem{remark}[theorem]{Remark}
\newtheorem{theoremapp}{Theorem}[section]
\newtheorem{lemmaapp}[theoremapp]{Lemma}
\newcommand{\numberset}{\mathbb}
\newcommand{\N}{\numberset{N}}
\newcommand{\R}{\numberset{R}}
\newcommand{\F}{\numberset{F}}
\newcommand{\mC}{\mathcal{C}}
\newcommand{\mG}{\mathcal{G}}
\newcommand{\mA}{\mathcal{A}}
\newcommand{\mN}{\mathcal{N}}
\newcommand{\mB}{\mathcal{B}}
\newcommand{\mF}{\mathcal{F}}
\newcommand{\mD}{\mathcal{D}}
\newcommand{\mU}{\mathcal{U}}
\newcommand{\mX}{\mathcal{X}}
\newcommand{\mY}{\mathcal{Y}}
\newcommand{\dH}{d_\textnormal{H}}
\newcommand{\mE}{\mathcal{E}}
\renewcommand{\mG}{\mathcal{G}}
\newcommand{\mV}{\mathcal{V}}
\newcommand{\mR}{\mathcal{R}}
\newcommand{\inn}{\textnormal{in}}
\newcommand{\out}{\textnormal{out}}
\newcommand{\Id}{\textnormal{Id}}
\newcommand{\rest}{\textnormal{cp}}
\newcommand{\Enc}{\textnormal{Enc}}
\newcommand{\Dec}{\textnormal{Dec}}
\newcommand{\CA}{\textnormal{C}_1}
\newcommand{\CAz}{\textnormal{C}_0}
\newcommand{\CAzr}{\textnormal{C}_{0,\rest}}
\newcommand{\reg}{\mR_1}
\newcommand{\regz}{\mR_0}
\newcommand{\regzr}{\mR_{0,\rest}}
\newcommand{\HH}{\textnormal{H}}
\newcommand{\adv}{\textnormal{\textbf{A}}}
\newcommand{\dto}{\dashrightarrow}
\newcommand{\vertprod}{\renewcommand{\arraystretch}{0.3}\begin{array}{c}\times\\\raisebox{1ex}{\vdots}\\\times\end{array}}
\newcommand{\concat}{\RHD} 
\newcommand\HP[3]{\HH_{#1,#2}\langle #3 \rangle}
\newcommand\restHP[3]{\overline{\HH}_{#1,#2}\langle #3 \rangle}
\newcommand\HB[3]{\HH_{\bm{#1},\bm{#2}}\langle \bm{#3} \rangle}
\newcommand\AP[3]{\adv_{#1,#2}\langle #3 \rangle}
\newcommand\AB[3]{\adv_{\bm{#1},\bm{#2}}\langle \bm{#3} \rangle}
\newcommand\HHB[2]{\HH_{\bm{#1}}\langle \bm{#2} \rangle}
\newcommand\HHP[2]{\HH_{#1}\langle #2 \rangle}
\newlength{\mynodespace}
\title{\textbf{\huge{Adversarial Network Coding}}}
\author{Alberto Ravagnani\thanks{The author was
partially supported by the Swiss National Science Foundation through
grant n. P2NEP2\_168527.} \ and Frank R. Kschischang}
 \affil{The Edward S. Rogers
Sr.\ Department\ of Electrical and Computer Engineering \\
University of Toronto, Toronto, ON M5S 3G4, Canada\\
 Email: \texttt{\{ravagnani,frank\}@ece.utoronto.ca}.}
\date{}
\begin{document}
\maketitle

\begin{abstract}
A combinatorial framework for adversarial network coding is presented.
Channels are described by specifying the possible
actions that one or more (possibly coordinated) adversaries may take.
Upper bounds on three notions of capacity---the one-shot capacity, the
zero-error capacity, and the compound zero-error capacity---are obtained
for point-to-point channels, and generalized to corresponding capacity
regions appropriate for multi-source networks.  A key result of this
paper is a general method by which bounds on these capacities in
point-to-point channels may be ported to networks.  This technique is
illustrated in detail for Hamming-type channels with multiple
adversaries operating on specific coordinates, which correspond, in the
context of networks, to multiple adversaries acting on specific network
edges.  Capacity-achieving coding schemes are described for some of the
considered adversarial models. 
\end{abstract}


\vspace{3em}

\section{Introduction}\label{intr}

In this
paper, we propose a mathematical framework for adversarial
network coding, introducing combinatorial tools and techniques for the 
analysis of
communication networks under adversarial models.  The actions that one
or more adversaries may take are described using fan-out sets, thereby
allowing for a wide variety of possible communication scenarios.  The
framework applies to single and multi-source networks, and to single and
multiple adversaries of various kinds.  Three notions of capacity,
called ``one-shot'', ``zero-error'', and ``compound zero-error'' are
defined.  For networks, we study three corresponding notions of capacity
region, establishing bounds and describing capacity-achieving schemes.

Since 2002, the problem of correcting errors caused by an adversary in
the context of network coding has been an active research area.  The
fundamentals of error correction within network coding schemes were
originally investigated in~\cite{CaiYeung,YeungCai} by Cai and Yeung,
who studied errors and erasures in the framework of single-source
networks, and established the network analogues of the Singleton and the
Hamming bounds.  Other bounds were derived in~\cite{YeungCai} for
{regular} networks~\cite[Definition 4]{YeungCai}.  Bounds and
error-correcting code constructions also appeared in~\cite{CaiYeung,YangYeung,YangNgaiYeung,YangYeung2,Zhang,YangYeungZhang,Matsumoto}. 

Various adversarial models have been investigated in the context of
network coding.  For example, Byzantine attacks are studied in~\cite{byzantine} and~\cite{nutmanlangberg}, in which omniscient
adversaries, secret-sharing models, and adversaries of limited
eavesdropping power are considered.  Adversaries who can control some of
the network's vertices are investigated in~\cite{kosut14,WSK}.
End-to-end approaches to error control in random and coherent network
coding were proposed in~\cite{KK1,SKK,onmetrics}, along with efficient
coding and decoding schemes based on rank-metric and subspace codes.
Other models were proposed in~\cite{jaggi2005,ho2008,randomlocations}. 
Adversaries controlling a channel state within probabilistic channel models
were studied in \cite{kos_prob}.

Fan-out set descriptions of adversaries 
in point-to-point channels
were proposed in~\cite[Section~III]{onmetrics},
investigating connections 
between such descriptions 
and the concept of correction capability of a code.

The problem of error correction in the context of multi-source random
linear network coding was recently addressed in~\cite{epfl1,epfl2,epfl3,MANIAC}, 
and capacity-achieving schemes were given in~\cite{MANIAC}.

Although a wide variety of network adversarial models have been studied
by various authors (mainly focusing on single-source linear network
coding), a unified combinatorial treatment of adversarial network
channels seems to be absent.  One of the goals of this paper is to fill
this gap.  In particular, this paper makes the following contributions.

Whereas noisy channels are described within a theory of ``probability,''
adversarial channels are described within a theory of ``possibility.''
Accordingly, throughout this paper, we take a combinatorial approach to
defining and studying channels, rather than a probabilistic one.  This
approach is inspired by Shannon's work~\cite{shannon_zero} on the
zero-error capacity of a channel, and motivated by the fact that a network adversary may not
be specified in general via random variables and probability distributions. In Section~\ref{firstsection} we
define codes, one-shot capacity, channel products, and the zero-error
capacity of a channel, and we describe how this approach relates to
Shannon's work.  In Section~\ref{secoperations}, we define two
fundamental channel operations: ``concatenation'' and ``union,'' and we
establish their main algebraic properties. We also show how they relate
to each other and to the notions of capacity.

In Section~\ref{sechammingtype} we study certain adversarial channels
called ``Hamming-type channels,'' whose input alphabet is a cartesian
product of the form $\mA^s$.  We consider multiple adversaries who can
corrupt or erase the components of an element $x \in \mA^s$, according
to certain restrictions, and explicitly compute the one-shot capacity,
the zero-error capacity and the compound zero-error capacity of these
channels (see Subsection~\ref{subsectioncompound} for the definitions).
This extends a number of classical results in Coding Theory.

The study of networks starts with Section~\ref{secnetwork}.  In contrast
to previous approaches (e.g.,~\cite{CaiYeung,YeungCai}), our framework
allows for multi-source networks with a wide variety of adversarial
models.  We propose three notions of capacity region of a multi-source
adversarial network, which we call the ``one-shot capacity region'', the
``zero-error capacity region'', and the ``compound zero-error capacity
region'' (see Subsection~\ref{subcapacityregions} for precise
definitions). Most previous work in adversarial network coding
 implicitly
focus on one-shot models (cf. also Remark \ref{rmkisdifferent}), while to our best knowledge zero-error and
compound zero-error adversarial models have not so far been investigated.

The centerpiece of this paper is Section~\ref{secportinglemma}, in which
we show that any upper bounds for the capacities of Hamming-type
channels can be ported to the networking context in a systematic manner.
Using the channel operations defined in Section~\ref{secoperations}, we
show that this ``porting technique'' applies to all three notions of
capacity region mentioned above, in the general context of multi-source
networks.  Moreover, this method does not require the underlying
network to be regular, in the sense of~\cite[Definition 4]{YeungCai}. 

These theoretical results are then applied to concrete networking
contexts in Section~\ref{secbounds}, where we study multiple
adversaries, each with possibly different error and erasure powers,
having access to prescribed subsets of the network edges.  The
adversaries are in principle allowed to coordinate with each other.  We
derive upper bounds for the three capacity regions of such adversarial
networks, extending certain results of~\cite{YeungCai} to multiple
adversaries (restricted or not) and multi-source networks.

In Section~\ref{secconstructions}, we give capacity-achieving schemes
for some of the adversarial scenarios investigated in
Section~\ref{secbounds}.  Incorporating ideas from~\cite{MANIAC} 
and~\cite{koettermedard}, we show that linear network coding suffices to
achieve any integer point of the capacity regions associated with some
simple adversaries. We then adapt these communication schemes to
compound models. Finally, we show that for some adversarial networks
capacity cannot be achieved with linear network coding.

Other classes of adversaries (such as rank-metric adversaries and
multiple adversaries having access to overlapping sets of the network
edges) are briefly discussed in Section~\ref{extensions}. 
Finally,
Section~\ref{secconcl} is devoted to conclusions and a discussion of
open problems.


\section{Adversarial Channels} \label{firstsection}

In this section we define and study point-to-point adversarial channels,
one-shot codes, and one-shot capacity. We then describe the channel
product construction, and define the zero-error capacity of an
adversarial channel, as proposed by Shannon in~\cite{shannon_zero}.
Although most of this material is well known, this section serves to
establish notation that will be used throughout the paper.

\subsection{Channels, Codes, Capacity}

An adversarial channel is described by an input alphabet $\mX$, an
output alphabet $\mY$, and a collection $\{\Omega(x) : x \in \mX\}$ of
subsets of $\mY$, one for each $x \in \mX$. The set $\Omega(x) \subseteq
\mY$ is interpreted as the \textbf{fan-out set} of $x$, i.e., the set of
all symbols $y \in \mY$ that the adversary can cause to be received when
the input symbol $x \in \mX$ is transmitted. This motivates the
following definition.

\begin{definition} An (\textbf{adversarial}) \textbf{channel} is a map
$\Omega: \mX \to 2^\mY \setminus \{\emptyset\}$, where $\mX$ and $\mY$
are finite non-empty sets called \textbf{input} and \textbf{output}
\textbf{alphabet}, respectively.  We denote such an adversarial channel
by $\Omega: \mX {\dto} \mY$, and say that $\Omega$ is
\textbf{deterministic} if $|\Omega(x)|=1$ for all $x \in \mX$.  A
deterministic channel can be identified naturally with the function $\mX
\to \mY$ that associates to $x \in \mX$ the unique element
$y \in \Omega(x)$.
\end{definition}

In this paper we will restrict our attention to channels whose input and
output alphabets are finite. 

\begin{definition}
Let $\Omega: \mX \dto \mY$ be a channel. A (\textbf{one-shot}) \textbf{code} 
for $\Omega$ is a non-empty 
subset $\mC \subseteq \mX$. 
We say that $\mC$ is \textbf{good} for $\Omega$ when 
$\Omega(x) \cap \Omega(x') = \emptyset$ for all $x,x' \in \mC$ with
$x \neq x'$.
\end{definition}

In words, a good code for a channel $\Omega: \mX \dto \mY$ is a
selection of input symbols from $\mX$ whose fan-out sets are pairwise
disjoint. If channel inputs are restricted to a good code, it is
impossible for an adversary to cause confusion at the receiver about the
transmitted symbol.

\begin{definition}\label{defC}
The (\textbf{one-shot}) \textbf{capacity} of a channel $\Omega: \mX \dto \mY$ 
 is the base-2 logarithm 
of the largest cardinality of a good code for $\Omega$, i.e.,
$$\CA(\Omega) := \max\{ \log_2 |\mC| : \mbox{$\mC \subseteq \mX$ is good 
for $\Omega$} \}.$$
\end{definition}

Clearly, the capacity of any channel $\Omega: \mX \dto \mY$
satisfies $0 \le \CA(\Omega) \le \min\{\log_2 |\mX|, 
\log_2 |\mY|\}$.

\begin{example}\label{exid}
Let $\mX,\mY$ be finite non-empty sets with $\mY \supseteq \mX$. 
The \textbf{identity channel} 
$\textnormal{Id} : \mX \dto \mY$ is defined by 
 $\Omega(x):=\{x\}$ for all $x \in \mX$.
We have $\CA(\textnormal{Id})= \log_2|\mX|$.
\end{example}

\begin{example} \label{firstexample}
Let $\mX:=\F_2^4$. Consider an adversary who is capable of corrupting
at most one of the components of any $x \in \F_2^4$. The action of the 
adversary is described by the channel 
$\HH: \F_2^4 \dto \F_2^4$ defined by $\HH(x):=\{y \in \F_2^4 : \dH(x,y) \le 1\}$
for all $x \in \F_2^4$,
where $\dH$ is the Hamming distance. The code
$\mC=\{(0000), (1111)\}$ is good for $\HH$, and there is no good code
with larger cardinality. Therefore we have $\CA(\HH)=1$.
\end{example}

%

Channels with the same input and output alphabets can be compared as follows.

\begin{definition} \label{deffiner}
Let $\Omega_1,\Omega_2: \mX \dto \mY$ be channels.
We say that $\Omega_1$ is
\textbf{finer} than $\Omega_2$ (in symbols, $\Omega_1 \le \Omega_2$ or $\Omega_2 \ge \Omega_1$) when 
$\Omega_1(x) \subseteq \Omega_2(x)$ 
for all $x \in \mX$.
\end{definition}

If $\Omega_1 \le \Omega_2$, then every code that is good for 
$\Omega_2$ is good for $\Omega_1$ as well. Therefore we have 
 $\CA(\Omega_1) \ge \CA(\Omega_2)$.

\subsection{Products of Channels}

Adversarial channels $\Omega_1$ and $\Omega_2$ can be naturally combined 
with each other via a product construction,
giving rise to a third channel denoted by $\Omega_1 \times \Omega_2$.

\begin{definition}
The \textbf{product} of channels $\Omega_1 : \mX_1 \dto \mY_1$ and 
$\Omega_2 : \mX_2 \dto \mY_2$
is the channel
$\Omega_1 \times \Omega_2 : \mX_1 \times \mX_2 \dto  \mY_1 \times \mY_2$
defined by   
$$(\Omega_1 \times \Omega_2)(x_1,x_2):= \Omega_1(x_1) \times 
\Omega_2(x_2), \mbox{ for all 
$(x_1,x_2) \in \mX_1 \times \mX_2$}.$$
\end{definition}

The following result shows two important properties of the channel product.

\begin{proposition}\label{boundprod}
Let $\Omega_1, \Omega_2, \Omega_3$ be channels. Then
\begin{enumerate}\setlength\itemsep{0em}
\item $(\Omega_1 \times \Omega_2 )\times \Omega_3 = \Omega_1 
\times (\Omega_2 \times \Omega_3)$,\label{boundprod1}
\item $\CA(\Omega_1 \times \Omega_2) \ge \CA(\Omega_1) + \CA(\Omega_2)$. 
\label{boundprod2}
\end{enumerate}
\end{proposition}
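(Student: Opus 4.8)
The plan is to prove the two statements separately, both by direct unwinding of the definitions.

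For part~\ref{boundprod1}, associativity of the product, I would simply check that the two channels $(\Omega_1 \times \Omega_2) \times \Omega_3$ and $\Omega_1 \times (\Omega_2 \times \Omega_3)$ have the same input alphabet, the same output alphabet, and the same fan-out set on every input. Up to the canonical identification of $(\mX_1 \times \mX_2) \times \mX_3$ with $\mX_1 \times (\mX_2 \times \mX_3)$ (and similarly for the output alphabets), the fan-out set of $((x_1,x_2),x_3)$ under the left-hand channel is $(\Omega_1(x_1) \times \Omega_2(x_2)) \times \Omega_3(x_3)$, while under the right-hand channel the fan-out set of $(x_1,(x_2,x_3))$ is $\Omega_1(x_1) \times (\Omega_2(x_2) \times \Omega_3(x_3))$; these coincide because the cartesian product of sets is itself associative. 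This step is routine and carries no real obstacle.

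For part~\ref{boundprod2}, the subadditivity-type bound, I would argue by exhibiting an explicit good code for $\Omega_1 \times \Omega_2$ of the required size. Let $\mC_1 \subseteq \mX_1$ be a good code for $\Omega_1$ with $|\mC_1| = 2^{\CA(\Omega_1)}$, and let $\mC_2 \subseteq \mX_2$ be a good code for $\Omega_2$ with $|\mC_2| = 2^{\CA(\Omega_2)}$; these exist by Definition~\ref{defC}. I claim $\mC_1 \times \mC_2 \subseteq \mX_1 \times \mX_2$ is good for $\Omega_1 \times \Omega_2$. Indeed, take distinct pairs $(x_1,x_2), (x_1',x_2') \in \mC_1 \times \mC_2$; then $x_1 \neq x_1'$ or $x_2 \neq x_2'$. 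In the first case $\Omega_1(x_1) \cap \Omega_1(x_1') = \emptyset$, so $(\Omega_1(x_1) \times \Omega_2(x_2)) \cap (\Omega_1(x_1') \times \Omega_2(x_2')) \subseteq (\Omega_1(x_1) \cap \Omega_1(x_1')) \times \mY_2 = \emptyset$; the second case is symmetric. Since $(\Omega_1 \times \Omega_2)(x_1,x_2) = \Omega_1(x_1) \times \Omega_2(x_2)$, this shows $\mC_1 \times \mC_2$ is good. Hence $\CA(\Omega_1 \times \Omega_2) \ge \log_2 |\mC_1 \times \mC_2| = \log_2(|\mC_1| \cdot |\mC_2|) = \CA(\Omega_1) + \CA(\Omega_2)$.

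Neither part presents a genuine difficulty; the only point requiring a moment's care is the emptiness computation for the product of fan-out sets in part~\ref{boundprod2}, namely that two product sets are disjoint as soon as one pair of factors is disjoint. (Note the inequality in part~\ref{boundprod2} may be strict: the product channel can admit good codes not of product form, so combining per-coordinate codes is generally suboptimal.)
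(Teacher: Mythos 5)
Your proof is correct and follows essentially the same route as the paper: part~\ref{boundprod1} by associativity of cartesian products of sets, and part~\ref{boundprod2} by showing that the product $\mC_1 \times \mC_2$ of capacity-achieving good codes is good for $\Omega_1 \times \Omega_2$. You simply spell out the disjointness check that the paper leaves implicit.
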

\begin{proof}
The first property is straightforward. To see the second, observe that 
if $\mC_1$ and $\mC_2$ are good codes 
for $\Omega_1$ and $\Omega_2$ (respectively), then   
$\mC_1 \times \mC_2$ is good for $\Omega_1 \times \Omega_2$.
\end{proof}

The associativity of the channel product (part~\ref{boundprod1} of 
Proposition~\ref{boundprod}) allows expressions 
such as $\Omega_1 \times \Omega_2 \times \Omega_3$ to be written 
without danger of ambiguity.

\begin{definition}
Let $n \ge 1$ be an integer. The \textbf{$n$-th power} of a channel 
$\Omega : \mX \dto \mY$ is the channel
$$\Omega^n := \underbrace{\Omega \times \cdots \times 
\Omega}_{n \text{ times}} :
\mX^n \dto \mY^n.$$
\end{definition}

The $n$-th power of $\Omega : \mX \dto \mY$ models 
$n$ uses of $\Omega$. Recall that the elements of 
$\Omega(x)$ represent the outputs that an adversary can produce from the input 
$x \in \mX$. If the channel $\Omega$ is used $n$ times, then we have
$\Omega^n(x_1,...,x_n)= \Omega(x_1) \times \cdots \times \Omega(x_n)$.

\begin{remark}
In general, the lower bound of part~\ref{boundprod2} of 
Proposition~\ref{boundprod} is not tight, i.e., the capacity of the 
product channel $\Omega_1 \times \Omega_2$ can be strictly larger than the 
sum of the capacities of the 
channels $\Omega_1$ and $\Omega_2$.
The following example, which will be used repeatedly in the paper,
illustrates this point. 
\end{remark}

\begin{example} \label{fee}
Let $\HH: \F_2^4 \dto \F_2^4$ be the channel of 
Example~\ref{firstexample}, which has capacity
 $\CA(\HH)=1$. The code 
$\mC:=\{(00000000), (00011101),  (10100111), (11010110), (11101000)\} 
\subseteq \F_2^4 \times \F_2^4$
is good for the product channel $\HH \times \HH$. Therefore
$\CA(\HH \times \HH) \ge \log_2(5)>2 = \CA(\HH)+\CA(\HH)$.

For $x =(x_1,...,x_8)\in \F_2^4 \times \F_2^4$,  let 
$x^1:=(x_1,...,x_4)$ and $x^2:=(x_5,...,x_8)$. Then
a code $\mC \subseteq \F_2^4 \times \F_2^4$ is good for $\HH^2$ if and 
only if $\dH(x^1,y^1) \ge 3$ or $\dH(x^2,y^2) \ge 3$
for all $x,y \in \mC$
with $x \neq y$.

We conclude the example by showing a structural property of any good code
$\mC$ for $\HH^2$ with $|\mC|=5$. The property will be needed later  in 
 Example~\ref{ex_isdifferent}.

Let $\mC \subseteq \F_2^4 \times \F_2^4$ be any good code for $\HH^2$ 
with $|\mC|=5$.
We claim that there are no two codewords of 
$\mC$ that coincide in the first four components.
To see this, denote by $x,y,z,t,u$ the elements of $\mC$, and assume by 
contradiction that, say,
$x^1=y^1$. Without loss of generality, we may assume $x=0$ (and thus $y^1=0$).
Then the vectors $z^1,t^1,u^1$ must have Hamming weight at least 3. 
Indeed, if, say, $z^1$
has Hamming weight smaller than 3, then $\{x^2,y^2,z^2\} \subseteq \F_2^4$ is a 
code of cardinality
$3$ and minimum Hamming distance $3$, contradicting the fact that 
$\CA(\HH)=1$.
On the other hand, since $z^1,t^1,u^1$ have Hamming weight at least 3, we have 
$\dH(z^1,t^1), \dH(z^1,u^1), \dH(t^1,u^1) \le 2$. Since $\mC$ is good 
for $\HH^2$,  $\{z^2,t^2,u^2\}$ must be 
a code of cardinality
$3$ and minimum Hamming distance 3, again contradicting
the fact that $\CA(\HH)=1$. 
\end{example}

\subsection{Adjacency Structure of a Channel} \label{subadj}
In this subsection we introduce the adjacency function of a channel, and 
propose a definition of isomorphic channels. In particular, we relate 
the fan-out set description of channels adopted in this paper with the graph-theoretic 
approach taken by Shannon in~\cite{shannon_zero}.

\begin{definition} \label{defisoch}
The \textbf{adjacency function} $\alpha_\Omega: \mX \times \mX \to \{0,1\}$
 of  a channel 
 $\Omega: \mX \dto \mY$ is defined, for all $x,x' \in \mX$, by 
$$\alpha_\Omega(x,x'):= \left\{ 
\begin{array}{ll} 1 & \mbox{ if } \Omega(x) \cap \Omega(x') \neq \emptyset, \\
0 & \mbox{ otherwise.}
 \end{array}\right.$$
We say that channels $\Omega_1: \mX_1 \dto \mY_1$ and 
$\Omega_2: \mX_2 \dto \mY_2$
 are \textbf{isomorphic} (in symbols, $\Omega_1 \cong \Omega_2$) if there
  exists a bijection $f: \mX_1 \to \mX_2$ such that 
 $\alpha_{\Omega_1}(x,x') = \alpha_{\Omega_2}(f(x),f(x'))$  for all  
 $(x,x') \in \mX_1 \times \mX_2$.
\end{definition}

The adjacency function of 
$\Omega$ captures the ``ambiguity relations'' among the 
input symbols of $\Omega$. 
Channels $\Omega_1$ and $\Omega_2$ are isomorphic if their input symbols have the 
same ambiguity relations, for some identification of their input alphabets.

The isomorphism class of a channel $\Omega: \mX \dto \mY$ can be represented 
via a graph $\mG$ as follows. Up to a suitable bijection, the vertices of $\mG$ are the elements of
 $\mathcal{V}=\{0,1,...,|\mX|-1\}$, and $(x,x') \in \mV \times \mV$ is an edge 
of $\mG$ if and only if $\alpha_\Omega(x,x')=1$. Therefore $\alpha_\Omega$ is 
precisely the adjacency matrix of the graph 
$\mG$.
This is the way channels are 
described and studied by Shannon in~\cite{shannon_zero}. Note that,
although every vertex of $\mG$ is by definition adjacent to itself, loops are 
usually not shown in the graph description.

\begin{example}[The ``pentagon channel''] \label{expentagon}
Let $\mX=\mY:=\{0,1,2,3,4\}$, and let $\Omega: \mX \dto \mY$ be the channel defined by
$$\Omega(0):=\{0,1\}, \qquad \Omega(1):=\{1,2\}, \qquad \Omega(2):=\{2,3\}, \qquad \Omega(3):=\{3,4\},
\qquad \Omega(4):=\{4,0\}.$$
The five fan-out sets of $\Omega$ are represented as in 
Figure~\ref{fig:1}(a), and
a graph representation of the isomorphism class of 
$\Omega$ is depicted in Figure~\ref{fig:1}(b). The channel $\Omega$ was first introduced  and studied
 by Shannon in~\cite{shannon_zero}.
\end{example}

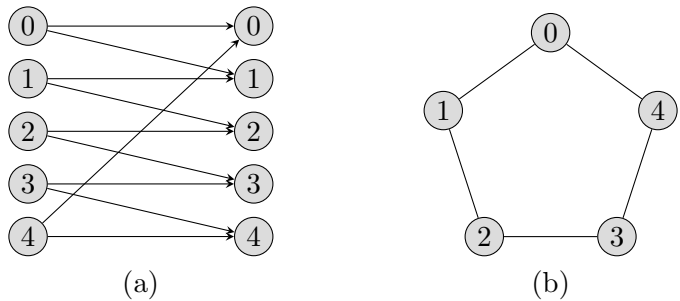
\begin{figure}[htbp]
\centering
\setlength{\tabcolsep}{1cm}
\begin{tabular}{cc}
\begin{tikzpicture}[>=stealth,
vertex/.style={shape=circle,draw,inner sep=2pt},
nnode/.style={shape=circle,fill=myg,draw,inner sep=2pt}]
\foreach \i in {0,1,...,4}
{
   \node[nnode] (L\i) at (0,-0.7*\i) {\i};
   \node[nnode] (R\i) at (3,-0.7*\i) {\i};
   \draw[->] (L\i) -- (R\i);
}
\draw[->] (L0)--(R1);
\draw[->] (L1)--(R2);
\draw[->] (L2)--(R3);
\draw[->] (L3)--(R4);
\draw[->] (L4)--(R0);
\end{tikzpicture} &
\begin{tikzpicture}[>=stealth,
vertex/.style={shape=circle,draw,inner sep=2pt},
nnode/.style={shape=circle,fill=myg,draw,inner sep=2pt}]
\foreach \i in {0,1,...,4}
{
   \node[nnode] (\i) at (90+\i*72:1.5) {\i};
}
\draw (0) -- (1);
\draw (1) -- (2);
\draw (2) -- (3);
\draw (3) -- (4);
\draw (4) -- (0);
\end{tikzpicture}\\ (a) & (b)
\end{tabular}
\caption{The ``pentagon channel:'' (a) fan-out sets, (b) graph representation.} \label{fig:1}
\end{figure}

One can show that channel isomorphism is an equivalence relation. Moreover, the
following properties of isomorphic channels hold. The proof simple and left to
the reader.

\begin{proposition} Let $\Omega_1, \Omega_2, \Omega_3, \Omega_4$ be channels. \label{connect} \label{proprisoproduct}
\begin{enumerate} \setlength\itemsep{0em}
\item If $\Omega_1 \cong \Omega_2$ and $\Omega_3 \cong \Omega_4$, then 
$\Omega_1 \times \Omega_3 \cong \Omega_2 \times \Omega_4$.
\item  \label{four} If $\Omega_1 \cong \Omega_2$, then $\Omega_1^n \cong \Omega_2^n$ for all $n\in \N_{\ge 1}$.
\item If $\Omega_1 \cong \Omega_2$, then
$\CA(\Omega_1)=\CA(\Omega_2)$.
\end{enumerate}
\end{proposition}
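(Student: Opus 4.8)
The plan is to reduce all three assertions to a single algebraic identity for the adjacency function of a product channel. Writing $\Omega_1 : \mX_1 \dto \mY_1$ and $\Omega_3 : \mX_3 \dto \mY_3$, I would first record that for all $(x,z),(x',z') \in \mX_1 \times \mX_3$,
$$(\Omega_1 \times \Omega_3)(x,z) \cap (\Omega_1 \times \Omega_3)(x',z') = \bigl(\Omega_1(x) \cap \Omega_1(x')\bigr) \times \bigl(\Omega_3(z) \cap \Omega_3(z')\bigr),$$
so this set is non-empty if and only if both factors are non-empty. In terms of adjacency functions this says $\alpha_{\Omega_1 \times \Omega_3}\bigl((x,z),(x',z')\bigr) = \alpha_{\Omega_1}(x,x')\,\alpha_{\Omega_3}(z,z')$ (an ``AND'' of the two coordinate adjacencies). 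Now, if $f : \mX_1 \to \mX_2$ and $g : \mX_3 \to \mX_4$ are bijections witnessing $\Omega_1 \cong \Omega_2$ and $\Omega_3 \cong \Omega_4$, then $f \times g : \mX_1 \times \mX_3 \to \mX_2 \times \mX_4$ is a bijection, and applying the displayed identity to both products together with $\alpha_{\Omega_1}(x,x') = \alpha_{\Omega_2}(f(x),f(x'))$ and $\alpha_{\Omega_3}(z,z') = \alpha_{\Omega_4}(g(z),g(z'))$ gives $\alpha_{\Omega_1 \times \Omega_3}\bigl((x,z),(x',z')\bigr) = \alpha_{\Omega_2 \times \Omega_4}\bigl((f(x),g(z)),(f(x'),g(z'))\bigr)$ for all arguments. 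That is exactly $\Omega_1 \times \Omega_3 \cong \Omega_2 \times \Omega_4$, proving the first assertion.

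The second assertion then follows by induction on $n$. The case $n=1$ is the hypothesis. For the inductive step, use $\Omega_1^n = \Omega_1^{n-1} \times \Omega_1$ and $\Omega_2^n = \Omega_2^{n-1} \times \Omega_2$ (associativity of the product, Proposition~\ref{boundprod}), and apply the first assertion to $\Omega_1^{n-1} \cong \Omega_2^{n-1}$ and $\Omega_1 \cong \Omega_2$.

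For the third assertion, the key remark is that goodness of a code is entirely determined by the adjacency function: a code $\mC \subseteq \mX$ is good for $\Omega$ precisely when $\alpha_\Omega(x,x') = 0$ for all distinct $x,x' \in \mC$. Let $f : \mX_1 \to \mX_2$ witness $\Omega_1 \cong \Omega_2$. If $\mC$ is good for $\Omega_1$, then for distinct $x,x' \in \mC$ we have $\alpha_{\Omega_2}(f(x),f(x')) = \alpha_{\Omega_1}(x,x') = 0$, and since $f$ is injective, $f(\mC)$ is good for $\Omega_2$ with $|f(\mC)| = |\mC|$; hence $\CA(\Omega_2) \ge \CA(\Omega_1)$. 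Since $f^{-1}$ witnesses $\Omega_2 \cong \Omega_1$, the same argument yields $\CA(\Omega_1) \ge \CA(\Omega_2)$, and therefore $\CA(\Omega_1) = \CA(\Omega_2)$.

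There is no serious obstacle here; the only points that require a moment's care are the ``product of intersections equals intersection of products'' identity underlying the first assertion, and, in the third assertion, the observation that the inverse of a channel isomorphism is again a channel isomorphism, which is what makes the capacity argument symmetric.
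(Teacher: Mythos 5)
Your proof is correct. Note that the paper itself does not give a proof of this proposition — it explicitly states ``The proof [is] simple and left to the reader'' — so there is no authorial argument to compare against. Your approach (establish the adjacency identity $\alpha_{\Omega_1\times\Omega_3}((x,z),(x',z'))=\alpha_{\Omega_1}(x,x')\,\alpha_{\Omega_3}(z,z')$ from $(A\times B)\cap(C\times D)=(A\cap C)\times(B\cap D)$, then deduce parts~1 and~2, and observe that goodness of a code depends only on the adjacency function for part~3) is exactly the natural route the authors had in mind; the one place requiring a touch of care, that the inverse of an isomorphism is again an isomorphism, you correctly flag and handle.
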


\subsection{Zero-Error Capacity}

In this subsection we define the zero-error capacity of an adversarial channel, and relate it to
the one-shot capacity.
Given the graph-theoretic description of channels illustrated in Subsection~\ref{subadj}, 
the following results essentially already appear in Shannon's paper~\cite{shannon_zero}.
We state them in the language of fan-out sets for convenience.
See \cite{zeroerrorinvited} for a general reference on Zero-Error Information Theory.

\begin{definition}
The \textbf{zero-error capacity} of a channel $\Omega$ is 
the number $$\CAz(\Omega):= \sup \left\{ \CA(\Omega^n)/n : n \in \N_{\ge 1}\right\}.$$
\end{definition}

\begin{remark}
It is easy to see that $\CAz(\Omega)$ is a non-negative number for every channel 
$\Omega$, i.e., that $0 \le \CAz(\Omega) < +\infty$. A less immediate (though intuitive) property 
is that supremum in the definition of zero-capacity is in fact a limit.
This can be shown using Fekete's Lemma for superadditive 
sequences (see~\cite{fekete} or~\cite[Section 1.9]{steele}). 
\end{remark}

As one may expect, the zero error-capacity of a channel only depends on its 
isomorphism class.
This fact follows from property~\ref{four} of 
Proposition~\ref{connect}.

\begin{proposition}
Let $\Omega_1, \Omega_2$ be channels. If $\Omega_1 \cong \Omega_2$, then 
$\CAz(\Omega_1)=\CAz(\Omega_2)$.
\end{proposition}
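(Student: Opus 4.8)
The plan is to reduce everything to the one-shot statement already recorded in Proposition~\ref{connect}, item~3. Recall $\CAz(\Omega_i) = \sup\{\CA(\Omega_i^n)/n : n \in \N_{\ge 1}\}$ by definition, so it suffices to show that $\CA(\Omega_1^n) = \CA(\Omega_2^n)$ for every $n \ge 1$; taking the supremum over $n$ on both sides then gives the claim immediately, since the two sets being compared are literally equal.

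So the work is to produce, from the hypothesis $\Omega_1 \cong \Omega_2$, an isomorphism $\Omega_1^n \cong \Omega_2^n$. But this is exactly property~\ref{four} of Proposition~\ref{connect}: if $\Omega_1 \cong \Omega_2$ then $\Omega_1^n \cong \Omega_2^n$ for all $n \in \N_{\ge 1}$. Combining this with property~3 of the same proposition (isomorphic channels have equal one-shot capacity), I get $\CA(\Omega_1^n) = \CA(\Omega_2^n)$ for each $n$.

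Thus the proof is essentially two lines: first invoke Proposition~\ref{connect}(\ref{four}) to get $\Omega_1^n \cong \Omega_2^n$ for all $n \ge 1$; then invoke Proposition~\ref{connect}(3) to conclude $\CA(\Omega_1^n) = \CA(\Omega_2^n)$ for all $n \ge 1$; finally observe
$$\CAz(\Omega_1) = \sup_{n \ge 1} \frac{\CA(\Omega_1^n)}{n} = \sup_{n \ge 1} \frac{\CA(\Omega_2^n)}{n} = \CAz(\Omega_2).$$

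There is no real obstacle here: the genuine content (that channel products and powers respect isomorphism, which ultimately traces back to the fact that a bijection $f$ matching adjacency functions of $\Omega_1$ and $\Omega_2$ induces the bijection $f \times \cdots \times f$ matching the adjacency functions of the powers) has already been discharged in Proposition~\ref{connect}, whose proof the text leaves to the reader. If one wanted to be self-contained one would spell out that $\alpha_{\Omega^n}\big((x_1,\dots,x_n),(x_1',\dots,x_n')\big) = \prod_{i=1}^n \alpha_{\Omega}(x_i,x_i')$, but given the stated assumption that earlier results may be used, the argument above is complete.
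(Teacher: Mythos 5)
Your proof is correct and matches the paper's argument exactly: the paper also derives the result from item~\ref{four} of Proposition~\ref{connect} (powers of isomorphic channels are isomorphic) combined with item~3 (isomorphic channels have equal one-shot capacity), then takes the supremum over $n$.
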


The next result shows how one-shot capacity and zero-error capacity relate to each other.

\begin{proposition} \label{relate}
Let $\Omega: \mX \dto \mY$ be a channel. The following hold.
\begin{enumerate} \setlength\itemsep{0em}
\item $\CA(\Omega^n) \ge n \cdot \CA(\Omega)$ for all $n \ge 1$. 
Thus $\CAz(\Omega) \ge \CA(\Omega)$.
\item If $\CA(\Omega)=0$, then $\CAz(\Omega)=0$.
\end{enumerate}
\end{proposition}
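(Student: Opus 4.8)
The plan is to obtain the first statement by iterating part~\ref{boundprod2} of Proposition~\ref{boundprod}, and the second by a direct combinatorial argument about fan-out sets in the product channel.

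For part~1, I would argue by induction on $n$. The base case $n=1$ is trivial. For the inductive step, write $\Omega^n = \Omega^{n-1} \times \Omega$ (using associativity, part~\ref{boundprod1} of Proposition~\ref{boundprod}) and apply part~\ref{boundprod2} of the same proposition to get
$\CA(\Omega^n) \ge \CA(\Omega^{n-1}) + \CA(\Omega) \ge (n-1)\CA(\Omega) + \CA(\Omega) = n\,\CA(\Omega)$.
Dividing by $n$ and taking the supremum over $n \ge 1$ then yields $\CAz(\Omega) \ge \CA(\Omega)$ straight from the definition of zero-error capacity.

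For part~2, the key observation is that $\CA(\Omega)=0$ holds exactly when no good code of cardinality at least $2$ exists, i.e., $\Omega(x) \cap \Omega(x') \ne \emptyset$ for every pair $x,x' \in \mX$ (in graph terms, the adjacency graph of $\Omega$ is complete). I would then show $\CA(\Omega^n)=0$ for every $n$ by checking that $\Omega^n$ has the same property. Take distinct $x=(x_1,\dots,x_n)$ and $x'=(x_1',\dots,x_n')$ in $\mX^n$. For each coordinate $i$, either $x_i=x_i'$, in which case $\Omega(x_i)\cap\Omega(x_i')=\Omega(x_i)\ne\emptyset$ since fan-out sets are non-empty by definition, or $x_i\ne x_i'$, in which case $\Omega(x_i)\cap\Omega(x_i')\ne\emptyset$ by the hypothesis $\CA(\Omega)=0$. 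Hence $\Omega^n(x)\cap\Omega^n(x') = \prod_{i=1}^n \big(\Omega(x_i)\cap\Omega(x_i')\big) \ne \emptyset$, so $\{x,x'\}$ is not good. Therefore $\CA(\Omega^n)=0$ for all $n$, and $\CAz(\Omega)=\sup_{n}\CA(\Omega^n)/n=0$.

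Neither step presents a genuine obstacle; the only points requiring a little care are the reformulation of ``$\CA(\Omega)=0$'' as the statement that all fan-out sets pairwise intersect, and the fact that the fan-out set of $(x,x')$ under $\Omega^n$ meets that of a second tuple if and only if the fan-out sets meet in each coordinate — both immediate from the definition of the product channel.
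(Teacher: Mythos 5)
Your proof is correct and takes essentially the same route as the paper: part~1 is the observation that products of good codes are good (the paper builds $\mC^n$ directly, you iterate part~\ref{boundprod2} of Proposition~\ref{boundprod}, but these are the same fact), and part~2 is the same coordinate-wise intersection argument (the paper phrases it by contradiction, you phrase it directly, and your explicit handling of coordinates where $x_i=x_i'$ via non-emptiness of fan-out sets is a minor but welcome clarification).
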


\begin{proof}
\begin{enumerate} \setlength\itemsep{0em}
\item Let $\mC \subseteq \mX$ be good for $\Omega$ with 
$\log_2 |\mC|=\CA(\Omega)$, and let $n \in \N_{\ge 1}$. Then $\mC^n$ 
is good for $\Omega^n$. 
Thus $\CA(\Omega^n)/n \ge \CA(\Omega)$.
Since $n$ is arbitrary, this implies in particular that 
$\CAz(\Omega) \ge \CA(\Omega)$.
\item Since $\CA(\Omega)=0$, for all $x,x' \in \mX$ we have 
$\Omega(x) \cap \Omega(x') \neq \emptyset$.
 Assume by way of contradiction that $\CAz(\Omega)>0$. Then 
 there exists $n \in \N_{\ge 1}$
  with
$\CA(\Omega^n)>0$. In particular, there exists a good code 
$\mC \subseteq \mX^n$ 
for $\Omega^n$ with $|\mC| \ge 2$. Let $(x_1,...,x_n), (x_1',...,x_n') \in \mC$ with 
$(x_1,...,x_n) \neq (x_1',...,x_n')$.
Then $$\emptyset = (\Omega(x_1) \times \cdots \times \Omega(x_n)) 
\cap (\Omega(x_1') \times \cdots \times \Omega(x_n'))
= \prod_{k=1}^n \left( \Omega(x_k) \cap \Omega(x_k')\right) \neq \emptyset,$$
a contradiction. Therefore it must be that $\CAz(\Omega)=0$, as claimed. \qedhere 
\end{enumerate}
\end{proof}

\begin{remark}
In general, the zero-error capacity of a channel $\Omega$ is \emph{strictly} 
larger than its one-shot capacity. 
For example, let $\HH$ be the channel of  
Example~\ref{firstexample}. We showed that 
$\CA(\HH^2)/2 > \CA(\HH)$, which implies that
 $\CAz(\HH)>\CA(\HH)$.
\end{remark}

The zero-error capacity of a channel is a combinatorial invariant which is in general 
very difficult to compute.  For example, let $\Omega$ be the pentagon channel
of Example~\ref{expentagon}. In~\cite{shannon_zero} Shannon showed that 
$\log_2 \sqrt{5} \le \CAz(\Omega) \le \log_2 {5}-1$. The exact value 
of $\CAz(\Omega)$ was computed only
twenty-three years later by Lov\'asz in~\cite{lovasz}, using 
sophisticated Graph Theory techniques. The result is $\CAz(\Omega)=\log_2 \sqrt{5}$.


\section{Operations with Channels}
\label{secoperations}
In this section we introduce two operations with channels, namely, concatenation and union, 
showing  how they relate to each other and to the channel product.
These two constructions will play an important role throughout the paper 
in the study of several classes of adversarial point-to-point and network channels.

\subsection{Concatenation of Channels}

We start with channel concatenation.

\begin{definition}
Let $\Omega_1: \mX_1 \dto \mY_1$ and $\Omega_2: \mX_2 \dto \mY_2$ be channels, with $\mY_1 \subseteq \mX_2$.
The \textbf{concatenation}
of $\Omega_1$ and $\Omega_2$ is the channel 
$\Omega_1 \concat \Omega_2  : \mX_1 \dto \mY_2$ defined by 
$$(\Omega_1 \concat \Omega_2)(x):= \!\!\!\! \bigcup_{y \in \Omega_1(x)} \!\!\! \Omega_2(y)
\text{ for all } x \in \mX_1.$$
\end{definition}

This operation models the situation where the output of 
$\Omega_1$ is taken as the input to $\Omega_2$ without any
intermediate processing.

\begin{example}
Let $\Omega:\F_2^4 \to \F_2^4$ be the channel introduced in Example~\ref{firstexample}.
Then for all $x \in \F_2^4$ we have 
$(\Omega \concat \Omega)(x)= \{y \in \F_2^4 : \dH(x,y) \le 2\}$.
\end{example}

\begin{remark}
The isomorphism class of 
$\Omega_1 \concat \Omega_2$ is not determined in general by the 
isomorphism classes of $\Omega_1$ and $\Omega_2$, as the following example shows. This crucial difference 
between channel product and channel concatenation 
 motivates the choice of
the fan-out sets language in this paper.
\end{remark}

\begin{example}
Let $\mX=\mY:=\{0,1,2\}$. Define the adversarial channels
$\Omega_1,\Omega_2 : \mX \dto \mY$ by
$\Omega_1(0):=\Omega_1(1):=\{0,1\}$, $\Omega_1(2):=\{2\}$,
$\Omega_2(0):=\{0\}$,  
$\Omega_2(1):=\Omega_2(2):=\{1,2\}$.
It is easy to see that $\Omega_1 \cong \Omega_2$. 
However,  
$\Omega_1 \concat \Omega_1 \not\cong \Omega_2 \concat \Omega_1$, 
$\CA(\Omega_1 \concat \Omega_1) \neq \CA(\Omega_2 \concat \Omega_1)$, and 
$\CAz(\Omega_1 \concat \Omega_1) \neq \CAz(\Omega_2 \concat \Omega_1)$.
\end{example}

The following result is the analogue of Proposition~\ref{boundprod} for channel concatenation.

\begin{proposition} \label{genconc} Let $\Omega_1, \Omega_2,\Omega_3$ be channels.
Then:
\begin{enumerate} \setlength\itemsep{0em}

\item \label{genconc1}  $(\Omega_1 \concat \Omega_2 )\concat \Omega_3 = \Omega_1 \concat (\Omega_2 \concat \Omega_3)$,
\item \label{genconc2} $\CA(\Omega_1 \concat \Omega_2) \le
 \min \{\CA(\Omega_1), \CA(\Omega_2)\}$,
\end{enumerate}
provided that all of the above concatenations are defined.
\end{proposition}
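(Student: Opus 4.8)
The plan is to verify both parts by unwinding the definition of concatenation in terms of fan-out sets. For part~\ref{genconc1}, associativity, I would compute $(\Omega_1 \concat \Omega_2)\concat \Omega_3$ applied to an input $x$: by definition this is $\bigcup_{z \in (\Omega_1 \concat \Omega_2)(x)} \Omega_3(z) = \bigcup_{z \in \bigcup_{y \in \Omega_1(x)} \Omega_2(y)} \Omega_3(z)$. Interchanging the two unions (which is just the fact that a union over a union of index sets equals the iterated union) gives $\bigcup_{y \in \Omega_1(x)} \bigcup_{z \in \Omega_2(y)} \Omega_3(z) = \bigcup_{y \in \Omega_1(x)} (\Omega_2 \concat \Omega_3)(y)$, which is precisely $(\Omega_1 \concat (\Omega_2 \concat \Omega_3))(x)$. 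One should also note that the hypothesis ``all concatenations are defined'' guarantees the alphabet containments ($\mY_1 \subseteq \mX_2$ and $\mY_2 \subseteq \mX_3$) needed for every intermediate expression to make sense, so there is nothing further to check on that front.

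For part~\ref{genconc2}, the bound $\CA(\Omega_1 \concat \Omega_2) \le \min\{\CA(\Omega_1),\CA(\Omega_2)\}$, the key observation is that a code good for the concatenation is automatically good for each factor, suitably interpreted. Concretely, let $\mC \subseteq \mX_1$ be a good code for $\Omega_1 \concat \Omega_2$ with $\log_2|\mC| = \CA(\Omega_1 \concat \Omega_2)$. For distinct $x,x' \in \mC$ we have $(\Omega_1\concat\Omega_2)(x) \cap (\Omega_1\concat\Omega_2)(x') = \emptyset$; since $\Omega_1(x) \subseteq \mX_2 = \mathrm{dom}(\Omega_2)$ and each $\Omega_2(y) \neq \emptyset$, the set $(\Omega_1\concat\Omega_2)(x) = \bigcup_{y\in\Omega_1(x)}\Omega_2(y)$ contains a point of $\Omega_2(y)$ for every $y \in \Omega_1(x)$; disjointness of the concatenated fan-out sets then forces $\Omega_1(x) \cap \Omega_1(x') = \emptyset$ (if some $y$ lay in both, its nonempty image $\Omega_2(y)$ would lie in both concatenated sets). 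Hence $\mC$ is good for $\Omega_1$, giving $\CA(\Omega_1) \ge \log_2|\mC| = \CA(\Omega_1\concat\Omega_2)$. For the bound against $\CA(\Omega_2)$, pick one representative $y_x \in \Omega_1(x)$ for each $x \in \mC$; the resulting multiset $\{y_x : x \in \mC\}$ has $\Omega_2(y_x) \subseteq (\Omega_1\concat\Omega_2)(x)$, so the sets $\Omega_2(y_x)$ are pairwise disjoint, which in particular forces the $y_x$ to be pairwise distinct and the set $\{y_x : x \in \mC\}$ to be a good code for $\Omega_2$ of size $|\mC|$; thus $\CA(\Omega_2) \ge \log_2|\mC|$ as well.

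I do not anticipate a genuine obstacle here — both statements are formal manipulations. The one point requiring a little care is the repeated use of the hypothesis $\Omega(x) \neq \emptyset$ for all channels (built into Definition~1): it is what makes ``a point of $\Omega_2(y)$ survives into the concatenation'' true, and hence what lets disjointness downstream be pulled back to disjointness upstream in the first half of part~\ref{genconc2}, and what forces the representatives $y_x$ to be distinct in the second half. I would state this explicitly so the argument is not merely suggestive. The rest is bookkeeping with unions and the monotonicity $\Omega_1 \le \Omega_2 \Rightarrow \CA(\Omega_1) \ge \CA(\Omega_2)$ recorded after Definition~\ref{deffiner}, which could alternatively be invoked to package the second inequality more slickly.
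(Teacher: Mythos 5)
Your proof is correct and follows the same route as the paper: for part~\ref{genconc1} both unwind the double union, and for part~\ref{genconc2} both show the optimal good code $\mC$ for $\Omega_1 \concat \Omega_2$ is itself good for $\Omega_1$ (via the nonemptiness of $\Omega_2(y)$ for $y$ in an intersection), and then transfer it to a good code $\{y_x\}$ for $\Omega_2$ of the same size. Your explicit remark that nonemptiness of fan-out sets is what forces the $y_x$ to be pairwise distinct is a small but welcome sharpening of a step the paper leaves implicit.
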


\begin{proof}
Property~\ref{genconc1} easily follows from the definition of concatenation.
To see  property~\ref{genconc2}, suppose that $\Omega_1: \mX_1 \dto \mY_1$ and $\Omega_2: \mX_2 \dto \mY_2$ are channels with $\mY_1 \subseteq \mX_2$.
Let $\mC \subseteq \mX_1$ be good for $\Omega_1 \concat \Omega_2$ with $\log_2|\mC|=\CA(\Omega_1 \concat \Omega_2)$.
We will show that $\mC$ is good for $\Omega_1$. Assume that there exist
$x,x' \in \mC$ with $\Omega_1(x) \cap \Omega_1(x') \neq \emptyset$, and let $y \in \Omega_1(x) \cap \Omega_1(x')$.
Then we have 
$\emptyset \neq \Omega_2(y) \subseteq (\Omega_1 \concat \Omega_2)(x) \cap (\Omega_1 \concat \Omega_2)(x)$,
a contradiction. Therefore $\mC$ is good for $\Omega_1$, and so
$\CA(\Omega_1) \ge \log_2|\mC| = \CA(\Omega_1 \concat \Omega_2)$.
On the other hand, for every $x \in \mC$ we can select $y_x \in \Omega_1(x) \subseteq \mY_1$. It is easy to check 
 that $\mC':=\{y_x : x \in \mC\} \subseteq \mX_2$ is a good code for $\Omega_2$ with the same cardinality as $\mC$.
Therefore $\CA(\Omega_2) \ge \CA(\Omega_1 \concat \Omega_2)$.
\end{proof}

Note that the associativity of channel concatenation  allows  expressions such 
as $\Omega_1 \concat \Omega_2 \concat \Omega_3$  to be written
without danger of ambiguity. 

The following result provides an identity showing that the product of the concatenation of channels is the concatenation of their products. This property and its corollary will be needed later for the analysis of certain classes of network channels
 (see Lemma~\ref{portinglemma1}). The proof can be found in Appendix~\ref{Appe}, and
 the result is illustrated in Figure~\ref{diagramblocks}.
 
\begin{proposition} \label{concprod}
Let $n,m  \in \N_{\ge 1}$, and let 
$\Omega_{k,i}$ be channels, for $1 \le k \le n$ and $1 \le i \le m$.
Then
$$\prod_{k=1}^n \left( \Omega_{k,1} \concat \cdots \concat \Omega_{k,m} \right) = \left( \prod_{k=1}^n \Omega_{k,1}\right)
\concat \cdots \concat   \left( \prod_{k=1}^n \Omega_{k,m}\right),$$
provided that all the above concatenations are defined.
\end{proposition}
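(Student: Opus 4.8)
The plan is to reduce the general statement to the case $m=2$ and then induct on $m$, using the associativity of concatenation (part~\ref{genconc1} of Proposition~\ref{genconc}). Throughout, I would write $\Omega_{k,i} : \mX_{k,i} \dto \mY_{k,i}$, so that the hypothesis ``all concatenations are defined'' unpacks to $\mY_{k,i} \subseteq \mX_{k,i+1}$ for $1 \le i \le m-1$. This immediately gives $\prod_{k=1}^n \mY_{k,i} \subseteq \prod_{k=1}^n \mX_{k,i+1}$, which is exactly the condition needed for every concatenation appearing on the right-hand side (of products of channels) to be legitimate; this bookkeeping is worth making explicit up front.

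For the base case $m=2$, I would fix an arbitrary input $x=(x_1,\dots,x_n)\in\prod_{k=1}^n\mX_{k,1}$ and expand both fan-out sets straight from the definitions. On the left, $\prod_{k=1}^n(\Omega_{k,1}\concat\Omega_{k,2})(x)=\prod_{k=1}^n\bigcup_{y\in\Omega_{k,1}(x_k)}\Omega_{k,2}(y)$. On the right, using the definition of the product channel and then of concatenation, $\bigl((\prod_{k=1}^n\Omega_{k,1})\concat(\prod_{k=1}^n\Omega_{k,2})\bigr)(x)=\bigcup_{(y_1,\dots,y_n)\in\prod_{k=1}^n\Omega_{k,1}(x_k)}\prod_{k=1}^n\Omega_{k,2}(y_k)$. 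These two sets coincide by the elementary distributivity of the Cartesian product over unions: a tuple $(z_1,\dots,z_n)$ belongs to the first set iff for each $k$ there is some $y_k\in\Omega_{k,1}(x_k)$ with $z_k\in\Omega_{k,2}(y_k)$, and assembling these choices into a single tuple $(y_1,\dots,y_n)$ shows this is equivalent to membership in the second set.

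For the inductive step, assume the identity holds for $m-1$ factors, and put $\Gamma_k:=\Omega_{k,1}\concat\cdots\concat\Omega_{k,m-1}$. Associativity of concatenation gives $\Omega_{k,1}\concat\cdots\concat\Omega_{k,m}=\Gamma_k\concat\Omega_{k,m}$, so the already-established case $m=2$ (applied to the channels $\Gamma_k$ and $\Omega_{k,m}$) yields $\prod_{k=1}^n(\Omega_{k,1}\concat\cdots\concat\Omega_{k,m})=(\prod_{k=1}^n\Gamma_k)\concat(\prod_{k=1}^n\Omega_{k,m})$. The induction hypothesis rewrites $\prod_{k=1}^n\Gamma_k$ as $(\prod_{k=1}^n\Omega_{k,1})\concat\cdots\concat(\prod_{k=1}^n\Omega_{k,m-1})$, and one final appeal to associativity produces the claimed formula.

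I do not anticipate a genuine obstacle: the only mathematical content is the single distributive identity used in the base case, and the only step that calls for a little care is tracking the alphabets $\mX_{k,i},\mY_{k,i}$ so that all concatenations formed along the way — in particular those on the right-hand side and the intermediate one $\Gamma_k\concat\Omega_{k,m}$ — are well defined.
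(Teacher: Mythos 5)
Your argument is correct. The paper's proof handles general $m$ directly: for a fixed input $(x_{1,1},\dots,x_{n,1})$ it unfolds the definitions of concatenation and product, explicitly constructing intermediate elements $x_{k,2},\dots,x_{k,m}$ for each $k$ to establish the inclusion $\subseteq$, and notes the reverse is symmetric. You instead isolate the case $m=2$ (which is precisely the distributivity of Cartesian product over indexed unions) and then induct on $m$, using associativity of $\concat$ (part~\ref{genconc1} of Proposition~\ref{genconc}) to peel off the last factor and the induction hypothesis to rewrite $\prod_k\Gamma_k$. The underlying set-theoretic content is the same distributive identity in both proofs, but your structuring makes that one identity do all the work in isolation, at the cost of an induction that the paper avoids by building the full chain $x_{k,1}\to x_{k,2}\to\cdots\to x_{k,m}$ in one pass. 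Your bookkeeping of alphabets is also done correctly: $\mY_{k,i}\subseteq\mX_{k,i+1}$ gives $\prod_k\mY_{k,i}\subseteq\prod_k\mX_{k,i+1}$ so the right-hand concatenations are defined, and $\Gamma_k$ has output alphabet $\mY_{k,m-1}\subseteq\mX_{k,m}$ so $\Gamma_k\concat\Omega_{k,m}$ is defined. Either proof would be acceptable; the paper's is marginally shorter, yours is marginally more modular.
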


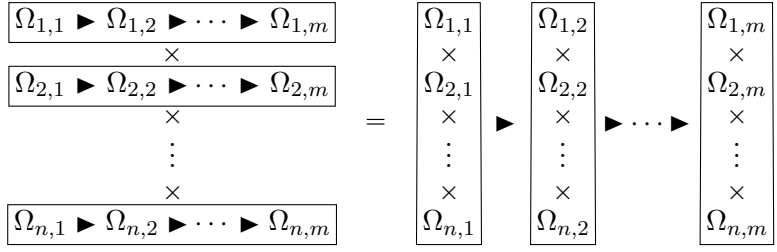
\begin{figure}[htbp]
\centering
\begin{tikzpicture}[
channel/.style={inner sep=2pt},
relation/.style={inner sep=1pt}
]
\matrix[row sep=2ex,column sep=0ex]
{
\node[channel] (11) {$\Omega_{1,1}$}; &
\node[relation]      {$\concat$}; &
\node[channel] (12) {$\Omega_{1,2}$}; &
\node[relation]      {$\concat$}; &
\node[relation]      {$\cdots$}; &
\node[relation]      {$\concat$}; &
\node[channel] (1M) {$\Omega_{1,m}$}; &\\
\node[channel] (21) {$\Omega_{2,1}$}; &
\node[relation]      {$\concat$}; &
\node[channel] (22) {$\Omega_{2,2}$}; &
\node[relation]      {$\concat$}; &
\node[relation]      {$\cdots$}; &
\node[relation]      {$\concat$}; &
\node[channel] (2M) {$\Omega_{2,m}$}; &\\
 & & &  & & & & \\
 & & &  & & & & \\
 & & &  & & & & \\
\node[channel] (N1) {$\Omega_{n,1}$}; &
\node[relation]      {$\concat$}; &
\node[channel] (N2) {$\Omega_{n,2}$}; &
\node[relation]      {$\concat$}; &
\node[relation]      {$\cdots$}; &
\node[relation]      {$\concat$}; &
\node[channel] (NM) {$\Omega_{n,m}$}; &\\
};
\draw (11.north west) -- (1M.north east) -- (1M.south east) -- (11.south west) -- cycle;
\draw (21.north west) -- (2M.north east) -- (2M.south east) -- (21.south west) -- cycle;
\draw (N1.north west) -- (NM.north east) -- (NM.south east) -- (N1.south west) -- cycle;
\coordinate (A1) at (11.south west);
\coordinate (A2) at (2M.north east);
\node at (barycentric cs:A1=0.5,A2=0.5) {$\times$};
\coordinate (B1) at (21.south west);
\coordinate (B2) at (NM.north east);
\node[align=center] at (barycentric cs:B1=0.5,B2=0.5) {$\vertprod$};

\matrix[xshift=15em,row sep=2ex,column sep=1.8ex]
{
\node[channel] (R11) {$\Omega_{1,1}$}; &
\node[relation]      {}; &
\node[channel] (R12) {$\Omega_{1,2}$}; &
\node[relation]      {}; &
\node[relation]      {}; &
\node[relation]      {}; &
\node[channel] (R1M) {$\Omega_{1,m}$}; &\\
\node[channel] (R21) {$\Omega_{2,1}$}; &
\node[relation]      {}; &
\node[channel] (R22) {$\Omega_{2,2}$}; &
\node[relation]      {}; &
\node[relation]      {}; &
\node[relation]      {}; &
\node[channel] (R2M) {$\Omega_{2,m}$}; &\\
 & & &  & & & & \\
 & & &  & & & & \\
 & & &  & & & & \\
\node[channel] (RN1) {$\Omega_{n,1}$}; &
\node[relation]      {}; &
\node[channel] (RN2) {$\Omega_{n,2}$}; &
\node[relation]      {}; &
\node[relation]      {}; &
\node[relation]      {}; &
\node[channel] (RNM) {$\Omega_{n,m}$}; &\\
};
\draw (R11.north west) -- (R11.north east) -- (RN1.south east) -- (RN1.south west) -- cycle;
\draw (R12.north west) -- (R12.north east) -- (RN2.south east) -- (RN2.south west) -- cycle;
\draw (R1M.north west) -- (R1M.north east) -- (RNM.south east) -- (RNM.south west) -- cycle;
\node at (barycentric cs:R11=0.5,R21=0.5) {$\times$};
\node[align=center] at (barycentric cs:R21=0.5,RN1=0.5) {$\vertprod$};
\node at (barycentric cs:R12=0.5,R22=0.5) {$\times$};
\node[align=center] at (barycentric cs:R22=0.5,RN2=0.5) {$\vertprod$};
\node at (barycentric cs:R1M=0.5,R2M=0.5) {$\times$};
\node[align=center] at (barycentric cs:R2M=0.5,RNM=0.5) {$\vertprod$};
\coordinate (C1) at (R11.north east);
\coordinate (C2) at (RN2.south west);
\node at (barycentric cs:C1=0.5,C2=0.5) {$\concat$};
\coordinate (D1) at (R12.north east);
\coordinate (D2) at (RNM.south west);
\node at (barycentric cs:D1=0.5,D2=0.5) {$\concat\cdots\concat$};
\coordinate (E1) at (1M.north east);
\coordinate (E2) at (RN1.south west);
\node at (barycentric cs:E1=0.5,E2=0.5) {$=$};
\end{tikzpicture}
\caption{Illustrating Proposition~\ref{concprod}: a product
of concatenations is a concatenation of products.}
\label{diagramblocks}
\end{figure}

\begin{corollary} \label{propprel5}
Let $\Omega$ be a channel, and let $n \ge 1$ be an integer. Assume that 
$\Omega_1,...,\Omega_n$ and $\Omega'_1,...,\Omega'_n$ are channels for which
the concatenation $\Omega_k \concat \Omega \concat \Omega'_k$ is defined for all
$1 \le k \le n$. Then 
\begin{equation*}
\prod_{k=1}^n (\Omega_k \concat \Omega \concat \Omega'_k) = \left(\prod_{k=1}^n \Omega_k\right) \concat \Omega^n
\concat \left(\prod_{k=1}^n \Omega'_k\right).
\end{equation*}
\end{corollary}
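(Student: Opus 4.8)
The plan is to obtain this as an immediate specialization of Proposition~\ref{concprod}. Take $m = 3$ in that proposition, and for each $k$ with $1 \le k \le n$ set $\Omega_{k,1} := \Omega_k$, $\Omega_{k,2} := \Omega$ (the same channel for every $k$), and $\Omega_{k,3} := \Omega'_k$. The hypothesis of the corollary says precisely that each triple concatenation $\Omega_{k,1} \concat \Omega_{k,2} \concat \Omega_{k,3} = \Omega_k \concat \Omega \concat \Omega'_k$ is defined, so the left-hand sides of the two statements match.

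Next I would check that the concatenations appearing on the right-hand side of Proposition~\ref{concprod} are legitimate, so that the proposition genuinely applies in this instance. Writing $\Omega_k : \mX_k \dto \mY_k$, $\Omega : \mX \dto \mY$ and $\Omega'_k : \mX'_k \dto \mY'_k$, the assumption that $\Omega_k \concat \Omega \concat \Omega'_k$ is defined means $\mY_k \subseteq \mX$ and $\mY \subseteq \mX'_k$ for each $k$. Taking products, $\prod_k \mY_k \subseteq \mX^n$ and $\mY^n \subseteq \prod_k \mX'_k$, so the channel $\left(\prod_k \Omega_k\right) \concat \Omega^n \concat \left(\prod_k \Omega'_k\right)$ is indeed well defined. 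Proposition~\ref{concprod} then yields
\begin{equation*}
\prod_{k=1}^n (\Omega_k \concat \Omega \concat \Omega'_k) = \left(\prod_{k=1}^n \Omega_k\right) \concat \left(\prod_{k=1}^n \Omega\right) \concat \left(\prod_{k=1}^n \Omega'_k\right).
\end{equation*}

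Finally I would invoke the definition of the $n$-th power of a channel: $\prod_{k=1}^n \Omega = \Omega^n$. Substituting this into the middle factor above gives exactly the claimed identity. There is no real obstacle here---all the content sits in Proposition~\ref{concprod}---so the only points deserving a word of care are the bookkeeping of which alphabet inclusions guarantee that every concatenation in sight is defined, and the observation that the constant choice $\Omega_{k,2} = \Omega$ is what collapses the middle product into $\Omega^n$.
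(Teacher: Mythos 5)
Your proof is correct and is exactly the intended argument: the paper states this as a corollary of Proposition~\ref{concprod} without spelling out the details, and your specialization with $m=3$, $\Omega_{k,2}=\Omega$ for all $k$, followed by the identification $\prod_{k=1}^n \Omega = \Omega^n$, is precisely that. The bookkeeping you include on alphabet inclusions is a sensible sanity check and does not diverge from the paper's route.
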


We can now establish the 
zero-error analogue of part~\ref{genconc2} of 
Proposition~\ref{genconc}.

\begin{proposition} \label{DPI2}
Let $\Omega_1: \mX_1 \dto \mY_1$ and $\Omega_2: \mX_2 \dto \mY_2$ be channels, with $\mY_1 \subseteq \mX_2$.
 We have 
$\CAz(\Omega_1 \concat \Omega_2) \le \min \{\CAz(\Omega_1), \CAz(\Omega_2)\}$.
\end{proposition}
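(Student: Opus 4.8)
The plan is to reduce this to the one-shot statement (part~\ref{genconc2} of Proposition~\ref{genconc}) by powering up both channels and using the identity from Corollary~\ref{propprel5}. The first step is to observe that, taking $m=2$ and $\Omega_{k,1}=\Omega_1$, $\Omega_{k,2}=\Omega_2$ for every $1\le k\le n$ in Proposition~\ref{concprod} (or equivalently applying Corollary~\ref{propprel5} with $\Omega_k=\Omega_1$, $\Omega=\Omega_2$, and trivial $\Omega'_k$), one gets
\begin{equation*}
(\Omega_1 \concat \Omega_2)^n = \Omega_1^n \concat \Omega_2^n
\end{equation*}
for every $n\in\N_{\ge 1}$. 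Here I should check that the concatenation on the right is well defined: since $\mY_1\subseteq\mX_2$, the output alphabet $\mY_1^n$ of $\Omega_1^n$ is contained in the input alphabet $\mX_2^n$ of $\Omega_2^n$, so everything is legitimate.

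Next I would apply part~\ref{genconc2} of Proposition~\ref{genconc} to the channels $\Omega_1^n$ and $\Omega_2^n$, obtaining
\begin{equation*}
\CA\big((\Omega_1 \concat \Omega_2)^n\big) = \CA(\Omega_1^n \concat \Omega_2^n) \le \min\{\CA(\Omega_1^n), \CA(\Omega_2^n)\}.
\end{equation*}
Dividing by $n$ and taking the supremum over $n\ge 1$ then yields
\begin{equation*}
\CAz(\Omega_1 \concat \Omega_2) \le \sup_{n\ge 1} \min\left\{\frac{\CA(\Omega_1^n)}{n}, \frac{\CA(\Omega_2^n)}{n}\right\}.
\end{equation*}

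Finally, I would close the argument with the elementary fact that $\sup_n \min\{a_n,b_n\} \le \min\{\sup_n a_n, \sup_n b_n\}$: for each fixed $n$ one has $\min\{a_n,b_n\}\le a_n \le \sup_m a_m$ and likewise $\min\{a_n,b_n\}\le \sup_m b_m$, hence $\min\{a_n,b_n\}\le\min\{\sup_m a_m,\sup_m b_m\}$, and taking the supremum over $n$ preserves the inequality. Applied to $a_n=\CA(\Omega_1^n)/n$ and $b_n=\CA(\Omega_2^n)/n$, this gives exactly $\CAz(\Omega_1\concat\Omega_2)\le\min\{\CAz(\Omega_1),\CAz(\Omega_2)\}$. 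I do not foresee a genuine obstacle here; the only point that needs care is invoking Proposition~\ref{concprod}/Corollary~\ref{propprel5} correctly to get the clean identity $(\Omega_1\concat\Omega_2)^n=\Omega_1^n\concat\Omega_2^n$, after which the rest is bookkeeping with suprema.
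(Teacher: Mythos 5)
Your proposal is correct and follows essentially the same route as the paper: obtain $(\Omega_1\concat\Omega_2)^n=\Omega_1^n\concat\Omega_2^n$ from Proposition~\ref{concprod}, apply part~\ref{genconc2} of Proposition~\ref{genconc}, divide by $n$, and pass to the supremum using $\sup_n\min\{a_n,b_n\}\le\min\{\sup_n a_n,\sup_n b_n\}$. The only difference is cosmetic: you spell out the specific instantiation of Proposition~\ref{concprod} and the well-definedness check, which the paper leaves implicit.
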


\begin{proof}
Fix any integer $n\ge 1$. Combining Proposition~\ref{genconc} with Proposition~\ref{concprod} one
obtains 
$$\CA((\Omega_1 \concat \Omega_2)^n) =
 \CA(\Omega_1^n \concat \Omega_2^n) \le   \min \{\CA(\Omega_1^n), \CA(\Omega_2^n)\} 
= n \cdot \min \left\{\CA(\Omega_1^n)/n, \CA(\Omega_2^n)/n\right\}.$$
Therefore
\begin{eqnarray*}
\sup \left\{ \frac{\CA((\Omega_1 \concat \Omega_2)^n)}{n} : n \in \N_{\ge 1}\right\}
&\le&  \sup \left\{ \min \left\{\frac{\CA(\Omega_1^n)}{n}, \frac{\CA(\Omega_2^n)}{n}\right\} 
: n\in \N_{\ge 1}\right\} \\
&\le& \min \left\{ \sup \left\{\frac{\CA(\Omega_1^n)}{n} : n \in \N_{\ge 1} \right\}, 
\sup\left\{ \frac{\CA(\Omega_2^n)}{n} : n \in \N_{\ge 1}\right\} \right\}\\
&=& \min\{\CAz(\Omega_1), \CAz(\Omega_2)\},
\end{eqnarray*} 
as claimed.
\end{proof}

\subsection{Union of Channels} \label{subunion}
We now define
the union of a family of channels having the same input and output alphabets. 
This channel operation will be used later in 
Sections~\ref{sechammingtype} and 
\ref{secportinglemma}
to study compound adversarial models.

\begin{definition} \label{defunionc}
Let $\{\Omega_i\}_{i \in I}$ be a family of channels, where $I$ is a finite index set and $\Omega_i: \mX \dto \mY$
for all $i \in I$. The \textbf{union} of the family $\{\Omega_i\}_{i \in I}$ is the channel 
denoted as and defined by
$$\bigcup_{i \in I} \Omega_i : \mX \dto \mY, \quad \left( \bigcup_{i \in I} \Omega_i \right)(x):=
\bigcup_{i \in I} \Omega_i(x) \text{ for all } x \in \mX.$$
\end{definition}

One can check that 
every channel can be written as the union of deterministic channels.
Moreover, union and concatenation relate to each other as follows. 
The proof can be found in Appendix~\ref{Appe}.

\begin{proposition}\label{proprunionfamily}
Let $\{\Omega_i\}_{i \in I}$ be as in Definition~\ref{defunionc}.
 Let $\Omega_1$, $\Omega_2$ be channels for which the concatenation
$\Omega_1 \concat \Omega_i \concat \Omega_2$ is defined for all $i \in I$ 
(we do not require $\{1,2\} \cap I = \emptyset$). Then
$$\bigcup_{i \in I}\,(\Omega_1 \concat \Omega_i \concat \Omega_2) = \Omega_1 \concat 
\left( \bigcup_{i \in I} \Omega_i \right) \concat \Omega_2.$$
\end{proposition}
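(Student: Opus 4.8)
The plan is to prove the identity by showing that the two channels agree on every input $x \in \mX$, i.e.\ that the two fan-out sets coincide as subsets of $\mY$. Since both channels have the same input alphabet $\mX$ and the same output alphabet $\mY$ (the output of $\Omega_2$), this is all that is needed. Fix an arbitrary $x \in \mX$; I would compute each side as a subset of $\mY$ by unwinding the definitions of union and concatenation.

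First I would expand the left-hand side. By definition of the union channel, $\left(\bigcup_{i \in I}(\Omega_1 \concat \Omega_i \concat \Omega_2)\right)(x) = \bigcup_{i \in I}(\Omega_1 \concat \Omega_i \concat \Omega_2)(x)$. Using associativity of concatenation (Proposition~\ref{genconc}, part~\ref{genconc1}) I may group this as $(\Omega_1 \concat (\Omega_i \concat \Omega_2))(x)$, which by definition of concatenation equals $\bigcup_{y \in \Omega_1(x)} (\Omega_i \concat \Omega_2)(y) = \bigcup_{y \in \Omega_1(x)} \bigcup_{z \in \Omega_i(y)} \Omega_2(z)$. Hence the left-hand side is
\begin{equation*}
\bigcup_{i \in I} \ \bigcup_{y \in \Omega_1(x)} \ \bigcup_{z \in \Omega_i(y)} \Omega_2(z).
\end{equation*}
Next I would expand the right-hand side. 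Write $\Omega := \bigcup_{i \in I}\Omega_i$, so $\Omega(y) = \bigcup_{i \in I}\Omega_i(y)$ for each $y$. Then $(\Omega_1 \concat \Omega \concat \Omega_2)(x) = \bigcup_{y \in \Omega_1(x)} \bigcup_{z \in \Omega(y)} \Omega_2(z) = \bigcup_{y \in \Omega_1(x)} \bigcup_{z \in \bigcup_{i}\Omega_i(y)} \Omega_2(z)$.

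The remaining step is purely set-theoretic: in the right-hand expression the index set of the innermost union is $\bigcup_{i \in I}\Omega_i(y)$, and a union indexed over a union of sets decomposes as the union over $i$ of the unions over $\Omega_i(y)$, i.e.\ $\bigcup_{z \in \bigcup_i \Omega_i(y)}\Omega_2(z) = \bigcup_{i \in I}\bigcup_{z \in \Omega_i(y)}\Omega_2(z)$. Substituting this and swapping the (commutative, associative) order of the unions over $i$ and over $y$ shows the right-hand side equals the displayed left-hand side above. One minor point to check along the way is that the concatenation $\Omega_1 \concat \Omega \concat \Omega_2$ is indeed well-defined: this follows because $\Omega$ has the same input and output alphabets as each $\Omega_i$, for which $\Omega_1 \concat \Omega_i \concat \Omega_2$ is assumed defined. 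I do not anticipate a real obstacle here; the only thing requiring a little care is keeping the nested unions and their index sets straight, and invoking associativity of $\concat$ to justify the regrouping.
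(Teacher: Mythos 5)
Your proof is correct and takes essentially the same approach as the paper's: both arguments simply unwind the definitions of union and concatenation on a fixed input and compare the resulting fan-out sets. The only cosmetic difference is that you write out the full nested-union expressions and observe the set-theoretic identity directly, whereas the paper performs an element-chase for one inclusion and waves at the other as ``similar''; your version is, if anything, slightly more self-contained.
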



\section{Hamming-Type Channels}
\label{sechammingtype}

In this section we study channels whose input alphabet 
is of the form $\mA^s$, where 
$\mA$ is a finite set  with $|\mA| \ge 2$ (the \textbf{alphabet}), and $s \in \N_{\ge 1}$.
We call these channels ``Hamming-type channels''.
In the sequel we work with a fixed alphabet $\mA$ and a fixed integer $s$. If $u \ge 1$ is any integer,  we write $[u]$ for the set $\{1,...,u\}$.
We also define the \textbf{extended alphabet} $\hat{\mA}:=\mA \cup \{\star\}$, where $\star \notin \mA$ is a symbol denoting an erasure.

\begin{notation} \label{a-capacity}
It is convenient for Hamming-type channels to express the capacity as a logarithm in base 
$|\mA|$, rather than in base 2. In this paper, if $\Omega: \mA^s \dto \mY$ is a channel, where $\mY$ is any output alphabet, we abuse notation and write 
$\CA(\Omega)$ for $\CA(\Omega) \cdot \log_2|\mA|$, and 
$\CAz(\Omega)$ for $\CAz(\Omega) \cdot \log_2|\mA|$.

The components
of a vector $x \in \mA^s$ are denoted by $x=(x_1,...,x_s)$.
If $n \ge 1$ is an integer, we denote by
$(x^1,...,x^n)$ a generic element of $(\mA^s)^n$.
\end{notation}

\subsection{Error-and-Erasure Adversaries}

We start by recalling some well known concepts from classical Coding Theory.

\begin{notation} \label{notazbeta}
If $u \ge 1$ is an integer, then the \textbf{minimum} (\textbf{Hamming}) \textbf{distance} of a set 
$\mC \subseteq \mA^u$ with cardinality 
$|\mC| \ge 2$ is $\dH(\mC):= \min\{\dH(x,x') : x,x' \in \mC, \, x \neq x'\}$.

For all $1 \le d \le u$, we set 
$\beta(\mA,u,d):=0$ if there is no $\mC \subseteq \mA^u$ with $|\mC| \ge 2$ and $\dH(\mC) \ge d$, and $\beta(\mA,u,d):= \max \{ \log_{|\mA|}|\mC| : \mC \subseteq  \mA^u, \, |\mC| \ge 2, \, \dH(\mC) \ge d \}$ otherwise.
\end{notation}

We will also need the following definitions.

\begin{definition} \label{defomegastar}
Let $U \subseteq [s]$ be a set. The $U$-\textbf{discrepancy} between vectors $y \in \hat{\mA}^s$ and  $x \in \mA^s$ is the integer $\delta(y,x;U):=|\{i \in U  : y_i \in \mA \mbox{ and }  y_i \neq x_i  \}|$. The $U$-\textbf{erasure weight} of a vector $y \in \hat{\mA}^s$ is
$\omega_\star(y;U):=|\{i \in U : y_i = \star\}|$.
\end{definition}

Evidently, if $x \in \mA^s$ is sent and $y \in \hat{\mA}^s$ is received, the $U$-discrepancy measures the number of errors that
occurred in positions indexed by $U$, while the $U$-erasure weight
measures the number of erasures that occurred in those positions.

We now describe an adversary having access to a certain set of coordinates $U \subseteq [s]$,
and  with limited error and erasure power.

\begin{definition}\label{elementarychannels}
Let $U \subseteq [s]$ and $t, e \ge 0$ be integers.
The channel $\HP{t}{e}{U} : \mA^s \dto \hat{\mA}^s$ is defined by
$$\HP{t}{e}{U} (x) := \{ y \in \hat{\mA}^s  :   y_i=x_i \mbox{ for } i \notin U, \,\delta(y,x;U) \le t, \,\omega_\star(y;U) \le e\} \text{ for all } x \in \mA^s.$$
\end{definition}

Note that $\HP{t}{e}{U}$ models the scenario where an adversary can erase up to $e$ 
components with indices from the set $U$, and change up to $t$ such components into different symbols from $\mA$.

It is well known from classical Coding Theory that 
 $\CA(\HP{t}{e}{[s]}) = \beta(\mA, s, 2t+e+1)$. 
 We now generalize this upper bound to the case where the adversary can only operate on a 
 subset $U \subseteq [s]$.
 
 \begin{proposition}\label{boundext}
 Let $U \subseteq [s]$ be a set of cardinality $u:=|U|$, and let $t, e \ge 0$ be integers. Then
 $$\CA(\HP{t}{e}{U}) = s-u+\beta(\mA, u, 2t+e+1).$$
 \end{proposition}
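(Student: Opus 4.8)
The plan is to decompose the channel $\HP{t}{e}{U}$ as a product of two simpler channels: one acting on the coordinates in $U$, and one acting trivially (as an identity) on the coordinates outside $U$. Concretely, after relabeling coordinates we may assume $U = \{1,\dots,u\}$, so that $\mA^s = \mA^u \times \mA^{s-u}$; then $\HP{t}{e}{U} \cong \HP{t}{e}{[u]} \times \textnormal{Id}$, where the first factor operates on $\mA^u$ and the second is the identity channel on $\mA^{s-u}$ (in the sense of Example~\ref{exid}, viewed as a channel $\mA^{s-u}\dto\hat{\mA}^{s-u}$). This isomorphism is immediate from Definition~\ref{elementarychannels}, since the constraint ``$y_i = x_i$ for $i\notin U$'' is exactly what makes the out-of-$U$ part an identity channel, and the constraints $\delta(y,x;U)\le t$, $\omega_\star(y;U)\le e$ only involve the $U$-coordinates.

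The next step is to compute $\CA$ of each factor. For the identity factor, $\CA(\textnormal{Id}) = \log_{|\mA|}|\mA^{s-u}| = s-u$ (in the base-$|\mA|$ normalization of Notation~\ref{a-capacity}), and this is achieved by the whole alphabet $\mA^{s-u}$. For the factor $\HP{t}{e}{[u]}$ on $\mA^u$, I invoke the classical coding-theoretic fact quoted just before the proposition, namely $\CA(\HP{t}{e}{[u]}) = \beta(\mA,u,2t+e+1)$; the argument is the standard one that a code $\mC\subseteq\mA^u$ has pairwise disjoint fan-out sets under this channel if and only if $\dH(\mC)\ge 2t+e+1$ (a received word $y$ lies in both $\HP{t}{e}{[u]}(x)$ and $\HP{t}{e}{[u]}(x')$ iff $x,x'$ agree off the error/erasure positions and differ in at most $2t+e$ positions total, so disjointness for all distinct pairs is equivalent to minimum distance exceeding $2t+e$).

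Finally I combine the two factors. The lower bound $\CA(\HP{t}{e}{U}) \ge (s-u) + \beta(\mA,u,2t+e+1)$ is exactly part~\ref{boundprod2} of Proposition~\ref{boundprod} applied to the product decomposition (together with Proposition~\ref{proprisoproduct}(3) to pass through the isomorphism). For the matching upper bound, the key observation is that a good code $\mC$ for $\HP{t}{e}{[u]}\times\textnormal{Id}$ projects to a good code for $\HP{t}{e}{[u]}$ on the $\mA^u$-part: if two codewords of $\mC$ had $U$-parts $x^{(1)},x^{(2)}$ whose fan-out sets under $\HP{t}{e}{[u]}$ intersect, then — since the second factor is the identity — for the pair to remain good in the product their $(s-u)$-parts must differ, but then one bounds $|\mC|$ by (number of distinct $U$-parts) $\times |\mA^{s-u}|$, where the number of distinct $U$-parts that can occur is controlled by $\beta(\mA,u,2t+e+1)$. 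Making this counting precise — grouping codewords by their $U$-coordinate value and checking the within-group and across-group constraints — is the only real bookkeeping in the proof; it is elementary but is the step where one must be careful, and it yields $|\mC| \le |\mA|^{\beta(\mA,u,2t+e+1)}\cdot|\mA|^{s-u}$, i.e. $\CA(\HP{t}{e}{U}) \le (s-u) + \beta(\mA,u,2t+e+1)$. Together with the lower bound this gives equality.
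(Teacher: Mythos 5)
Your overall strategy---factor $\HP{t}{e}{U}$ as a product $\HP{t}{e}{[u]} \times \textnormal{Id}$ and analyze the pieces---is essentially the paper's approach (the paper phrases it directly via the projection $\pi$ onto the coordinates outside $U$, which is exactly the identity factor). Your lower bound via Proposition~\ref{boundprod} is fine, and your observation that two codewords whose $U$-parts have intersecting $\HP{t}{e}{[u]}$-fan-outs must differ in their $(s-u)$-parts is correct and useful. However, the step that follows it is wrong: you claim to bound $|\mC|$ by \emph{(number of distinct $U$-parts)} $\times\, |\mA^{s-u}|$ and that the number of distinct $U$-parts appearing in $\mC$ ``is controlled by $\beta(\mA,u,2t+e+1)$.'' That claim is false, and the grouping you propose (by $U$-coordinate value) does not yield the bound. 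For a counterexample take $s=2$, $U=\{1\}$, $t=1$, $e=0$, $\mA=\F_2$: here $\beta(\F_2,1,3)=0$, yet $\mC=\{(0,0),(1,1)\}$ is a good code for $\HP{1}{0}{\{1\}}$ containing \emph{two} distinct $U$-parts. In particular the projection of a good code for $\HP{t}{e}{U}$ onto $U$ need not be a good code for $\HP{t}{e}{[u]}$, contrary to what you state as the ``key observation.''

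The fix is to group by the \emph{other} projection, as the paper does: fix a value $z\in\mA^{s-u}$ on the coordinates outside $U$, and consider the fiber of codewords in $\mC$ whose non-$U$-part equals $z$. Within such a fiber the $(s-u)$-parts coincide, so by your correct earlier observation the $U$-parts must have pairwise disjoint $\HP{t}{e}{[u]}$-fan-outs, i.e.\ the fiber has size at most $|\mA|^{\beta(\mA,u,2t+e+1)}$. There are at most $|\mA|^{s-u}$ fibers, so $|\mC|\le |\mA|^{s-u}\cdot|\mA|^{\beta(\mA,u,2t+e+1)}$, which is the bound you want. Once you swap the grouping in this way, your proof matches the paper's.
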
 
 \begin{proof}
 The result is immediate if $U= \emptyset$ or $U=[s]$. Assume 
 $0< |U| < s$. Denote by $\pi:\mA^s \to \mA^{s-u}$ the projection on the coordinates 
 outside $U$, and let $\mC \subseteq \mA^s$ be a capacity-achieving good code for
 $\HP{t}{e}{U}$. By restricting the domain and the codomain of $\pi$,
 we obtain a surjective map
 $\overline{\pi}: \mC \to \pi(\mC) \subseteq \mA^{s-u}$. It is easy to see that
 for all $z \in \pi(\mC)$ we have $\log_{|\mA|}|\overline{\pi}^{-1}(z)| \le \beta(\mA,u,2t+e+1)$.
 We can write $\mC$ as a disjoint union  
 $\mC= \bigsqcup_{z \in \pi(\mC)} \overline{\pi}^{-1}(z),$
 from which we see that  $|\mC| \le |\pi(\mC)| \cdot |\mA|^\beta$, where $\beta:=\beta(\mA,u,2t+e+1)$. Therefore
 \begin{equation}
 \log_{|\mA|}|\mC| \le \log_{|\mA|}|\pi(\mC)| + \beta(\mA,u,2t+e) \le
 s-u+\beta(\mA,u,2t+e). \label{uppbnd}
 \end{equation} Finally, the upper bound in (\ref{uppbnd}) is achieved by any code of the form
 $\mC=\mD \times \mA^{s-u}$, where $\mD \subseteq \mA^u$ is a code with $|\mD|=1$ if $u \le 2t+e$, 
 and a code with $\log_{|\mA|}|\mD|=\beta(\mA,s,u)$ and minimum distance at least $2t+e+1$ otherwise.
 Note that not all codes $\mC$ that achieve (\ref{uppbnd}) with equality are of this form.  
 \end{proof}
 
  A more general scenario consists of multiple adversaries acting on pairwise disjoint 
  sets of coordinates with different powers. Such a collection of adversaries can be 
  described as follows.

 \begin{definition} \label{defhat}
 Let $L \ge 1$ and $U_1,...,U_L \subseteq [s]$ be pairwise disjoint subsets. Let $t_1,...,t_L$
and $e_1,...,e_L$  be non-negative integers.
Set  $\bm{U}:=(U_1,...,U_L)$, $\bm{t}:=(t_1,...,t_L)$ and $\bm{e}:=(e_1,...,e_L)$. The channel
$\HB{t}{e}{U}: \mA^s \dto \hat{\mA}^s$ is defined, for all $x \in \mA^s$, by
$$\HB{t}{e}{U}(x) := \left\{ y \in \hat{\mA}^s :   y_i=x_i \mbox{ for } i 
\notin \bigcup_{\ell=1}^L U_\ell, \, \delta(y,x;U_\ell) \le t_\ell, \,
\omega_\star(y;U_\ell) \le e_\ell \mbox{ for all $1 \le \ell \le L$}\right\}.$$
 \end{definition}
 
 Definition~\ref{defhat} models a channel with $L$ adversaries associated with 
 pairwise disjoint sets of coordinates $U_1,...,U_L \subseteq [s]$. 
 The $\ell$-th adversary can erase up to $e_\ell$ components of a 
 vector $x \in \mA^s$, and change up to $t_\ell$ such components of $x$ 
 into different symbols from $\mA$.

\begin{remark}
The channel $\HB{t}{e}{U}$ is isomorphic, in the sense of Definition~\ref{defisoch}, to 
a product of channels. To see this, assume without loss of generality that 
$U_\ell \neq \emptyset$ for all $1 \le \ell \le L$. Let $u_\ell:=|U_\ell|$ for all $1 \le \ell \le L$, and denote by
$\pi_\ell:\hat\mA^s \to \hat\mA^{u_\ell}$ the projection on the coordinates in $U_\ell$. 
For any integer $r \ge 1$, we let $\Id_r:\mA^r \dto \hat\mA^r$ be the identity channel (see Example~\ref{exid}). 
For $1 \le \ell \le L$, define the channel $\restHP{t_\ell}{e_\ell}{U_\ell}: \mA^{u_\ell} \dto  \hat\mA^{u_\ell}$ by  
$\restHP{t_\ell}{e_\ell}{U_\ell} (x) := \{ \pi_\ell(y) : y \in \HP{t_\ell}{e_\ell}{U_\ell}(x)\}$, for all $x \in \mA^s$. Finally,
let $u:=|u_1| + \cdots + |u_L|$. One can directly check that $\HB{t}{e}{U} \cong 
\restHP{t_1}{e_1}{U_1} \times \cdots \times \restHP{t_L}{e_L}{U_L}$ if
$u=s$, and  that $\HB{t}{e}{U} \cong \restHP{t_1}{e_1}{U_1} \times 
\cdots \times \restHP{t_L}{e_L}{U_L}\times \Id_{s-u}$ if $u<s$.
\end{remark}

\begin{remark}
The upper bound of Proposition~\ref{boundext} does not extend additively to multiple adversaries.
More precisely, if $L$,  $\bm{U}$, $\bm{t}$ and $\bm{e}$
are as in Definition~\ref{defhat}, then in general
$$\CA (\HB{t}{e}{U}) \not\le s - \sum_{\ell=1}^L|U_\ell| + 
\sum_{\ell=1}^L \beta(\mA,u_\ell,2t_\ell+e_\ell),$$
as the following example shows. This reflects the fact that $L$ adversaries
acting on different sets $U_\ell$'s, with error and erasure powers $t_\ell$'s and $e_\ell$'s, are not equivalent to a single adversary acting on the set $U_1 \cup \cdots \cup U_L$ with error and erasure powers
$t_1+ \cdots + t_\ell$ and $e_1+ \cdots + e_\ell$, respectively.
In particular, upper bounds from classical Coding Theory do not 
extend in any obvious way from one to multiple adversaries. 
\end{remark}

\begin{example}
Take $\mA:=\F_2$, $L:=2$, $s:=8$, $U_1:=\{1,2,3,4\}$, $U_2:=\{5,6,7,8\}$, $t_1:=t_2:=1$ and
$e_1:=e_2:=0$. Let $\HH$ denote the channel of Example~\ref{firstexample}.
Then $\beta(\F_2,4,3)=\CA(\HH)=1$. By definition of $\HH$, we have
$\HB{t}{e}{U} \cong \HH \times \HH$. Therefore Example~\ref{fee} implies
\begin{equation*}\CA(\HB{t}{e}{U}) \ge \log_2 5 > 2= s - \sum_{\ell=1}^L|U_\ell| + \sum_{\ell=1}^L \beta(\mA,u_\ell,2t_\ell+e_\ell). \qedhere
\end{equation*}
\end{example}

\subsection{Compound Channels} \label{subsectioncompound}

In Section~\ref{firstsection} we introduced the concepts 
of (one-shot) capacity and zero-error capacity for general adversarial channels. We now define a third 
notion of capacity for the specific class of Hamming-type channels, which we call
the compound zero-error capacity.

\begin{notation}\label{notaztuple}
Let $L \ge 1$ be an integer, and let $V_\ell \subseteq [s]$, for 
 $1 \le \ell \le L$, be sets. If $\bm{V}=(V_1,...,V_L)$, we define 
$|\bm{V}|:=(|V_1|,...,|V_L|) \in \N^L$. If $U_\ell \subseteq [s]$ for all $1 \le \ell \le L$ and 
$\bm{U}=(U_1,...,U_\ell)$, then we write $\bm{V} \subseteq \bm{U}$
when $V_\ell \subseteq U_\ell$ for all $1 \le \ell \le L$.
Given integer vectors $\bm{v}=(v_1,...,v_L), \bm{u}=(u_1,...,u_L) \in \N^L$, we write 
$\bm{v} \le \bm{u}$ if $v_\ell \le u_\ell$ for all $1 \le \ell \le L$.
\end{notation}

 Let $L \ge 1$ and $U_1,...,U_L \subseteq [s]$ be pairwise disjoint subsets. Let $t_1,...,t_L$
and $e_1,...,e_L$  be non-negative integers.
Define the $L$-tuples  $\bm{U}:=(U_1,...,U_L)$, $\bm{t}:=(t_1,...,t_L)$ and $\bm{e}:=(e_1,...,e_L)$.

We consider the situation where a channel 
$\HB{t}{e}{U}$ is used $n$ times, but the adversaries are forced
to act on the same sets of coordinates in every channel use.
By this we mean that the $\ell$-th adversary freely chooses a set 
of 
components $V_\ell \subseteq U_\ell$ whose size does not exceed $t_\ell+e_\ell$, and 
operates on those components in each channel use, according to 
its error/erasure power.
The set of vulnerable components is unknown to the source, 
and no feedback is allowed.
This scenario can be mathematically modeled via the channel union operation (see Subsection~\ref{subunion}) as follows.

\begin{definition}\label{defcompound}
Let $L \ge 1$,  $\bm{U}:=(U_1,...,U_L)$, $\bm{t}:=(t_1,...,t_L)$ and $\bm{e}:=(e_1,...,e_L)$
be as in Definition~\ref{defhat}.
For $n \ge 1$, the \textbf{compound channel}
$\HB{t}{e}{U}^{n,\rest}: (\mA^s)^n \dto (\hat\mA^s)^n$ is defined by 
$$\HB{t}{e}{U}^{n,\rest} := \bigcup_{\substack{ \bm{V} \subseteq \bm{U} \\
|\bm{V}| \le\bm{t}+\bm e}}  \,\underbrace{\HB{t}{e}{V} \times \cdots 
\times \HB{t}{e}{V}}_{n \text{ times}} =
\bigcup_{\substack{ \bm{V} \subseteq \bm{U} \\
|\bm{V}| \le \bm{t}+\bm e}} 
\HB{t}{e}{V}^n.$$ 
The 
\textbf{compound zero-error capacity} of the channel 
 $\HB{t}{e}{U}$ is the number 
$$\CAzr (\HB{t}{e}{U}) :=
\sup\left\{\CA (\HB{t}{e}{U}^{n,\rest})/n : n \in \N_{\ge 1}\right\}.$$
\end{definition}

Note that in the definition of compound channel the union is taken over all
the $\bm{V} \subseteq \bm{U}$ with $|\bm{V}| \le\bm{t}+\bm {e}$, and not with
$|\bm{V}| =\bm{t}+\bm {e}$. This is because, by Definition~\ref{defhat}, we allow 
$t_\ell+e_\ell > |U_\ell|$ for some $\ell$ (so there may be no 
$\bm{V} \subseteq \bm{U}$ with $|\bm{V}| =\bm{t}+\bm {e}$). 
This choice may seem unnatural at this point, but it will simplify the discussion
on Hamming-type channels induced by network adversaries
in Subsection~\ref{subportinglemmas}.

The following proposition shows how one-shot capacity, 
zero-error capacity and compound zero-error capacity relate to each other.
The proof is left to the reader.

\begin{proposition}\label{relationcapacities}
Let $L \ge 1$,  $\bm{U}:=(U_1,...,U_L)$, 
$\bm{t}:=(t_1,...,t_L)$ and $\bm{e}:=(e_1,...,e_L)$
be as in Definition~\ref{defhat}.
For all $n \ge 1$ we have 
\begin{equation*}\label{eqrelationcapacities}
n \cdot \CA(\HB{t}{e}{U})  \le   
\CA(\HB{t}{e}{U}^n)  \le   
\CA(\HB{t}{e}{U}^{n,\rest}).
\end{equation*}
In particular,
$$\CA(\HB{t}{e}{U})  \le   
\CAz(\HB{t}{e}{U})  \le   
\CAzr(\HB{t}{e}{U}).$$
\end{proposition}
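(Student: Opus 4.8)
The plan is to prove the two displayed inequalities for a fixed $n$ and then pass to suprema to obtain the chain $\CA \le \CAz \le \CAzr$. One preliminary remark is in order: by Notation~\ref{a-capacity} the change of logarithm base merely rescales every capacity by the positive constant $\log_2|\mA|$, so it suffices to argue with the unscaled (base-$2$) quantities, and every inequality carries over to the base-$|\mA|$ convention.

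For the left inequality $n\cdot \CA(\HB{t}{e}{U}) \le \CA(\HB{t}{e}{U}^n)$ I would simply apply part~1 of Proposition~\ref{relate} to the channel $\Omega:=\HB{t}{e}{U}$: the $n$-fold cartesian product of a capacity-achieving good code for $\Omega$ is good for $\Omega^n$.

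For the right inequality $\CA(\HB{t}{e}{U}^n) \le \CA(\HB{t}{e}{U}^{n,\rest})$ the key claim is that $\HB{t}{e}{U}^{n,\rest}$ is \emph{finer} than $\HB{t}{e}{U}^n$ in the sense of Definition~\ref{deffiner}; the inequality then follows from the remark after that definition, since a finer channel has no smaller capacity. To prove the claim I would first verify that $\HB{t}{e}{V} \le \HB{t}{e}{U}$ for every $\bm{V} \subseteq \bm{U}$: if $y \in \HB{t}{e}{V}(x)$, then $y$ agrees with $x$ on every coordinate outside $\bigcup_\ell V_\ell$, and because the $U_\ell$ are pairwise disjoint (so $V_m$ is disjoint from $U_\ell$ for $m \neq \ell$) this forces $y$ to agree with $x$ on $U_\ell \setminus V_\ell$ as well; hence $\delta(y,x;U_\ell)=\delta(y,x;V_\ell)\le t_\ell$ and $\omega_\star(y;U_\ell)=\omega_\star(y;V_\ell)\le e_\ell$, so $y \in \HB{t}{e}{U}(x)$. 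Taking $n$-th powers preserves ``finer'', so $\HB{t}{e}{V}^n \le \HB{t}{e}{U}^n$ for each admissible $\bm{V}$; and a union of channels each finer than $\HB{t}{e}{U}^n$ is itself finer than $\HB{t}{e}{U}^n$. By the definition of the compound channel this yields $\HB{t}{e}{U}^{n,\rest} \le \HB{t}{e}{U}^n$. (The index set of the union is nonempty, since $\bm{V}=(\emptyset,\dots,\emptyset)$ satisfies $|\bm{V}| \le \bm{t}+\bm{e}$.)

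Finally, dividing the established chain by $n$ and taking $\sup_{n\ge 1}$ gives $\CA(\HB{t}{e}{U}) \le \CAz(\HB{t}{e}{U}) \le \CAzr(\HB{t}{e}{U})$, by the definitions of zero-error and compound zero-error capacity, using that $\sup$ is monotone and that $\CA(\HB{t}{e}{U}^n)/n \ge \CA(\HB{t}{e}{U})$ for every $n$. I do not expect any genuine obstacle here; the only step needing a little care is the monotonicity $\HB{t}{e}{V} \le \HB{t}{e}{U}$, which relies precisely on the pairwise disjointness of the $U_\ell$ and on the fact that coordinates outside the chosen vulnerable sets are transmitted unaltered.
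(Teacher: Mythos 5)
Your proof is correct, and since the paper explicitly leaves this proof to the reader, there is no written argument in the paper to compare against; your route (applying Proposition~\ref{relate} for the left inequality, and showing $\HB{t}{e}{U}^{n,\rest} \le \HB{t}{e}{U}^n$ via $\HB{t}{e}{V} \le \HB{t}{e}{U}$ for every $\bm{V}\subseteq\bm{U}$, then passing to suprema) is the natural and presumably intended one. The one step that genuinely requires care is exactly the one you flag: showing $\HB{t}{e}{V} \le \HB{t}{e}{U}$ uses that the $U_\ell$ are pairwise disjoint, so a coordinate in $U_\ell\setminus V_\ell$ lies outside $\bigcup_m V_m$ and hence must be passed through unchanged; your handling of this is correct. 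The remarks on the nonemptiness of the union's index set and on the base-$|\mA|$ rescaling being a harmless positive constant are also correct and worth keeping.
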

%
%

\subsection{Capacities of Hamming-Type Channels}
The goal of this subsection is to establish the following general theorem
on the capacities of a Hamming-type channel of the form $\HB{t}{e}{U}$.

\begin{theorem} \label{mainthhamming}
Let $L \ge 1$,  $\bm{U}:=(U_1,...,U_L)$, $\bm{t}:=(t_1,...,t_L)$ and $\bm{e}:=(e_1,...,e_L)$
be as in Definition~\ref{defhat}.
For all $n \ge 1$ we have
\begin{equation}\label{eqmainthhamming}
n \cdot \CA(\HB{t}{e}{U})  \le  
\CA(\HB{t}{e}{U}^n)  \le   
\CA(\HB{t}{e}{U}^{n,\rest})  \le    
n \left(s-\sum_{\ell=1}^L \min\{2t_\ell+e_\ell, |U_\ell|\} \right).
\end{equation}
In particular,
$$\CA(\HB{t}{e}{U})  \le  
\CAz(\HB{t}{e}{U})  \le   
\CAzr(\HB{t}{e}{U})  \le    
s-\sum_{\ell=1}^L \min\{2t_\ell+e_\ell, |U_\ell|\}.$$
Moreover, all the above 
inequalities are achieved with equality if $\mA=\F_q$ and $q$ is sufficiently large.
\end{theorem}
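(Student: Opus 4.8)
The plan is the following. The first two inequalities in \eqref{eqmainthhamming} are exactly Proposition~\ref{relationcapacities}, and the displayed ``in particular'' bounds follow from them by dividing by $n$ and passing to the supremum over $n$. Writing $M:=\sum_{\ell=1}^L\min\{2t_\ell+e_\ell,|U_\ell|\}$, it therefore suffices to prove two things: (a) the third inequality $\CA(\HB{t}{e}{U}^{n,\rest})\le n(s-M)$ for all $n\ge 1$; and (b) that for $\mA=\F_q$ with $q$ large one has $\CA(\HB{t}{e}{U})\ge s-M$. Granting both, the chain $n(s-M)=n\,\CA(\HB{t}{e}{U})\le\CA(\HB{t}{e}{U}^n)\le\CA(\HB{t}{e}{U}^{n,\rest})\le n(s-M)$ forces every inequality in the statement to be an equality.

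For (a), I would reorganize each $x=(x^1,\dots,x^n)\in(\mA^s)^n$ into its $s$ ``columns'' $\mathrm{col}_i(x):=(x^1_i,\dots,x^n_i)\in\mA^n$, so that a code $\mC$ for $\HB{t}{e}{U}^{n,\rest}$ is viewed as a code of length $s$ over the alphabet $\mA^n$. The heart of the argument is a confusability claim: if two distinct codewords $x,x'$ satisfy $\mathrm{col}_i(x)=\mathrm{col}_i(x')$ for all $i\notin\bigcup_\ell U_\ell$ and $|T_\ell|\le 2t_\ell+e_\ell$ for every $\ell$, where $T_\ell:=\{i\in U_\ell:\mathrm{col}_i(x)\ne\mathrm{col}_i(x')\}$, then $x$ and $x'$ are confusable in $\HB{t}{e}{U}^{n,\rest}$. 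To see this, for each $\ell$ I would split $T_\ell$ into disjoint pieces $A_\ell,B_\ell,C_\ell$ with $|A_\ell|=|B_\ell|=\max\{|T_\ell|-t_\ell-e_\ell,0\}$ (possible since $|T_\ell|\le 2t_\ell+e_\ell\le 2(t_\ell+e_\ell)$), set $V_\ell:=A_\ell\cup C_\ell$ and $V_\ell':=B_\ell\cup C_\ell$ (so $|V_\ell|,|V_\ell'|\le t_\ell+e_\ell$ and $V_\ell\cup V_\ell'=T_\ell$), and then construct, one channel use at a time, a received word $y$ lying in $\HB{t}{e}{V}^n(x)\cap\HB{t}{e}{V'}^n(x')$: on $A_\ell$ one copies $x'$ (charging the discrepancies to the adversary that attacks $x$), on $B_\ell$ one copies $x$, and on $C_\ell$ one either erases or interpolates between $x$ and $x'$; per use the feasibility of meeting all the budgets $t_\ell,e_\ell$ boils down to $|T_\ell|\le 2t_\ell+e_\ell$. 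This is the classical ``adversary meets the codewords halfway'' trick, but the essential---and easily overlooked---point is that the adversary attacking $x$ and the hypothetical one attacking $x'$ need not act on the same coordinates ($V_\ell\ne V_\ell'$); this is exactly what produces the threshold $2t_\ell+e_\ell$ rather than $t_\ell+e_\ell$, and it is the step I expect to be the main obstacle. Taking the contrapositive: in a good code $\mC$, any two distinct codewords either differ in some column outside $\bigcup_\ell U_\ell$, or differ in at least $2t_\ell+e_\ell+1$ columns of some $U_\ell$.

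I would then conclude (a) by a generalized Singleton bound. Choose $R_\ell\subseteq U_\ell$ with $|R_\ell|=|U_\ell|-\min\{2t_\ell+e_\ell,|U_\ell|\}$ and put $R:=\bigl([s]\setminus\bigcup_\ell U_\ell\bigr)\cup\bigcup_\ell R_\ell$, so that $|R|=s-M$. By the property above, two distinct codewords must differ in some column indexed by $R$: if they differed in at least $2t_\ell+e_\ell+1$ columns of $U_\ell$ they could not agree on all of $R_\ell$, since $|U_\ell\setminus R_\ell|=\min\{2t_\ell+e_\ell,|U_\ell|\}\le 2t_\ell+e_\ell$; and the ``outside'' case is immediate. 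Hence $x\mapsto(\mathrm{col}_i(x))_{i\in R}$ is injective on $\mC$, giving $|\mC|\le|\mA|^{n|R|}=|\mA|^{n(s-M)}$ and thus $\CA(\HB{t}{e}{U}^{n,\rest})\le n(s-M)$.

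For (b), by the remark after Definition~\ref{defhat} the channel $\HB{t}{e}{U}$ is isomorphic to $\restHP{t_1}{e_1}{U_1}\times\cdots\times\restHP{t_L}{e_L}{U_L}\times\Id_{s-u}$ with $u:=\sum_\ell|U_\ell|$, so Propositions~\ref{boundprod} and~\ref{connect} give $\CA(\HB{t}{e}{U})\ge\sum_\ell\CA(\restHP{t_\ell}{e_\ell}{U_\ell})+(s-u)$. Each $\restHP{t_\ell}{e_\ell}{U_\ell}$ is, up to relabelling coordinates, the channel $\HP{t_\ell}{e_\ell}{[|U_\ell|]}$, whose capacity is $\beta(\mA,|U_\ell|,2t_\ell+e_\ell+1)$ by the classical fact recalled before Proposition~\ref{boundext}; for $\mA=\F_q$ with $q$ large enough to admit Reed--Solomon codes of length up to $s$, this equals $\max\{|U_\ell|-2t_\ell-e_\ell,0\}=|U_\ell|-\min\{2t_\ell+e_\ell,|U_\ell|\}$. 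Summing yields $\CA(\HB{t}{e}{U})\ge s-M$, which together with (a) and Proposition~\ref{relationcapacities} closes the argument.
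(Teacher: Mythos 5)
Your proposal is correct and follows the same two-step plan as the paper: a confusability/projection argument gives the cut-off $s-M$ as an upper bound on $\CA(\HB{t}{e}{U}^{n,\rest})/n$, and MDS codes over a large field give tightness in one shot. Within this plan there are two points where you diverge from the paper's execution, and both diverges are legitimate. (i) For the upper bound, the paper's Lemma~\ref{keylong} fixes a single pair $(\bm V,\bm{V}')$ and a single ``safe'' coordinate set $\overline{U}$ \emph{before} looking at the code, and shows that any two inputs agreeing off $\overline U$ are confusable via that fixed pair; you instead choose the attack supports $V_\ell=A_\ell\cup C_\ell$, $V'_\ell=B_\ell\cup C_\ell$ adaptively from the discrepancy set $T_\ell$ of each pair $x\ne x'$, which the compound union $\bigcup_{\bm V}\HB{t}{e}{V}^n$ explicitly allows. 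The arithmetic you indicate does close: with $|A_\ell|=|B_\ell|=\max\{|T_\ell|-t_\ell-e_\ell,0\}$, erasing $\min\{e_\ell,|C_\ell|\}$ positions of $C_\ell$ and copying $x$ on the rest gives at most $|A_\ell|\le t_\ell$ errors for the $x$-adversary and at most $|B_\ell|+|C_\ell\setminus E_\ell|\le t_\ell$ for the $x'$-adversary, with $\le e_\ell$ erasures each (the word ``interpolates'' is a little loose---you simply copy one side---but the budget calculation is exactly right and indeed hinges on the two adversaries using different $V_\ell\ne V'_\ell$). The fixed-set version in the paper makes the injectivity of $\Pi$ immediate; your adaptive version requires the short extra observation that $|T_\ell|\le|U_\ell\setminus R_\ell|=\min\{2t_\ell+e_\ell,|U_\ell|\}$ whenever the $R$-projections coincide. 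These are cosmetic trade-offs. (ii) For tightness, the paper directly exhibits a single length-$s$ code of minimum distance $\sigma+1$ and size $q^{s-\sigma}$ and checks it is good for $\HB{t}{e}{U}$; you instead exploit the product decomposition $\HB{t}{e}{U}\cong\restHP{t_1}{e_1}{U_1}\times\cdots\times\restHP{t_L}{e_L}{U_L}\times\Id_{s-u}$ together with Propositions~\ref{boundprod},~\ref{connect} and~\ref{boundext}, yielding a product of shorter MDS codes. Your route is more modular and reuses earlier machinery; the paper's is more self-contained. Both close the sandwich $n(s-M)\le n\,\CA(\HB{t}{e}{U})\le\CA(\HB{t}{e}{U}^{n})\le\CA(\HB{t}{e}{U}^{n,\rest})\le n(s-M)$, and the ``in particular'' statements follow by dividing by $n$ and taking suprema as you say.
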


We will need the following preliminary result, whose proof can be found in Appendix~\ref{Appe}.

\begin{lemma} \label{keylong}
Let $L \ge 1$,  $\bm{U}:=(U_1,...,U_L)$, $\bm{t}:=(t_1,...,t_L)$ and 
$\bm{e}:=(e_1,...,e_L)$
be as in Definition~\ref{defhat}. Define $\sigma_\ell:=
\min\{2t_\ell+e_\ell, |U_\ell|\}$ for all $1 \le \ell \le L$,
and $\sigma:=\sigma_1+\cdots +\sigma_L \le s$.

\begin{enumerate} \setlength\itemsep{0em}
\item \label{keylongitem1}There exist sets 
$\overline{U}_\ell^{1}, \overline{U}_\ell^{2}, 
\overline{U}_\ell^{\star} \subseteq U_\ell$, for 
$1 \le \ell \le L$, with the following properties:
$$|\overline{U}_\ell^{1}|,  |\overline{U}_\ell^{2}|  \le t_\ell, \quad
|\overline{U}_\ell^{\star}| \le e_\ell, \quad
 |\overline{U}_\ell^{1} \cup \overline{U}_\ell^{2} \cup 
 \overline{U}_\ell^{\star}|=\sigma_\ell \mbox{ for all $1 \le \ell \le L$}.$$

\item  \label{keylongitem2}We have $\sigma=|\overline{U}|$, 
where $$\overline{U}:=\left|  \left(  \bigcup_{\ell=1}^L 
\overline{U}_\ell^{1} \right)  \cup   \left(  \bigcup_{\ell=1}^L \overline{U}_\ell^{2} \right)
 \cup  \left(  \bigcup_{\ell=1}^L \overline{U}_\ell^{\star} \right)  \right| \subseteq [s].$$
 
 \item  \label{keylongitem3}There exist $\bm{V}=(V_1,...,V_L) \subseteq \bm{U}$
and $\bm{V'}=(V'_1,...,V'_L) \subseteq \bm{U}$
with the following properties:
\begin{itemize}\setlength\itemsep{0em}
\item $|\bm{V}|, |\bm{V'}| \le \bm{t} + \bm e$;
\item for any $x,x' \in \mA^s$,  if $x_i=x'_i$ for all $i \in [s] \setminus \overline{U}$, 
then
$\HB{t}{e}{V}   (x) \cap \HB{t}{e}{V'}  (x') \neq \emptyset$.
\end{itemize}
\end{enumerate}
\end{lemma}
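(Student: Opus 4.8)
The plan is to prove \ref{keylongitem1}--\ref{keylongitem2} by an explicit greedy packing inside each $U_\ell$, and to prove \ref{keylongitem3} by writing down a single output vector $y$ that lies in both fan-out sets at once.

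For \ref{keylongitem1} and \ref{keylongitem2}, I would fix $\ell$ and use that $\sigma_\ell=\min\{2t_\ell+e_\ell,|U_\ell|\}\le 2t_\ell+e_\ell$ to split $\sigma_\ell=a_\ell+b_\ell+c_\ell$ with $a_\ell,b_\ell\le t_\ell$ and $c_\ell\le e_\ell$ (take $a_\ell:=\min\{\sigma_\ell,t_\ell\}$, then $b_\ell:=\min\{\sigma_\ell-a_\ell,t_\ell\}$, then $c_\ell:=\sigma_\ell-a_\ell-b_\ell$; the inequality $\sigma_\ell\le 2t_\ell+e_\ell$ is exactly what forces $c_\ell\le e_\ell$). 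Since also $\sigma_\ell\le|U_\ell|$, I can choose \emph{pairwise disjoint} subsets $\overline{U}_\ell^{1},\overline{U}_\ell^{2},\overline{U}_\ell^{\star}\subseteq U_\ell$ of sizes $a_\ell,b_\ell,c_\ell$ respectively; disjointness gives $|\overline{U}_\ell^{1}\cup\overline{U}_\ell^{2}\cup\overline{U}_\ell^{\star}|=\sigma_\ell$, which is \ref{keylongitem1}. Because the $U_\ell$ are pairwise disjoint and the three sets $\overline{U}_\ell^{1},\overline{U}_\ell^{2},\overline{U}_\ell^{\star}$ are pairwise disjoint within each $\ell$, the whole family $\{\overline{U}_\ell^{j}\}$ is pairwise disjoint, so $|\overline{U}|=\sum_{\ell}\sigma_\ell=\sigma$, giving \ref{keylongitem2} (here $\overline{U}$ is read as the union $\bigl(\bigcup_\ell\overline{U}_\ell^{1}\bigr)\cup\bigl(\bigcup_\ell\overline{U}_\ell^{2}\bigr)\cup\bigl(\bigcup_\ell\overline{U}_\ell^{\star}\bigr)\subseteq[s]$).

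For \ref{keylongitem3}, I would set $V_\ell:=\overline{U}_\ell^{1}\cup\overline{U}_\ell^{\star}$ and $V'_\ell:=\overline{U}_\ell^{2}\cup\overline{U}_\ell^{\star}$; these are subsets of $U_\ell$ of size at most $t_\ell+e_\ell$, so $\bm{V},\bm{V'}\subseteq\bm{U}$ with $|\bm{V}|,|\bm{V'}|\le\bm{t}+\bm{e}$, and their components are pairwise disjoint since the $U_\ell$ are. Put $A:=\bigcup_\ell\overline{U}_\ell^{1}$, $B:=\bigcup_\ell\overline{U}_\ell^{2}$, $W:=\bigcup_\ell\overline{U}_\ell^{\star}$, so that $A,B,W$ are pairwise disjoint with $A\cup B\cup W=\overline{U}$, $\bigcup_\ell V_\ell=A\cup W$ and $\bigcup_\ell V'_\ell=B\cup W$. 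Given $x,x'\in\mA^s$ agreeing on $[s]\setminus\overline{U}$, I would define $y\in\hat\mA^s$ by $y_i:=\star$ on $W$, $y_i:=x'_i$ on $A$, $y_i:=x_i$ on $B$, and $y_i:=x_i\,(=x'_i)$ on $[s]\setminus\overline{U}$, and then check directly: outside $\bigcup_\ell V_\ell$, i.e.\ on $B\cup([s]\setminus\overline{U})$, we have $y_i=x_i$; inside each $V_\ell$ the only erased positions are those of $\overline{U}_\ell^{\star}$ and the only positions where $y$ can differ from $x$ lie in $\overline{U}_\ell^{1}$, so $\omega_\star(y;V_\ell)\le|\overline{U}_\ell^{\star}|\le e_\ell$ and $\delta(y,x;V_\ell)\le|\overline{U}_\ell^{1}|\le t_\ell$; hence $y\in\HB{t}{e}{V}(x)$, and the symmetric check (swapping the roles of $x,x',A,B$) gives $y\in\HB{t}{e}{V'}(x')$. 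Thus $\HB{t}{e}{V}(x)\cap\HB{t}{e}{V'}(x')\neq\emptyset$.

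I do not expect a serious obstacle; the lemma is essentially careful bookkeeping. The one point to get right is the \emph{asymmetric} three-way split in \ref{keylongitem1}: $\overline{U}_\ell^{1}$ must absorb the errors seen by the configuration $\bm{V}$, $\overline{U}_\ell^{2}$ those seen by $\bm{V'}$, and $\overline{U}_\ell^{\star}$ must serve as a common erasure pattern, and keeping the three blocks pairwise disjoint is exactly what makes the two membership verifications in \ref{keylongitem3} non-interfering (the $\star$'s never count toward a discrepancy, and the error positions for $x$ and for $x'$ are separated into $A$ and $B$).
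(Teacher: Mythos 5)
Your proof is correct and takes essentially the same route as the paper's: the same three-way split of $\sigma_\ell$, the same use of disjointness of the $U_\ell$'s for part \ref{keylongitem2}, the same choice $V_\ell=\overline{U}_\ell^{1}\cup\overline{U}_\ell^{\star}$, $V'_\ell=\overline{U}_\ell^{2}\cup\overline{U}_\ell^{\star}$, and the same idea of exhibiting one common output vector. The only difference is a minor refinement: by insisting in part \ref{keylongitem1} that $\overline{U}_\ell^{1},\overline{U}_\ell^{2},\overline{U}_\ell^{\star}$ be pairwise disjoint within each $\ell$ (which the statement permits, since it only asserts existence), your case split in the definition of $y$ is cleaner than the paper's, which allows overlaps and therefore defines the analogous $z$ via set differences $\overline{V}\setminus\overline{U}^\star$ and $\overline{V}'\setminus\overline{V}$.
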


We can now prove the main result of this section.

\begin{proof}[Proof of Theorem~\ref{mainthhamming}]
Let $\sigma_\ell:=\min\{2t_\ell+e_\ell, |U_\ell|\}$ for all $1 \le \ell \le L$,
and $\sigma:=\sigma_1+\cdots +\sigma_L$. We only show
the theorem for $\sigma <s$. The case $\sigma=s$ is similar and in fact easier.
By Proposition~\ref{relationcapacities}, it suffices to show that
$\CA(\HB{t}{e}{U} ^{n,\rest})  \le    
n(s-\sigma)$ for all $n \ge 1$, and that for 
$\mA=\F_q$ and sufficiently large $q$ we have $\CA(\HB{t}{e}{U} ) \ge    
s-\sigma$.

Let
$\overline{U}_\ell^{1}$, $\overline{U}_\ell^{2}$, $\overline{U}_\ell^{\star}$ 
(for $1 \le \ell \le L$) and $\overline{U}$ be as in Lemma
\ref{keylong}.
We have 
$\sigma=|\overline{U}|$. Denote by $\pi:\mA^s \to
 \mA^{s-\sigma}$ the projection on the coordinates outside $\overline{U}$.
Since $\sigma < s$, the map $\pi$ is well defined.

Let $n \ge 1$ be an integer. Then $\pi$ extends component-wise to a map
$\Pi:(\mA^s)^n \to (\mA^{s-\sigma})^n$. Let $\mC\subseteq (\mA^s)^n$ 
be a capacity-achieving good code for $\HB{t}{e}{U} ^{n,\rest}$.
To obtain the upper bound, it suffices to show that 
the restriction of $\Pi$ to $\mC$ is injective. 

Take $x,x' \in \mC$, and assume
$\Pi(x)=\Pi(x')$. We will show that $x=x'$. Write $x=(x^1,...,x^n)$ and $x'=(x'^1,...,x'^n)$. 
By definition of $\Pi$, we have $\pi(x^k)=\pi(x'^k)$ for all $1 \le k \le n$.
By Lemma~\ref{keylong}, there exist
$L$-tuples of sets
$\bm{V}=(V_1,...,V_L) \subseteq \bm{U}$ and
 $\bm{V'}=(V'_1,...,V'_L) \subseteq \bm{U}$ 
such that:
\begin{equation} \label{nonvuoto}
|\bm{V}|, |\bm{V'}| \le \bm{t}+\bm e \quad \mbox{ and } \quad
\HB{t}{e}{V}  (x^k) \cap \HB{t}{e}{V'}  (x'^k) \neq \emptyset \quad \mbox{
for all $1 \le k \le n$}.
\end{equation}
By definition of $\HB{t}{e}{U}^{n,\rest}$ we have
\begin{eqnarray*}
\HB{t}{e}{U}^{n,\rest}(x) \cap \HB{t}{e}{U}^{n,\rest}(x') 
&\supseteq& \HB{t}{e}{V}^n(x^1,...,x^n) \cap \HB{t}{e}{V'}^n (x'^1,...,x'^n) \\
&=& \prod_{k=1}^n \left( \HB{t}{e}{V} (x^k) \cap \HB{t}{e}{V'}(x'^k) \right) \neq \emptyset,
\end{eqnarray*}
where the last inequality follows from (\ref{nonvuoto}).
Since $\mC$ is good for $\HB{t}{e}{U}^{n,\rest}$,
we conclude $x=x'$. This shows that the restriction of $\Pi$ to $\mC$ is injective, as desired.

We now prove that the upper bounds in the theorem are tight for $\mA=\F_q$ and $q$ 
sufficiently large.
As already stated at the beginning of the proof, it suffices to show that $\CA(\HB{t}{e}{U}) \ge    
s-\sigma$.

Let
 $\mC \subseteq \mA^s$ be any code with minimum distance $\dH(\mC)=\sigma+1$
and cardinality $q^{s-\sigma}$. We will prove that $\mC$ is 
good for $\HB{t}{e}{U}$. Let $x,x' \in \mC$ with $x \neq x'$, and assume by way 
of contradiction that there exists 
$z \in \HB{t}{e}{U}(x)
\cap \HB{t}{e}{U}(x')$.
For $1 \le \ell \le L$ construct  the sets
$$U_\ell^{1}:=\{i \in U_\ell : z_i \in \mA, \,z_i \neq x_i\},
\quad  U_\ell^{2}:=\{i \in U_\ell : z_i \in \mA, \,z_i \neq x'_i\}, 
\quad  U_\ell^{\star}:=\{i \in U_\ell : z_i =\star\}.$$
By definition of $\HB{t}{e}{U}$, for all $1 \le \ell \le L$ we have
$|U_\ell^{1} \cup U_\ell^{2} \cup U_\ell^{\star}| \le 
\min\{2t_\ell+e_\ell, |U_\ell|\} = \sigma_\ell$. 
Note moreover that
$U_\ell^{1} \cup U_\ell^{2} \cup
U_\ell^{\star} =\{ i \in U_\ell : 
z_i \neq x_i \mbox{ or } z_i \neq x'_i \}$ for all $1 \le \ell \le L$.
Since the sets $U_\ell$'s are pairwise disjoint, we have 
$$|\{1 \le i \le s : z_i=x_i=x'_i\}| = s- \sum_{\ell=1}^L |U_\ell^{1} \cup 
U_\ell^{1} \cup U_\ell^{\star}| \ge s-
\sum_{\ell=1}^L \sigma_\ell=s-\sigma.$$
Thus $z$, $x$ and $x'$ coincide in at least $s-\sigma$ components, 
and therefore $\dH(x,x') \le \sigma$, a contradiction. 
\end{proof}

\subsection{Hamming-Type Channels over Product Alphabets}

We now consider channels of the form $(\mB^m)^s \dto (\hat\mB^m)^s$,  where $\mB$ 
is a finite set 
with $|\mB| \ge 2$, $\hat\mB=\mB \cup \{\star\}$, $\star \notin \mB$, $m \in \N_{\ge 2}$ 
and $s \in \N_{\ge 1}$.
 Therefore the input alphabet is $\mA=\mB^m$.
For $x=(x_1,...,x_s) \in (\hat\mB^m)^s$ and $i \in [m]$, we denote by 
$x_i=(x_{i,1},...,x_{i,m})$ the 
sub-components of $x_i$.

\begin{definition}
For $y \in \hat{\mB}^m$ and $x \in \mB^m$, we let $\delta(y,x):=|\{i \in [m]  : y_i \in \mB \mbox{ and }  y_i \neq x_i  \}|$ be the \textbf{discrepancy} between $y$ and $x$.
 The \textbf{erasure weight} of $y \in \hat{\mB}^m$ is
$\omega_\star(y):=|\{i \in [m] : y_i = \star\}|$.
\end{definition}

Consider an adversary who has access to all the $s$ components 
of $x=(x_1,...,x_s) \in (\mB^m)^s$. For each component $x_i \in \mB^m$, 
the adversary can corrupt up to $t$ sub-components  of $x_i$, and erase at most $e$ of them.  
This scenario is modeled as follows.

\begin{definition}
Let $t,e \ge 0$ be integers. The channel $\HP{t}{e}{\mB,m,s}:(\mB^m)^s \dto (\hat\mB^m)^s$
is defined, for all $x=(x_1,...,x_s) \in (\mB^m)^s$, by 
$$\HP{t}{e}{\mB,m,s}(x):= \{y \in (\hat\mB^m)^s : \delta(y_i,x_i) \le t \mbox{ and } \omega_\star(y_i) \le e \mbox{ for all $1 \le i \le s$}\}.$$
\end{definition}

We conclude this section computing the capacity and the zero-error capacity
of a channel of the form $\HP{t}{e}{\mB,m,s}$. In the following 
theorem capacities are expressed as logarithms in base $|\mA|=|\mB|^m$.

\begin{theorem} \label{productalphabetthm}
Let $t,e \ge 0$ be integers. For all $n \ge 1$ we have
\begin{equation*}
n \cdot \CA(\HP{t}{e}{\mB,m,s})  \le  
\CA(\HP{t}{e}{\mB,m,s}^n) \le 
n \cdot s/m \cdot \max\{0,m-2t-e\}.
\end{equation*}
In particular,
$$\CA(\HP{t}{e}{\mB,m,s})  \le  
\CAz(\HP{t}{e}{\mB,m,s})  \le   
s/m \cdot  \max\{0,m-2t-e\}.$$
Moreover, all the above 
inequalities are achieved with equality if $\mA=\F_q$ and $q$ is sufficiently large.
\end{theorem}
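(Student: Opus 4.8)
The plan is to reduce this to Theorem~\ref{mainthhamming} by exhibiting the channel $\HP{t}{e}{\mB,m,s}$ (or rather an isomorphic channel obtained by regrouping coordinates) as a channel of the form $\HB{t}{e}{U}$ over the alphabet $\mB$, and then applying the general bound already established. Concretely, think of an element $x=(x_1,\dots,x_s)\in(\mB^m)^s$ as a single vector in $\mB^{sm}$ by concatenating the sub-components, so that the $i$-th block occupies coordinates $\{(i-1)m+1,\dots,im\}$. Set $L:=s$, let $U_i$ be the $i$-th block of coordinates (so $|U_i|=m$ and the $U_i$ are pairwise disjoint with union all of $[sm]$), and set $t_i:=t$, $e_i:=e$ for all $i$. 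Then by Definition~\ref{elementarychannels} and Definition~\ref{defhat}, the channel $\HB{t}{e}{U}:\mB^{sm}\dto\hat\mB^{sm}$ is exactly $\HP{t}{e}{\mB,m,s}$ up to the identification $(\mB^m)^s\cong\mB^{sm}$; in particular it is isomorphic to it, so by Proposition~\ref{proprisoproduct} all capacities agree. The only subtlety is bookkeeping the logarithm base: Theorem~\ref{mainthhamming} measures capacities in base $|\mB|$, whereas here we want base $|\mA|=|\mB|^m$, which introduces a factor $1/m$ — this accounts for the $s/m$ in the statement.

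With this identification in place, Theorem~\ref{mainthhamming} applied to $\HB{t}{e}{U}$ (with the parameters above, and $s$ there replaced by $sm$) gives, for every $n\ge1$,
\begin{equation*}
n\cdot\CA(\HB{t}{e}{U})\le\CA(\HB{t}{e}{U}^n)\le\CA(\HB{t}{e}{U}^{n,\rest})\le n\Bigl(sm-\sum_{i=1}^s\min\{2t+e,m\}\Bigr)=n\cdot s\cdot\max\{0,m-2t-e\},
\end{equation*}
where the last equality uses $m-\min\{2t+e,m\}=\max\{0,m-2t-e\}$. Dividing through by $m$ to convert to base $|\mA|$ yields exactly the chain of inequalities in the theorem, and the ``in particular'' statement for the zero-error capacity follows by taking the supremum over $n$ (using that $\CAz$ is the supremum of $\CA(\Omega^n)/n$, together with Proposition~\ref{relationcapacities}-type monotonicity, or directly the second display of Theorem~\ref{mainthhamming}). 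For the tightness claim, Theorem~\ref{mainthhamming} already asserts that all its inequalities become equalities when $\mB=\F_q$ with $q$ large; but the theorem here requires $\mA=\F_q$, i.e.\ the \emph{full alphabet} $\mB^m$ is a field. So one chooses $\mB=\F_{q_0}$ and then takes $q=q_0^m$ to be large (equivalently $q_0$ large); then $\mA=\F_{q_0}^m\cong\F_{q_0^m}=\F_q$ as a set, and the capacity-achieving code over $\mB$ furnished by Theorem~\ref{mainthhamming} is simultaneously a code over $\mA$ achieving the bound. One should note explicitly that the notion of ``good code'' is defined purely in terms of fan-out sets and does not depend on any field structure on the alphabet, so passing between the $\mB$-description and the $\mA$-description of the same set of vectors preserves goodness.

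The main obstacle I anticipate is not conceptual but a matter of being careful with two things at once: first, the re-indexing $(\mB^m)^s\leftrightarrow\mB^{sm}$ must be checked to carry the fan-out-set structure of $\HP{t}{e}{\mB,m,s}$ onto that of $\HB{t}{e}{U}$ — i.e.\ that the per-block error/erasure constraints $\delta(y_i,x_i)\le t$, $\omega_\star(y_i)\le e$ translate verbatim into $\delta(y,x;U_i)\le t$, $\omega_\star(y;U_i)\le e$ in the sense of Definition~\ref{defomegastar} — and second, keeping the logarithm base consistent throughout, since Notation~\ref{a-capacity} has already rescaled $\CA$ for Hamming-type channels to base $|\mA|$, and here $|\mA|=|\mB|^m$ differs from the $|\mB|$ that is the natural base when we regard the channel as acting on $\mB^{sm}$. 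Once these two accounting points are pinned down, the proof is a direct citation of Theorem~\ref{mainthhamming} with no new combinatorics.
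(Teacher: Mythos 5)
Your proof is correct and takes essentially the same route as the paper: the paper observes directly that $\HP{t}{e}{\mB,m,s}^n$ coincides, after regrouping $(\mB^m)^s\cong\mB^{sm}$, with the $ns$-th power of the single-adversary channel $\HP{t}{e}{[m]}:\mB^m\dto\hat\mB^m$, and then cites Theorem~\ref{mainthhamming}, which is exactly your reduction (viewing the regrouped channel as $\HB{t}{e}{U}$ with $L=s$ disjoint blocks is just a cosmetic repackaging of the same regrouping). Your explicit care about the $\log$-base rescaling and about goodness being a purely set-theoretic notion is sound and matches what the paper's terse proof implicitly uses.
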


\begin{proof}
It follows from the definitions that the channel $\HP{t}{e}{\mB,m,s}$ coincides with the 
$s$-th power of the Hamming-type channel $\HP{t}{e}{[m]}: \mB^m \dto \hat{\mB}^m$ 
of Definition~\ref{elementarychannels}
 (over the alphabet $\mB$ and with input symbols from $\mB^m$).
Similarly, the $n$-th power of  $\HP{t}{e}{\mB,m,s}$ coincides with the 
$ns$-th power of $\HP{t}{e}{[m]}: \mB^m \dto \hat{\mB}^m$.
Therefore the desired statement follows from Theorem~\ref{mainthhamming}.
\end{proof}


\section{Networks, Adversaries, and Capacity Regions} \label{secnetwork}

In this section we start the analysis of networks by 
studying combinational networks, network alphabets,
network codes, and adversaries. We show that these objects 
naturally induce families of adversarial channels, which determine the capacity 
regions of multi-source network under various adversarial models.

\subsection{Combinational Networks} \label{subnetworks}

In the sequel, $\mA$ denotes a finite set with cardinality $|\mA| \ge 2$, which we call the \textbf{network alphabet}. 
A network
is defined as follows.

\begin{definition} \label{defnetwork}
A (\textbf{combinational}) \textbf{network} 
 is a 4-tuple $\mN=(\mV,\mE,{\bf S},{\bf T})$ where:
\begin{enumerate}[label=(\Alph*)]\setlength\itemsep{0em} 
 \item $(\mV,\mE)$ is a finite directed acyclic multigraph,
\item ${\bf S} \subseteq \mV$ is the set of \textbf{sources},
\item ${\bf T} \subseteq \mV$ is the set of \textbf{terminals} or \textbf{sinks}.
\end{enumerate}
Note that we allow multiple parallel directed edges. We also assume that the following hold.
\begin{enumerate}[label=(\Alph*)]\setlength\itemsep{0em} \setcounter{enumi}{3}
 \item $|{\bf S}| \ge 1$, $|{\bf T}| \ge 1$, ${\bf S} \cap {\bf T} = \emptyset$.
\item \label{prnE} For any $S \in {\bf S}$ and $T \in {\bf T}$ there exists a directed path from $S$ to $T$.
\item Sources do not have incoming edges, and terminals do not have outgoing edges. \label{prnF}
\item For every vertex $V \in \mV \setminus ({\bf S} \cup {\bf T})$ there exists a directed path from
$S$ to $V$ for some $S \in {\bf S}$, and a directed path from $V$ to $T$ for some $T \in {\bf T}$. \label{prnG}
\end{enumerate}
The elements of $\mV$ are called \textbf{vertices}. The elements of 
$\mV \setminus ({\bf S} \cup {\bf T})$ are the \textbf{intermediate} vertices.
We denote the set of incoming and outgoing edges of a $V \in \mV$ by 
$\inn(V)$ and $\out(V)$, respectively.
\end{definition}

We are interested in multicast problems over networks of type $\mN=(\mV,\mE,{\bf S},{\bf T})$. 
Our model will encompass the presence of one or multiple adversaries, capable of 
corrupting the values of some network edges, or erasing them, according to certain restrictions.

The sources attempt to transmit
information packets to all the terminals simultaneously, sharing the network resources. 
The packets are drawn from the network alphabet $\mA$. 
The intermediate vertices emit packets that belong to the same alphabet $\mA$, but collect over the incoming edges 
packets that belong a priori to the \textbf{extended network alphabet} $\hat{\mA}=\mA \cup \{\star\}$, 
where $\star \notin \mA$ is a symbol denoting an erasure.

\begin{remark}
The symbol $\star$ from the extended alphabet $\hat{\mA}$
should not be regarded as an ordinary alphabet symbol, but more
as an ``erasure warning'' symbol. This is the reason why we force the 
intermediate vertices to emit packets from  
$\mA$, rather than $\hat{\mA}$. If vertices were allowed to emit 
any symbol from $\hat{\mA}$, the whole theory would simply reduce to that of a 
network with a larger alphabet. 
\end{remark}
We assume that 
every edge of $\mN$ can carry precisely one element from $\mA$,
or possibly the  erasure symbol $\star$. For examples, if the sources transmit vectors of length $m$
with entries from a finite field $\F_q$, then $\mA=\F_q^m$, and 
each edge carries $m \log_2(q)$ bits (and possibly 
the symbol $\star$). 

In every network use, an intermediate vertex collects packets over the 
incoming edges, processes them,
and sends out packets over the outgoing edges. The outgoing packets are a 
\textit{function} of the incoming packets.
We assume that the intermediate vertices are \textit{memoryless},
and all network
transmissions are \textit{delay-free}.

Removing idle edges if necessary, we may also assume without loss of generality 
that every vertex sends information packets over every 
outgoing edge. Along with property~\ref{prnG} of Definition~\ref{defnetwork}, this  
ensures in particular that every non-source vertex 
receives a packet on each of its incoming edges.

\begin{notation}
In the sequel we work with a fixed combinational 
network $\mN=(\mV,\mE,{\bf S},{\bf T})$. We choose an enumeration for 
the sources of $\mN$, say ${\bf S}=\{S_1,...,S_N\}$, and 
let $I:=\{1,...,N\}$ be the set of source indices.
For $J \subseteq I$, we let ${\bf S}_J:=\{S_i : i \in J\}$.
For all $i \in I$, the \textbf{local input alphabet} of source $S_i \in 
{\bf S}$ is $\mX_i:= \mA^{|\out(S_i)|}$.
The \textbf{global input alphabet} of
 the network $\mN$ is
$\mX:= \mX_1 \times \cdots \times \mX_N$. 
\end{notation}

The topology of $\mN$ induces a partial order $\preceq$ on $\mE$
as follows. For 
 $e, e' \in \mE$, we have $e \preceq e'$ if and only if there exists 
a directed path in $\mN$ whose first edge is $e$ and whose last edge is $e'$.
Since $\mN$ is a directed and acyclic graph, $\preceq$ is well defined. Such partial 
order can be extended to a total order on $\mE$
(see~\cite[Section 22]{sort}). Throughout this paper, we fix such an order extension 
and denote it by $\le$. 

\begin{remark}
The choice of the linear order $\le$ will prevent ambiguities
in the interpretation of the objects associated with $\mN$.
None of the results in this paper depends on the specific choice of $\le$.
\end{remark}

\subsection{Network Codes and Network Channels}

As mentioned in Subsection~\ref{subnetworks}, each intermediate vertex 
of $\mN$ applies some function to the incoming packets, 
and transmits the value of this function on the outgoing edges.
The collection of operations that the intermediate vertices perform
is called the network code.

\begin{definition} \label{defK}
A \textbf{network code} $\mF$ for $\mN$ is a family of functions $\{\mF_V : 
V \in \mV \setminus ({\bf S} \cup {\bf T})\}$, where 
$$\mF_V: {\hat\mA}^{|\inn(V)|} \to \mA^{|\out(V)|} \quad \mbox{ for all $V \in \mV \setminus({\bf S} \cup {\bf T})$}.$$

Assume $\mA=\F_q^m$ for some prime power $q$ and some integer $m \ge 1$. Then we say that $\mF$ is 
\textbf{linear} if for all vertices $V \in \mV \setminus ({\bf S} \cup {\bf T})$
there exists a $|\inn(V)| \times |\out(V)|$ matrix $L_V$ over $\F_q$
such that $\mF_V(x_1,...,x_r) = (y_1,...,y_s)$
for all $(x_1,...,x_r) \in (\F_q^m)^r$, where
$r:= |\inn(V)|$, $s:=|\out(V)|$, and 
$$\begin{pmatrix} y_1^\top & \cdots & y_s^\top \end{pmatrix}:= \begin{pmatrix} x_1^\top & \cdots & x_r^\top \end{pmatrix} \cdot L_V.$$
In other words, we require that the restriction of $\mF_V: \hat{\mA}^r \to \mA^s$
to $\mA^r = (\F_q^m)^r$ is a function that returns linear combinations of the $r$
input vectors from $\F_q^m$. 
When $\mF$ is a linear network code and $V \in \mV \setminus ({\bf S} \cup {\bf T})$, we sometimes denote the matrix $L_V$ simply by $\mF_V$.
\end{definition}

A collection of functions as in Definition~\ref{defK} fully specifies the  
operations performed by the vertices of $\mN$ with no ambiguity, thanks to the 
choice of the linear order $\le$. Note that in this paper we do not require our network codes to be linear (the alphabet $\mA$ may not even be of the form $\mA=\F_q^m$).

In the remainder of the section we define a series of deterministic channels 
that are naturally induced by a network code 
$\mF$ for $\mN$.

\begin{definition} \label{def11}
Let $\mF$ be a network code for $\mN$, and let $\mE' \subseteq \mE$ be a non-empty set of edges.
In an error-free context, by definition of network code, the values of the edges from $\mE'$ can be 
expressed as function of the sources input $x \in \mX=\mX_1 \times \cdots \mX_N$. 
This defines a deterministic channel
$$\Omega_\mF[{\bf S} \to \mE'] : \prod_{i \in I} \mX_i \dto \mA^{|\mE'|},$$
that associates to $x \in \mX$ the alphabet packets observed over the edges in $\mE'$, ordered according to $\le$. 
Note that $\Omega_\mF[{\bf S} \to \mE'](x) \subseteq \mA^{|\mE'|}$ is a set of cardinality one for all $x \in \mX=\mX_1 \times \cdots \mX_N$.

Now assume that $T \in {\bf T}$, and that $\mE'$ is a cut that separates all sources $S_1,...,S_N$ from $T$. We will construct a deterministic channel that describes the transfer from the edges in $\mE'$ to the terminal $T$.

Recall from Poset Theory that $e \in \mE$ \textbf{covers} $e' \in \mE$ when $e' \precneq e$, and there is no $e'' \in \mE$ with $e' \precneq e'' \precneq e$. We recursively define the sets of edges
$$\mE'_0:=\inn(T), \quad  \mE'_{k+1}:=(\mE' \cap \mE'_k) \cup \{e' \in \mE :  
\mbox{ there exists  } e \in \mE'_k \setminus \mE' \mbox{ that covers } e' \}
\mbox{ for $k \in \N_{\ge 1}$}.$$
For all $k \in \N$, $\mE'_{k+1}$ is the union of $\mE' \cap \mE'_k$ and the set of network edges
that \textit{immediately precede} the edges in $\mE'_k \setminus \mE'$.
Let $\overline{k}:= \min \{ k \in \N: \mE'_k \subseteq \mE'\}$, where the minimum is well defined 
by property~\ref{prnG} of Definition~\ref{defnetwork} and the fact that $\mE'$ is a cut that separates $\bf S$ and $T$.

For all $0 \le k \le \overline{k}-1$, the values of the edges in $\mE'_k$ can be expressed, via the network code $\mF$, as
a function of the values of the edges in $\mE'_{k+1}$. This defines $\overline{k}$ functions
$\Psi_\mF[k]: \hat\mA^{|\mE_{k+1}|} \to  \mA^{|\mE_{k}|}$,
for $0 \le k \le \overline{k}-1$
 (see also the following Example~\ref{extransfer}). Note that we define $\Psi_\mF[k]$ to be the identity on the edges in the set $\mE'_k \cap \mE'_{k+1}$. Now the 
composition  
$\Psi:=\Psi_\mF[{\overline{k}-1}] \circ \cdots \circ \Psi_\mF[1] \circ \Psi_\mF[0]$
expresses the values of the edges of $\inn(T)$ as a function of the values of the 
edges from $\mE'_{\overline{k}} \subseteq \mE'$. We trivially 
extend $\Psi$ to a function
$$\Psi:\mA^{|\mE'|} \to \mA^{|\inn(T)|}$$ that expresses the values of the edges from $\inn(T)$ as a function of the values of the 
edges from the edge-cut $\mE'$ (by ``trivially'' we mean that the value of the edges from $\mE' \setminus \mE'_{\overline{k}}$ play no role in the definition of $\Psi$). This induces 
a deterministic channel denoted by $$\Omega_\mF[\mE' \to \inn(T)] : {\hat\mA}^{|\mE'|}  \dto \mA^{|\inn(T)|}.$$
\end{definition}

\begin{example} \label{extransfer}
Let $\mN$ be the network in Figure~\ref{figextransfer}. The edges
of $\mN$ are ordered according to their indices. The set
$\mE':=\{e_2,e_5\}$ is an edge-cut between ${\bf S}=\{S\}$ and $T$. 
The edges in $\mE'$ are represented by a 
dashed arrow in Figure~\ref{figextransfer}. Fix a network code $\mF$ for $\mN$.

\begin{figure}[htbp]
\centering
\begin{tikzpicture}[>=stealth,
vertex/.style = {shape=circle,draw,inner sep=0pt,minimum size=1.9em},
nnode/.style = {shape=circle,fill=myg,draw,inner sep=0pt,minimum size=1.9em}]

\node[vertex] (S)  {$S$};
\node[nnode,right=\mynodespace of S] (V1) {$V_1$};
\node[nnode,right=\mynodespace of V1] (V2) {$V_2$};
\node[nnode,right=\mynodespace of V2] (V3)  {$V_3$};
\node[vertex,right=\mynodespace of V3] (T)  {$T$};
\draw[->,bend left=20] (S)  to node[fill=white,inner sep=1pt]{\small $e_1$} (V2);
\draw[->,dashed] (S) to  node[fill=white,inner sep=1pt]{\small $e_2$} (V1);
\draw (V1) to  node[fill=white,inner sep=1pt]{\small $e_3$} (V2);
\draw[->,bend right=20] (V1) to node[fill=white,inner sep=1pt]{\small $e_4$} (V3);
\draw[->,dashed] (V2) to node[fill=white,inner sep=1pt]{\small $e_5$} (V3);
\draw[->,bend left=15] (V3) to node[fill=white,inner sep=1pt]{\small $e_6$} (T);
\draw[->,bend right=15] (V3) to node[fill=white,inner sep=1pt]{\small $e_7$} (T);
\end{tikzpicture} 
\caption{{{Network for Example~\ref{extransfer}.}}}\label{figextransfer}
\end{figure}

Following the notation of Definition~\ref{def11}, we have 
$\mE'_0=\inn(T)=\{e_6,e_7\}$, $\mE'_1=\{e_4,e_5\}$, and $\mE'_2=\{e_2,e_5\}=\mE'$.
Hence $\overline{k}=2$. The functions $\Psi_\mF[0]$ and $\Psi_\mF[1]$ are as follows:
$$
\Psi_\mF[0] (x_4,x_5)= \mF_{V_3}(x_4,x_5) \in \mA^2 \quad \mbox{ and } \quad \Psi_\mF[1] (x_2,x_5) = (\mF''_{V_1}(x_2),x_5) \in \mA^2$$
 for all $(x_4,x_5) \in \hat\mA^{|\mE'_1|} = \hat\mA^2$ and  $(x_1,x_5) \in 
\hat\mA^{|\mE'_2|} = \hat\mA^2$, and 
where $\mF''_{V_1}(x_2)$ is the second component of the vector $\mF_{V_1}(x_2) \in \mA^2$.
Now observe that $\left( \Psi_\mF[1] \circ \Psi_\mF[0] \right)(x_2,x_5)= \mF_{V_3} \left( \mF_{V_1}''(x_2),x_5 \right)  \in \mA^2$
for all $(x_2,x_5) \in  \hat\mA^2$,
which expresses the values of the edges of $\inn(T)$ as a function of the values of the edges of  $\mE'$. Therefore one has
$\Omega_\mF[\mE' \to \inn(T)](x_2,x_5)=\{\mF_{V_3} \left( \mF_{V_1}''(x_2),x_5 \right)\} \subseteq \mA^2$ for all $(x_2,x_5) \in 
\hat\mA^2$.
\end{example}

\begin{remark}\label{noantichain}
The deterministic channels introduced in Definition~\ref{def11} are well defined for any 
 edge-cut $\mE'$ that separates the network sources from $T$. In particular, we do not require $\mE'$ to be a minimum 
cut or an antichain cut (i.e., a cut where every two distinct edges are not comparable with respect to the order $\preceq$).
For example, the cut $\mE'$ of Example~\ref{extransfer} is not an antichain, 
as $e_2 \preceq e_5$. Observe moreover that whenever 
 $\mE'$ is a non-antichain cut,  
the channel $\Omega_\mF[\mE' \to \inn(T)]$ automatically gives ``priority'' to the edges in $\mE'$ that are 
``closer'' to the destination $T$ in the network topology.

The specific definition of the channel
$\Omega_\mF[\mE' \to \inn(T)]$ makes it so
that all the results of this paper (such as decomposition results and upper bounds) also hold for edge-cuts that do not form
an antichain with respect to $\preceq$. This is different, e.g., from the general approach of~\cite{YeungCai},
where some of the cut-set bounds are derived only for the special case of antichain edge-cuts (see
in particular~\cite[Theorem 3]{CaiYeung}).
\end{remark}

\begin{remark} \label{important1}
If $T \in {\bf T}$ and 
$\mE'$ is a cut between $\bf S$  and $T$, then the channel describing the transfer 
from $\bf S$ to $T$ 
factors through the cut $\mE'$, i.e., we have 
$$\Omega_\mF[{\bf S} \to \inn(T)]  = \Omega_\mF[
{\bf S} \to \mE']  \concat   \Omega_\mF[
\mE' \to \inn(T)].$$
Notice that for a non-antichain cut $\mE'$ the above channel decomposition does not simply follow
from the fact that $\mE'$ is a cut between $\bf S$ and $\inn(T)$, and 
heavily relies on the specific 
construction of the channel
$\Omega_\mF[
\mE' \to \inn(T)]$ given in Definition \ref{def11}.
\end{remark}

\begin{example}
Following the notation of Example~\ref{extransfer},  it is easy to see that 
for all $(x_1,x_2) \in \mA^2$
 we have 
$\Omega_\mF[{\bf S} \to \mE'] (x_1,x_2)=\{(x_2,\mF_{V_2}(x_1,\mF'_{V_1}(x_2)))\}$, 
where $\mF'_{V_1}(x_2)$ is the first component of $\mF_{V_1}(x_2) \in \mA^2$.
Using the expression for $\Omega_\mF[
\mE' \to \inn(T)]$ found in Example~\ref{extransfer} we obtain
$$(\Omega_\mF[
{\bf S} \to \mE']  \concat \Omega_\mF[
\mE' \to \inn(T)])\,(x_1,x_2) = \{\mF_{V_3}(\mF''_{V_1}(x_2), 
\mF_{V_2}(x_1,\mF'_{V_1}(x_2)))\} \mbox{ for all $(x_1,x_2) \in \mA^2$},$$
which expresses the values of the edges from $\inn(T)$ as a function of the values 
of the source edges.
\end{example}

We now examine channels that are obtained by ``freezing''
the packets emitted by some of the network sources. 
These deterministic channels
will play an important role in the study of the capacity region of adversarial networks
in Section~\ref{secportinglemma}. 

\begin{definition} \label{def22}
Let $\mF$ be a network code for $\mN$, and let $J \subsetneq I$ be
 a proper non-empty subset of source indices.
Fix an element $\overline{x} \in \prod_{i \in I \setminus J} \mX_i$, and 
let $\mE' \subseteq \mE$ be non-empty. We denote by 
$$\Omega_\mF^J[{\bf S}_J \to \mE' \mid \overline{x}]: \prod_{i \in J} \mX_i \dto \mA^{|\mE'|}$$
the deterministic channel that associates to an input $x \in \prod_{i \in J} \mX_i$
the values of the edges from $\mE'$ when
the sources in ${\bf S}_J$ emit the corresponding packets from $x \in \prod_{i \in J} \mX_i$, and 
the sources in ${\bf S}_{I \setminus J}$ emit the corresponding (fixed) packets 
from $\overline{x} \in \prod_{i \in I \setminus J} \mX_i$.

Let $T \in {\bf T}$ be a terminal, and let $\mE'$ be a cut that separates ${\bf S}_J$ from $T$. Reasoning as in Definition~\ref{def11}, it is easy to see that 
the values of the 
edges from $\inn(T)$ can be expressed, given the network code $\mF$, as a function of the 
values of the edges from $\mE'$ and by $\overline{x}$.
This defines a deterministic channel
 $$\Omega_\mF^J[\mE' \to \inn(T) \mid \overline{x}] : {\hat\mA}^{|\mE'|}  \dto \mA^{|\inn(T)|}.$$
 \end{definition}

\begin{remark} \label{important2}
The channel describing the transfer from ${\bf S}_J$ to $T$, given the 
 input $\overline{x}$, again factors through any cut
$\mE'$ between ${\bf S}_J$ and $T$, i.e., 
 $$\Omega_\mF^J[{\bf S}_J \to \inn(T) \mid \overline{x}]  =  
 \Omega_\mF^J[{\bf S}_J \to \mE' \mid \overline{x}]  \concat  \Omega_\mF^J[
\mE' \to \inn(T) \mid \overline{x}].$$
\end{remark}

\subsection{Network Adversaries}
We now model combinatorially an adversary capable of corrupting the values
of the edges of $\mN$, or erase them, according to certain restrictions.
We denote such an adversary by $\adv$, and call the pair $(\mN,\adv)$
an \textbf{adversarial network}. The vertices still process the 
incoming packets using  a network code $\mF$.

We start by defining some adversarial analogues of the deterministic channels
introduced in Definitions~\ref{def11} and~\ref{def22}.

\begin{definition} \label{defadvchN}
Let $\mF$ be a network code for $\mN$, and let $T \in {\bf T}$ be a terminal.
We denote by 
$$ \Omega_\mF[\adv;{\bf S} \to \inn(T)] : \prod_{i \in I} \mX_i
\dto {\hat\mA}^{|\inn(T)|}$$
the channel that associates to an input $x \in \prod_{i \in I} \mX_i$ the variety
of packets that can be observed over the edges of $\inn(T)$ in the presence of the adversary  
$\adv$. Similarly, when $J \subseteq I$ is a non-empty subset of source indices, we denote by 
$$\Omega_\mF^J[\adv;{\bf S}_J \to \inn(T) \mid \overline{x}]: \prod_{i \in J} \mX_i 
\dto \hat\mA^{|\inn(T)|}$$
the channel that associates to $x \in \prod_{i \in J} \mX_i$
the variety of packets that can be observed over the edges from $\mE'$ when, in the presence of the adversary $\adv$, the sources in ${\bf S}_J$ emit the corresponding packets from $x \in \prod_{i \in J} \mX_i$,  and the sources in ${\bf S}_{I \setminus J}$ emit the corresponding (fixed) packets 
from $\overline{x} \in \prod_{i \in I \setminus J} \mX_i$.
\end{definition}

Clearly, the adversarial channels introduced in 
Definition~\ref{defadvchN} 
are not deterministic in general.


\begin{example} \label{dex1}
Let $\mN$ be the network depicted in 
 Figure~\ref{figex1}. We order the edges 
of $\mN$ according to their indices. The input alphabet is 
$\mX=\mX_1 \times \mX_2=\mA^2 \times \mA$.
A network code $\mF$ for $\mN$ is the assignment of a function 
$\mF_V:{\hat\mA}^2 \times {\hat\mA} \to \mA^4$.

\setlength{\mynodespace}{7.5em}
\begin{figure}[htbp]
\centering
\begin{tikzpicture}
\tikzset{vertex/.style = {shape=circle,draw,inner sep=0pt,minimum size=1.9em}}
\tikzset{nnode/.style = {shape=circle,fill=myg,draw,inner sep=0pt,minimum
size=1.9em}}
\tikzset{edge/.style = {->,>= stealth}}
\tikzset{dedge/.style = {densely dotted,->,> = stealth}}

\node[nnode] (N) at (0,0) {$V$};
\node[vertex, right=\mynodespace of N] (R) {$T$};
\node[vertex] (S1) at (165:\mynodespace) {$S_1$};
\node[vertex] (S2) at (195:\mynodespace) {$S_2$};
\draw[edge, bend left=10] (S1)  to node[fill=white,sloped,inner sep=1pt]{\small $e_1$} (N);
\draw[edge, bend right=10] (S1) to  node[fill=white,sloped,inner sep=1pt]{\small $e_2$} (N);
\draw[edge] (S2) to node[fill=white,sloped,inner sep=1pt]{\small $e_3$} (N);
\draw[dedge, bend left=30] (N) to node[fill=white,sloped,inner sep=1pt]{\small $e_4$} (R);
\draw[edge, bend left=10] (N) to node[fill=white,sloped,inner sep=1pt]{\small $e_5$} (R);
\draw[dedge, bend right=10] (N) to node[fill=white,sloped,inner sep=1pt]{\small $e_6$} (R);
\draw[dedge, bend right=30] (N) to node[fill=white,sloped,inner sep=1pt]{\small $e_7$} (R);
\end{tikzpicture} 
\caption{{{Network for Example~\ref{dex1}.}}}\label{figex1}
\end{figure}


Consider an adversary $\adv$ able to erase at most one of the 
values of the dotted edges
of $\mN$, i.e., $\{e_4,e_6,e_7\}$. This scenario is modeled by a 
channel $\Omega_\mF[\adv;{\bf S} \to \inn(T)]: \mA^2 \times \mA 
\dto {\hat\mA}^4$. We can  describe this channel as follows. If $(x_1,x_2,x_3) \in 
\mA^2 \times \mA$ and $z:=\mF_V(x_1,x_2,x_3) \in \mA^4$, then
$$\Omega_\mF[\adv;{\bf S} \to \inn(T)](x_1,x_2,x_3)= \{y \in {\hat\mA}^4 : y_2 = z_2
  \mbox{ and  }  \omega_\star(y;\{1,2,3,4\}) \le 1\},$$
 where $\omega_\star$ is the erasure weight of
 Definition~\ref{defomegastar}.
 Now let $J:=\{1\}$, and assume that $S_2$ emits a fixed element 
 $\overline{x}_3 \in \mA$. The channel $\Omega_\mF^J[\adv;
 {\bf S}_J \to \inn(T) \mid \overline{x}_3]$ is as follows. 
 If $(x_1,x_2) \in \mA^2$, and $\overline{z}:=\mF_V(x_1,x_2,\overline{x}_3) \in \mA^4$,
 then  $
 \Omega^J_\mF[\adv;{\bf S} \to \inn(T)](x_1,x_2)= \{y \in {\hat\mA}^4 : y_2 =\overline{z}_2
  \mbox{ and  }  \omega_\star(y;\{1,2,3,4\}) \le 1\}$.
 \end{example}
 
 We conclude this subsection by introducing
 a general class of network adversaries, acting on sets of edges 
 with limited error and erasure power.

\begin{notation}\label{edgespecific}
Let $\mU \subseteq \mE$ be a set of network edges, and let $t,e \ge 0$ be 
integers. We denote by $\AP{t}{e}{\mU}$
an adversary having access only to edges from $\mU$, being able to erase 
up to $e$ of them, and change the values of up to $t$ of them with different
symbols from $\mA$.

Let $L \ge 1$ be an integer, and let $\mU_1,...,\mU_L \subseteq \mE$ be pairwise 
disjoint subsets of network edges. Let $t_1,...,t_L$
and $e_1,...,e_L$  be non-negative integers.
Define  the $L$-tuples $\bm{\mU}:=(\mU_1,...,\mU_L)$, $\bm{t}:=(t_1,...,t_L)$ and $\bm{e}:=(e_1,...,e_L)$.
We denote by $\AB{t}{e}{\mU}$ the adversary representing the scenario where all the 
$\AP{t_\ell}{e_\ell}{\mU_\ell}$'s act simultaneously on the network, possibly with coordination. 
\end{notation}

\subsection{Capacity Regions of Adversarial Networks} \label{subcapacityregions}

In this section, in analogy with the theory of point-to-point adversarial channels,
we propose definitions of one-shot, zero-error, and  
compound zero-error capacity regions
of adversarial networks. 

The one-shot capacity region, as the name suggests,
measures the number of alphabet symbols that the sources can 
multicast to all terminals in a single use of the network.
The zero-error capacity region measures the average number 
of packets that can be multicasted to the terminals per network use, in the limit
where the number of uses goes to infinity. Finally, the compound 
zero-error capacity regions measures the average number 
of packets that can be multicasted to the terminals per network use, in the limit
where the number of uses goes to infinity, and in the scenario where the adversary 
is forced to operate on the same (unknown) set of edges in every network use.

\begin{remark}
A definition of capacity region was proposed in~\cite[Section IV]{MANIAC} in the context of 
adversarial random network coding. The definition of~\cite{MANIAC} is probabilistic,
and best suited to study random linear approaches to multicast. 
In particular, it
does not coincide with the notion of capacity originally proposed in~\cite{YeungCai} 
for single-source adversarial networks (see, e.g.,~\cite[Theorem 1]{YeungCai}).
Inspired by the work of Shannon~\cite{shannon_zero}, our approach to adversarial 
network channels has a 
combinatorial flavor, rather than a probabilistic one.
As a consequence,  the three notions of capacity region introduced in this paper 
are all different from the one proposed in~\cite{MANIAC}, and thus require 
an independent analysis.
\end{remark}

\begin{definition} \label{netcap}
The (\textbf{one shot}) \textbf{capacity region} of $(\mN,\adv)$ is 
the set $\reg(\mN,\adv)$ of all the 
$N$-tuples $(\alpha_1,...,\alpha_N) \in \R_{\ge 0}^N$ for which 
there exist a network code 
$\mF$ for $\mN$ and non-empty
sets $\mC_i \subseteq \mX_i$, for $1 \le i \le N$, with the following properties:
\begin{enumerate} \setlength\itemsep{0em}
\item $\log_{|\mA|}|\mC_i|=\alpha_i$,
\item $\mC=\mC_1 \times \cdots \times \mC_N$ is good  
for each channel $\Omega_\mF[\adv;{\bf S} \to \inn(T)]$, $T \in {\bf T}$. 
\end{enumerate}
We say that such a pair $(\mF,\mC)$  
\textbf{achieves} the \textbf{rate} $(\alpha_1,...,\alpha_N)$ {in one shot}.
\end{definition}

The conditions in Definition~\ref{netcap} guarantee that the 
 sources can transmit in one shot to each of the sinks
$\alpha_1+ \cdots + \alpha_N$ alphabet symbols, $\alpha_i$ of which  are emitted
by $S_i$, for $1 \le i \le N$. Note that the code $\mC$ is 
by definition a cartesian product. This models the scenario where
 the sources cannot coordinate.

\begin{remark}
Let $n \ge 1$ be an integer. The $n$-th cartesian power 
$\mX^n = (\mX_1 \times \cdots \times \mX_N)^n$
can be naturally identified with the set $\mX_1^n \times \cdots
 \times \mX_N^n$. Therefore if $T \in {\bf T}$ and $\mF^1,...,\mF^n$ are network codes for 
$\mN$, we can see a product channel of the form 
$\prod_{k=1}^n \Omega_{\mF^k}[\adv;{\bf S} \to \inn(T)]$ as a channel
\begin{equation} \label{interpr}
\prod_{k=1}^n \Omega_{\mF^k}[\adv;{\bf S} \to \inn(T)]:~
\mX_1^n \times \cdots \times \mX_N^n \dto \hat\mA^{|\inn(T)|}.
\end{equation}
\end{remark}

We now introduce the zero-error capacity region.

\begin{definition} \label{netcap0}
The \textbf{zero-error capacity region} of $(\mN,\adv)$ is the  
 closure $\overline{\regz(\mN,\adv)}$ of the set $\regz(\mN,\adv)$ of all the 
$N$-tuples $(\alpha_1,...,\alpha_N) \in \R_{\ge 0}^N$ for which there exist:
\begin{itemize}\setlength\itemsep{0em}
\item an integer $n \ge 1$,
\item network codes $\mF^1,...,\mF^n$ for $\mN$,
\item non-empty sets $\mC_i \subseteq \mX_i^n$, for $1 \le i \le N$,
\end{itemize}
with the following properties:
\begin{enumerate}\setlength\itemsep{-0.8em}
\item $\log_{|\mA|}|\mC_i|=n \cdot \alpha_i$,
\item $\mC=\mC_1 \times \cdots \times \mC_N$ is good  
for each channel 
$\displaystyle \prod_{k=1}^n \Omega_{\mF^k}[\adv;{\bf S} \to \inn(T)], 
\mbox{$T \in {\bf T}$}$, in the sense of (\ref{interpr}). 
\end{enumerate}
\end{definition}

\begin{remark}\label{rmkisdifferent}
When the network vertices operate in a memoryless way (as in our setup) or, more generally, in a causal way,  then 
doubling the length of the network messages
(i.e., assuming that the alphabet is 
$\mA \times \mA$ instead of $\mA$)
 does not model two uses of the network, as the following example shows. 
More generally, taking the $n$-th power of the 
the network alphabet does not model $n$ network uses. 
As a consequence,  the zero-error capacity region is not the same as
the one-shot capacity region, in the limit where 
the ``length'' of the network alphabet goes to infinity. 
These two capacity concepts are different in general, and require 
independent analyses.
\end{remark}

\begin{example} \label{ex_isdifferent}
Let $\mN$ be the network in Figure~\ref{figex_isdifferent}, and let $\mA:=\F_2$ be the network alphabet.
The edges are ordered according to their indices. Consider an adversary on $\mN$
who can only operate on the edges from 
$\mE'=\{e_3, e_4,e_5,e_6\}$ by possibly corrupting the value of one of these edges.
The action of the adversary is described by the channel 
$\HH:\F_2^4 \dto \F_2^4$ of Examples~\ref{firstexample} and~\ref{fee}, given by
$\HH(x)=\{y \in \F_2^4 : \dH(y,x) \le 1\}$
 for all $x \in \F_2^4$.
Observe that a network code for $\mN$ is the assignment of a function
$\mF_V:\F_2^2 \to \F_2^4$, as erasures are excluded from the model.

\begin{figure}[htbp]
  \centering
     \begin{tikzpicture}
\tikzset{vertex/.style = {shape=circle,draw,inner sep=0pt,minimum size=1.9em}}
\tikzset{nnode/.style = {shape=circle,fill=myg,draw,inner sep=0pt,minimum
size=1.9em}}
\tikzset{edge/.style = {->,> = stealth}}
\tikzset{dedge/.style = {densely dotted,->,> = stealth}}

\node[nnode] (V) at  (0,0) {$V$};
\node[vertex, left=\mynodespace of V] (S) {$S$};
\node[vertex, right=\mynodespace of V] (T)  {$T$};

\draw[edge, bend left=18] (S)  to node[fill=white,inner sep=1pt,sloped]{\small $e_1$} (V);
\draw[edge, bend right=18] (S)  to node[fill=white,inner sep=1pt,sloped]{\small $e_2$} (V);
\draw[dedge, bend left=30] (V) to  node[fill=white,inner sep=1pt,sloped]{\small $e_3$} (T);
\draw[dedge, bend left=10] (V) to  node[fill=white,inner sep=1pt,sloped]{\small $e_4$} (T);
\draw[dedge, bend right=10] (V) to  node[fill=white,inner sep=1pt,sloped]{\small $e_5$} (T);
\draw[dedge, bend right=30] (V) to  node[fill=white,inner sep=1pt,sloped]{\small $e_6$} (T);
\end{tikzpicture} 
\caption{{{Network for Example~\ref{ex_isdifferent}.}}}\label{figex_isdifferent}
\end{figure}

Assume that the network is used twice, with network codes
$\mF^1$ and $\mF^2$ in the first and second channel use, respectively.
The channel describing the two uses of the network is the product channel 
$$\Omega_1:=(\Omega_{\mF^1}[{\bf S} \to \mE'] \concat \HH) \times 
(\Omega_{\mF^2}[{\bf S} \to \mE'] \concat \HH): 
\F_2^2 \times \F_2^2 \dto \F_2^4 \times \F_2^4.$$
Note that for all $x=(x_1,x_2,x_3,x_4) \in \F_2^2 \times \F_2^2$ we have
$\Omega_1(x)=\HH^2 (\mF_V^1(x_1,x_2),\mF_V^2(x_3,x_4))$.
We will show that $\CA(\Omega_1) < \log_2 5$.
Assume by contradiction that there exists a code $\mC \subseteq \F_2^2 \times \F_2^2$
with $|\mC|=5$ and which is good for $\Omega_1$. Then 
$\overline{\mC}:=\{(\mF_V^1(x_1,x_2),\mF_V^2(x_3,x_4)) : x=(x_1,x_2,x_3,x_4) \in \mC\} \subseteq \F_2^4 \times \F_2^4$
is a good code for the channel $\HH^2$ of cardinality five. Since $|\mC|=5$, there must exist $x,x' \in \mC$ with
$x \neq x'$ and $(x_1,x_2)=(x_1',x_2')$. Therefore $\overline{\mC} \subseteq \F_2^4 \times \F_2^4$ is a good code for 
$\HH^2$ of cardinality five which has two distinct codewords that coincide in the first four components. However, as shown in Example~\ref{fee}, there is no such a code.
Thus $\CA(\Omega_1) < \log_2 5$, as claimed.

Now assume that the edges can carry symbols from $\F_2 \times \F_2$.
Using Example~\ref{fee} one can show that there exists a function
 $\mF_V: \F_2^2 \times \F_2^2 \to \F_2^4 \times \F_2^4$ such that
  $\Omega_2:=\Omega_\mF[{\bf S} \to \mE'] \concat \HH^2 : \F_2^2 \times \F_2^2 \dto \F_2^4 \times \F_2^4$
has capacity $\CA(\Omega_2) \ge \log_2 5$. Therefore  
$\CA(\Omega_2) > \CA(\Omega_1)$.
\end{example}

We now focus on composite adversaries of type $\AB{t}{e}{\mU}$, where 
$L\ge 1$, and $\bm{\mU}$, $\bm{t}$ and $\bm{e}$ are as in Notation~\ref{edgespecific}.
Consider the scenario where the network $\mN$ is used multiple times, and 
where the $L$ network adversaries $\AP{t_1}{e_1}{\mU_1}, ..., \AP{t_L}{e_L}{\mU_L}$ 
defining $\AB{t}{e}{\mU}$ are 
forced to operate on the same set of edges in each use of the network. In analogy with 
Definition~\ref{defcompound}, this scenario naturally motivates the following definition.

\begin{definition}\label{defcompoundregion}
Let $L\ge 1$, $\bm{\mU}$, $\bm{t}$ and $\bm{e}$ be as in Notation~\ref{edgespecific}, 
and assume $\adv=\AB{t}{e}{\mU}$.
The \textbf{compound zero-error capacity region} of $(\mN,\adv)$ is the 
 closure $\overline{\regzr(\mN,\adv)}$ of the set 
 $\regzr(\mN,\adv)$ of all the 
$N$-tuples $(\alpha_1,...,\alpha_N) \in \R_{\ge 0}^N$ for which there exist:
\begin{itemize}\setlength\itemsep{0em}
\item an integer $n \ge 1$,
\item network codes $\mF^1,...,\mF^n$ for $\mN$,
\item non-empty sets $\mC_i \subseteq \mX_i^n$, for $1 \le i \le N$,
\end{itemize}
with the following properties:
\begin{enumerate}\setlength\itemsep{-0.8em}
\item $\log_{|\mA|}|\mC_i|=n \cdot \alpha_i$,
\item $\mC=\mC_1 \times \cdots \times \mC_N$ is good  
for each channel 
$\displaystyle
\bigcup_{\substack{ \bm{\mV} \subseteq \bm{\mU} \\
|\bm{\mV}| \le \bm{t}+\bm e}}  \prod_{k=1}^n \Omega_{\mF^k}[\AB{t}{e}{\mV};{\bf S} \to \inn(T)], \, \mbox{$T \in {\bf T}$}.$
\end{enumerate}
\end{definition}

\begin{remark} \label{rmkregions}
It follows from the definitions that $\mR(\mN,\adv) \subseteq
 \regz(\mN,\adv) \subseteq \overline{\regz(\mN,\adv)}$ for any adversary~$\adv$.
Moreover, if $\adv$ is of the form $\adv=\AB{t}{e}{\mU}$, then 
$\regz(\mN,\adv) \subseteq \regzr(\mN,\adv)$.
In particular, we have 
$\overline{\regz(\mN,\adv)} \subseteq \overline{\regzr(\mN,\adv)}$.
This property formalizes the fact that, in the compound model, the adversaries are forced to operate on the same edges in each channel use, and therefore their power is possibly reduced. 
\end{remark}


\section{Porting Bounds for Hamming-Type Channels to Networks}
\label{secportinglemma}

In this section we show how the properties of channel products, concatenations and unions 
established in Section~\ref{secoperations} can be combined with each other to obtain 
upper bounds for the three capacity regions of an adversarial network.
The idea behind our approach is to derive cut-set bounds by porting, in 
a systematic manner, upper bounds for the three capacities of Hamming-type 
channels to networks.

\begin{notation}
In the sequel we follow the notation of Section~\ref{secnetwork}, and
 fix a network $\mN=(\mV,\mE,{\bf S},{\bf T})$
with alphabet $\mA$.
The sources 
are ${\bf S}=\{S_1,...,S_N\}$, and $I=\{1,...,N\}$ is the set of source indices. The 
alphabet of source $S_i$ is $\mX_i:=\mA^{|\out(S_i)|}$, for
$1 \le i \le N$, and $\mX:=\mX_1 \times \cdots \times \mX_N$.
We work with a generic adversary 
 $\adv$ and a fixed linear extension $\le$
 of the network partial order $\preceq$.
\end{notation}

\subsection{Method Description}
\label{submethod}

Suppose that we are interested in describing the one-shot capacity region 
$\mR(\mN,\adv) \subseteq \R_{\ge 0}^N$ of $(\mN,\adv)$. Let
$(\alpha_1,...,\alpha_N) \in \mR(\mN,\adv)$ be arbitrary. By Definition~\ref{netcap}, 
there exist a network code $\mF$ for $\mN$
and a code $\mC=\mC_1 \times \cdots \times \mC_N
\subseteq \mX_1 \times \cdots \times \mX_N$ such that 
$(\mF,\mC)$ achieves $(\alpha_1,...,\alpha_N)$ in one shot.

Let $J \subsetneq I$ be a non-empty subset of source indices, and let 
$T \in {\bf T}$ be a terminal. Take a cut $\mE' \subseteq \mE$ that separates ${\bf S}_J$
from $T$, and fix an element $\overline{x} \in 
\prod_{i \in I \setminus J} \mC_i$. 
Assume for the moment that there is no adversary $\adv$. As observed in Remark~\ref{important2}, the channel
$\Omega_\mF^J[{\bf S}_J \to \inn(T) \mid \overline{x}]$ decomposes as
follows:
\begin{equation} \label{decident}
\Omega_\mF^J[{\bf S}_J \to \inn(T) \mid \overline{x}]  =  \Omega_\mF^J[
{\bf S}_J \to \mE' \mid \overline{x}]  \concat  \Omega_\mF^J[
\mE' \to \inn(T) \mid \overline{x}].
\end{equation}
Denote 
by $\Id[\mE' \to \mE']:{\mA}^{|\mE'|} \dto {\hat\mA}^{|\mE'|}$ the 
identity channel (see Example~\ref{exid}),
and observe that equation (\ref{decident}) can be trivially re-written as
\begin{equation} \label{decident2}
\Omega_\mF^J[{\bf S}_J \to \inn(T) \mid \overline{x}]  =  \Omega_\mF^J[
{\bf S}_J \to \mE' \mid \overline{x}]  \concat  \Id[\mE' \to \mE']  \concat  \Omega_\mF^J[
\mE' \to \inn(T) \mid \overline{x}].
\end{equation}

We now let the adversary $\adv$ act only on the cut $\mE'$,
and assume that its action on the edges of $\mE'$ can be described by a channel 
$\Omega[\adv;\mE' \to \mE']: \mA^{|\mE'|} \dto {\hat\mA}^{|\mE'|}$. 

For example, if $\adv$ was originally able to corrupt 
the values of up to $t$ edges of $\mN$, we now consider the scenario 
where $\adv$ can corrupt up to $t$ edges from $\mE'$.
This action is described by the Hamming-type channel 
$\HP{t}{0}{U}: \mA^{|\mE'|} \dto 
{\hat\mA}^{|\mE'|}$ introduced in Definition 
\ref{elementarychannels}, with $s:=|\mE'|$ and $U:=[s]=\{1,...,s\}$. 
Therefore we can simply take 
$\Omega[\adv;\mE' \to \mE']:=\HP{t}{0}{U}$ in this case.

Going back to our general discussion, observe that letting the adversary $\adv$ 
operate only on $\mE'$ replaces the channel $\Id[\mE' \to \mE']$ in (\ref{decident2}) 
with the channel $\Omega[\adv;\mE' \to \mE']$. This defines
 a new channel
\begin{equation} \label{newch}
\overline{\Omega}_\mF^J[\adv;{\bf S}_J \to \inn(T) \mid \overline{x}]  :=  \Omega_\mF^J[
{\bf S}_J \to \mE' \mid \overline{x}]  \concat  \Omega[\adv;\mE' \to \mE']  \concat \Omega_\mF^J[\mE' \to \inn(T) \mid \overline{x}],
\end{equation}
whose decomposition is graphically represented in Figure~\ref{diagg}, for a fixed $\overline{x}$.
\begin{figure}[htbp]
  \centering
     \begin{tikzpicture}
\tikzset{vertex/.style = {shape=circle,draw,inner sep=0pt,minimum size=1.9em}}
\tikzset{nnode/.style = {shape=circle,fill=myg,draw,inner sep=0pt,minimum
size=1.9em}}
\tikzset{edge/.style = {->,> = stealth}}
\tikzset{dedge/.style = {densely dotted,->,> = stealth}}
\tikzset{ddedge/.style = {dashed,->,> = stealth}}

\node[vertex] (T) at (13,-1) {$T$};

\node (N1) at  (5,-1) {};
\node (N2) at  (5,-1.5) {};
\node (N3) at  (5,-2.5) {};

\node (N4) at  (7,-1) {};
\node (N5) at  (7,-1.5) {};
\node (N6) at  (7,-2.5) {};
\node (wr) at (6,-2) {$\vdots$};

\draw[dedge] (N1) to (N4);
\draw[dedge] (N2) to (N5);
\draw[dedge] (N3) to (N6);

\draw [decorate,decoration={brace,amplitude=2pt},xshift=0pt,yshift=0pt]
(5,-2.8) -- (5,-0.8) node [black,midway,xshift=0.4cm] 
{};

\draw [decorate,decoration={brace,amplitude=2pt},xshift=0pt,yshift=0pt]
(7,-0.8) -- (7,-2.8) node [black,midway,xshift=0.4cm] 
{};

\draw[ddedge, bend left=15] (0.05,-1.45)  
to node[midway,above]{\small $
\Omega_\mF^J[{\bf S}_J \to \mE' \mid \overline{x}]$}
(4.8,-1.8);

\node (scr) at (6,-0.4) {$\Omega[\adv;\mE' \to \mE']$};
\draw[ddedge, bend right=15] (7.2,-1.8)  to node[midway,below]{\small $
\Omega_\mF^J[\mE' \to \inn(T) \mid \overline{x}]$} (T);

\node (scr) at (6.5,-2.1) {$\mE'$};

\node (SS) at  (-0.30,-1.45) {${\bf S}_J$};
\end{tikzpicture} 
\caption{{{Decomposition of the channel $\overline{\Omega}_\mF^J[\adv;{\bf S}_J \to \inn(T) \mid \overline{x}]$.}}}
\label{diagg}
\end{figure}

Note that $\overline{\Omega}_\mF^J[\adv;{\bf S}_J \to 
\inn(T) \mid \overline{x}]$ is finer than
 $\Omega_\mF^J[\adv;{\bf S}_J \to \inn(T) \mid \overline{x}]$, as in the former channel 
the action of the adversary was restricted to $\mE'$. In symbols (see Definition~\ref{deffiner}) we have
\begin{equation} \label{finerr} 
\overline{\Omega}_\mF^J[\adv;{\bf S}_J \to \inn(T) \mid \overline{x}]
  \le \Omega_\mF^J[\adv;{\bf S}_J \to \inn(T) \mid \overline{x}].
\end{equation}

Now observe that the code
$\prod_{i \in J} \mC_i$ is good for the channel 
$\Omega_\mF^J[\adv;{\bf S}_J \to \inn(T) \mid \overline{x}]$.
Indeed,
$\prod_{i \in I} \mC_i$ is a good code for  
$\Omega_\mF[\adv;{\bf S} \to \inn(T)]$ by definition of one-shot capacity region,
and $\overline{x} \in \prod_{i \in I \setminus J} \mC_i$.

Using~(\ref{finerr}), we see that $\prod_{i \in J} \mC_i$ is good  for the channel 
$\overline{\Omega}_\mF^J[\adv;{\bf S}_J \to \inn(T) \mid \overline{x}]$ as well. In particular,
\begin{equation}\label{bbb}
\sum_{i \in J} \alpha_i  =  \log_{|\mA|} \left| \prod_{i \in J} \mC_i \right| \le \CA\left(\overline{\Omega}_\mF^J[\adv;{\bf S}_J \to \inn(T)
\mid \overline{x}]\right) \le \CA(\Omega[\adv;\mE' \to \mE']),\end{equation}
where $\CA$ denotes the one-shot capacity (expressed as a logarithm in base $|\mA|$), and the last inequality follows combining equation 
(\ref{newch}) and Proposition~\ref{genconc}.
We therefore obtained in (\ref{bbb}) an upper-bound for the capacity region
$\mR(\mN,\adv)$ in terms of the capacity
of the intermediate  channel $\Omega[\adv;\mE' \to \mE']: \mA^{|\mE'|} \dto {\hat\mA}^{|\mE'|}$, which does not depend on the specific choice of the network code
$\mF$.

A similar (and in fact easier) argument can be given for the case $J=I$, i.e., when all sources are considered. In that case there is no $\overline{x}$ to select.

\begin{example} \label{dex2}
Let $\mN$ be the network in Figure~\ref{figex2}. The edges 
of $\mN$ are ordered according to their indices. 
We consider an adversary $\adv$ who can corrupt
up to one packet from the dotted edges, i.e., from $\mU:=\{e_1,e_4,e_6,e_7\}$.
Therefore $\adv=\AP{1}{0}{\mU}$ according to Notation~\ref{edgespecific}
 (erasures are excluded 
from the model). Let $(\alpha_1,\alpha_2) \in \mR(\mN,\adv)$
be arbitrary.
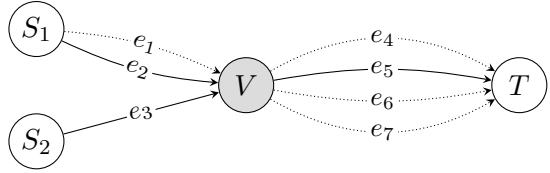
\begin{figure}[htbp]
  \centering
     \begin{tikzpicture}
\tikzset{vertex/.style = {shape=circle,draw,inner sep=0pt,minimum size=1.9em}}
\tikzset{nnode/.style = {shape=circle,fill=myg,draw,inner sep=0pt,minimum
size=1.9em}}
\tikzset{edge/.style = {->,> = stealth}}
\tikzset{dedge/.style = {densely dotted,->,> = stealth}}

\node[nnode] (N) at (0,0) {$V$};
\node[vertex, right=\mynodespace of N] (R) {$T$};
\node[vertex] (S1) at (165:\mynodespace) {$S_1$};
\node[vertex] (S2) at (195:\mynodespace) {$S_2$};

\draw[dedge, bend left=10] (S1)  to node[fill=white,inner sep=1pt,sloped]{\small $e_1$} (N);
\draw[edge, bend right=10] (S1) to  node[fill=white,inner sep=1pt,sloped]{\small $e_2$} (N);
\draw[edge] (S2) to node[fill=white,inner sep=1pt,sloped]{\small $e_3$} (N);
\draw[dedge, bend left=30] (N) to node[fill=white,inner sep=1pt,sloped]{\small $e_4$} (R);
\draw[edge, bend left=10] (N) to node[fill=white,inner sep=1pt,sloped]{\small $e_5$} (R);
\draw[dedge, bend right=10] (N) to node[fill=white,inner sep=1pt,sloped]{\small $e_6$} (R);
\draw[dedge, bend right=30] (N) to node[fill=white,inner sep=1pt,sloped]{\small $e_7$} (R);
\end{tikzpicture} 
\caption{{{Network for Example~\ref{dex2}.}}}\label{figex2}
\end{figure}

Take $J=\{1\}$.
The set $\mE'=\{e_4,e_5,e_6,e_7\}$ is a cut between $S_1$ and $T$. 
Let $U:=\{1,3,4\}$, and let 
$\HP{1}{0}{U}$ denote the channel introduced in Definition
\ref{elementarychannels}. Using (\ref{bbb}) we obtain
$\alpha_1 \le \CA(\HP{1}{0}{U})$, and by 
Theorem~\ref{mainthhamming} we have 
$\CA(\HP{1}{0}{U}) \le 4-2\cdot 1=2$. Therefore 
$\alpha_1 \le 2$.
Reasoning in the same way with the cut 
$\mE'=\{e_1,e_2\}$, one obtains a better bound for $\alpha_1$, i.e., 
$\alpha_1 \le 1$.

Taking $J=\{2\}$ and $J=\{1,2\}$, and combining in the same way (\ref{bbb}) and
Theorem~\ref{mainthhamming}, we obtain
$\alpha_2 \le 1$ and 
$\alpha_1+\alpha_2 \le 2$. Therefore
$\mR(\mN,\adv) \subseteq \{(\alpha_1,\alpha_2) \in \R_{\ge 0}^2 :
\alpha_1 \le 1, \,\alpha_2 \le 2, \,\alpha_1+\alpha_2 \le 1\}$. 
\end{example}

\subsection{Porting Lemmas}\label{subportinglemmas}

We now establish two general lemmas that formalize the argument presented 
in  Subsection~\ref{submethod}, and extend it to the zero-error and to the compound 
zero-error capacity regions. Note that these extensions do not directly 
follow from the discussion in the previous 
subsection, and heavily rely on the properties of channel concatenation and union established in 
Section~\ref{secoperations}. We start with the Porting Lemma for the one-shot and the zero-error capacity.

\begin{lemma}\label{portinglemma}
Let $\{\Omega[\adv; \mE' \to \mE'] : \mE'\subseteq \mE, \, \mE' \neq \emptyset\}$ 
be a family of adversarial channels such that
 $\Omega[\adv; \mE' \to \mE']: \mA^{|\mE'|} \dto \hat\mA^{|\mE'|}$ for all $\mE'$. 
Assume that for all network codes $\mF$ for $\mN$:

\begin{itemize}\setlength\itemsep{0em}
\item $\Omega_\mF[\adv; {\bf S} \to \inn(T)] \le \Omega_\mF[{\bf S} \to \mE'] 
\concat \Omega[\adv;\mE' \to \mE'] \concat \Omega_\mF[\mE' \to \inn(T)]$ 
for all $T \in {\bf T}$ and for all cuts $\mE' \subseteq \mE$ that separate $\bf S$ from $T$;
\item $\Omega_\mF^J[\adv;{\bf S}_J \to \inn(T) \mid \overline{x}] \le 
\Omega_\mF^J[{\bf S}_J \to \mE' \mid \overline{x}] \concat \Omega[\adv;\mE' \to \mE'] 
\concat \Omega_\mF^J[\mE' \to \inn(T) \mid \overline{x}]$ for all non-empty $J \subsetneq I$, all cuts 
 $\mE' \subseteq \mE$ that separate ${\bf S}_J$ from $T$, and all 
 $\overline{x} \in \prod_{i \in I \setminus J} \mX_i$.
 \end{itemize}
  The following hold.
  \begin{enumerate} \setlength\itemsep{0em}
\item \label{ineq1}For all $(\alpha_1,...,\alpha_N) \in \mR(\mN,\adv)$ and 
for all non-empty subset $J \subseteq I$ we have
\begin{equation}\label{region1}
 \sum_{i \in J} \alpha_i \le \min_{T \in {\bf T}} \, \min \{ \CA(\Omega[\adv;\mE' \to \mE']) : 
 \mE' \subseteq \mE \mbox{ is a cut between ${\bf S}_J$ and $T$}\}.
 \end{equation}
\item \label{ineq2}For all $(\alpha_1,...,\alpha_N) \in \overline{\regz(\mN,\adv)}$ 
and for all non-empty subset $J \subseteq I$ we have
 \begin{equation}\label{region2}
 \sum_{i \in J} \alpha_i  \le  \min_{T \in {\bf T}}  \, \min
  \{ \CAz(\Omega[\adv;\mE' \to \mE'])  : \mE' \subseteq \mE \mbox{ is a cut between ${\bf S}_J$ and $T$}\}.
 \end{equation}
 \end{enumerate}
   \end{lemma}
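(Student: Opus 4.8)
The plan is to turn the informal argument of Subsection~\ref{submethod} into a proof and then to lift it to the zero-error region by means of the product--concatenation identity of Corollary~\ref{propprel5}, which is precisely the tool that makes the $n$-fold case go through.

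For part~\ref{ineq1}, fix $(\alpha_1,\dots,\alpha_N)\in\reg(\mN,\adv)$ together with a witnessing network code $\mF$ and good code $\mC=\mC_1\times\cdots\times\mC_N$ (Definition~\ref{netcap}). Fix a non-empty $J\subseteq I$, a terminal $T\in{\bf T}$, a cut $\mE'$ separating ${\bf S}_J$ from $T$, and, when $J\subsetneq I$, an element $\overline{x}\in\prod_{i\in I\setminus J}\mC_i$; set $\mC':=\prod_{i\in J}\mC_i$. First I would verify that $\mC'$ is good for $\Omega_\mF^J[\adv;{\bf S}_J\to\inn(T)\mid\overline{x}]$: two distinct codewords of $\mC'$ extend, through $\overline{x}$, to two distinct codewords of $\mC$, on which the fan-out sets of $\Omega_\mF[\adv;{\bf S}\to\inn(T)]$ are disjoint, and freezing the sources outside $J$ is exactly what defines $\Omega_\mF^J[\adv;{\bf S}_J\to\inn(T)\mid\overline{x}]$. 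By the second hypothesis of the lemma and Definition~\ref{deffiner}, the concatenation $\Omega_\mF^J[{\bf S}_J\to\mE'\mid\overline{x}]\concat\Omega[\adv;\mE'\to\mE']\concat\Omega_\mF^J[\mE'\to\inn(T)\mid\overline{x}]$ is finer than $\Omega_\mF^J[\adv;{\bf S}_J\to\inn(T)\mid\overline{x}]$, so $\mC'$ is good for it as well; Proposition~\ref{genconc}(\ref{genconc1})--(\ref{genconc2}) then bounds the capacity of this concatenation by $\CA(\Omega[\adv;\mE'\to\mE'])$, giving $\sum_{i\in J}\alpha_i=\log_{|\mA|}|\mC'|\le\CA(\Omega[\adv;\mE'\to\mE'])$. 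The case $J=I$ is the same, using the first hypothesis and with no $\overline{x}$ to select. Taking the minimum over all $T$ and all admissible $\mE'$ gives~(\ref{region1}).

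For part~\ref{ineq2}, I would first establish the bound on the (not yet closed) set $\regz(\mN,\adv)$. Take $(\alpha_1,\dots,\alpha_N)\in\regz(\mN,\adv)$ witnessed by an integer $n\ge1$, network codes $\mF^1,\dots,\mF^n$, and sets $\mC_i\subseteq\mX_i^n$ (Definition~\ref{netcap0}). Fix $J$, $T$, a cut $\mE'$ between ${\bf S}_J$ and $T$, and, when $J\subsetneq I$, an element $(\overline{x}^1,\dots,\overline{x}^n)\in\prod_{i\in I\setminus J}\mC_i$. As in part~\ref{ineq1}, $\mC':=\prod_{i\in J}\mC_i$ is good for $\prod_{k=1}^n\Omega_{\mF^k}^J[\adv;{\bf S}_J\to\inn(T)\mid\overline{x}^k]$. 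Applying the lemma's hypothesis to each $\mF^k$ with this same cut $\mE'$ (so the middle channel $\Omega[\adv;\mE'\to\mE']$ does not depend on $k$), and using that a product of finer channels is finer, the channel $\prod_{k=1}^n\bigl(\Omega_{\mF^k}^J[{\bf S}_J\to\mE'\mid\overline{x}^k]\concat\Omega[\adv;\mE'\to\mE']\concat\Omega_{\mF^k}^J[\mE'\to\inn(T)\mid\overline{x}^k]\bigr)$ is finer than $\prod_{k=1}^n\Omega_{\mF^k}^J[\adv;{\bf S}_J\to\inn(T)\mid\overline{x}^k]$; by Corollary~\ref{propprel5} this former channel equals $\bigl(\prod_{k=1}^n\Omega_{\mF^k}^J[{\bf S}_J\to\mE'\mid\overline{x}^k]\bigr)\concat\Omega[\adv;\mE'\to\mE']^n\concat\bigl(\prod_{k=1}^n\Omega_{\mF^k}^J[\mE'\to\inn(T)\mid\overline{x}^k]\bigr)$. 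Hence $\mC'$ is good for this last channel, and Proposition~\ref{genconc} gives $n\sum_{i\in J}\alpha_i=\log_{|\mA|}|\mC'|\le\CA(\Omega[\adv;\mE'\to\mE']^n)$, so $\sum_{i\in J}\alpha_i\le\CA(\Omega[\adv;\mE'\to\mE']^n)/n\le\CAz(\Omega[\adv;\mE'\to\mE'])$ by the definition of zero-error capacity. Minimizing over $T$ and $\mE'$ gives~(\ref{region2}) for points of $\regz(\mN,\adv)$, and since its right-hand side cuts out a finite intersection of closed half-spaces in $\R_{\ge0}^N$ the inequality persists on the closure $\overline{\regz(\mN,\adv)}$.

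I expect the main obstacle to be the bookkeeping with frozen sources: checking that after fixing the inputs of the sources outside $J$ the product code $\prod_{i\in J}\mC_i$ genuinely inherits goodness from $\mC$, both in one shot and for the $n$-fold product channel, and applying Corollary~\ref{propprel5} with the correct grouping of the three families of factors. The structural point that must not be overlooked is that the middle channel $\Omega[\adv;\mE'\to\mE']$ is independent of the channel-use index $k$; this is exactly why the product of the $n$ concatenations collapses to a single concatenation carrying the power $\Omega[\adv;\mE'\to\mE']^n$ in the middle, and hence why the zero-error bound reduces, via Corollary~\ref{propprel5} and Proposition~\ref{genconc}, to the one-shot one rather than being an independent statement.
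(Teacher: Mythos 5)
Your proposal is correct and follows essentially the same route as the paper's proof: part~\ref{ineq1} formalizes the argument of Subsection~\ref{submethod} (goodness of $\prod_{i\in J}\mC_i$ after freezing $\overline{x}$, the fineness hypothesis, and Proposition~\ref{genconc}), and part~\ref{ineq2} reduces to $\regz(\mN,\adv)$ via the closedness of the constraint set and then applies Corollary~\ref{propprel5} to collapse the product of concatenations into a concatenation carrying $\Omega[\adv;\mE'\to\mE']^n$ in the middle, followed by Proposition~\ref{genconc} and the definition of $\CAz$. The only cosmetic difference is that you spell out a few implicit steps (goodness inheritance, product of finer channels is finer) that the paper leaves tacit.
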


\begin{proof}
We only show the two properties for $J \subsetneq I$.
The proof for the case $J=I$ is similar and easier.
Part~\ref{ineq1} was essentially already shown in Subsection~\ref{submethod}.
More precisely, the bound in (\ref{region1}) follows from~(\ref{bbb}) by minimizing over all 
$T \in {\bf T}$ and $\mE'$.

Let us show part~\ref{ineq2}. Since the inequalities in 
$(\ref{region2})$ define a closed set in $\R^N$, it suffices to 
show the result for $\regz(\mN,\adv)$. Fix an arbitrary
$(\alpha_1,...,\alpha_N) \in \regz(\mN,\adv)$. Let
$n \ge 1$, $\mF^1,...,\mF^n$ and $\mC= \mC_1 \times \cdots 
\times \mC_N\subseteq \mX_1^n \times \cdots \times \mX_N^n$ 
be as in Definition~\ref{netcap0}. Take an element 
$$\overline{x}=(\overline{x}^1,...,\overline{x}^n) \in \prod_{i \in I \setminus J} 
\mC_i \subseteq \left( \prod_{i \in I \setminus J} \mX_i \right)^n, \quad
 \mbox{ where we identify }  
 \prod_{i \in I \setminus J} \mX_i^n  = \left( \prod_{i \in I \setminus J}
  \mX_i \right)^n.$$
 Let $T \in {\bf T}$ be a terminal, and let $\mE'$ be a cut that 
 separates ${\bf S}_J$ from $T$. 
 To simplify the notation throughout the proof, for all $1 \le k \le n$ define
 \begin{eqnarray}
 \nonumber \Omega_{\mF^k}^J[\adv;{\bf S}_J \to T] &:=& 
 \Omega_{\mF^k}^J[\adv;{\bf S}_J \to \inn(T) \mid \overline{x}^k], \\
 \nonumber \Omega_{\mF^k}^J[{\bf S}_J \to \mE']  &:=& 
 \Omega_{\mF^k}^J[{\bf S}_J \to \mE' \mid \overline{x}^k], \\
  \nonumber \Omega_{\mF^k}^J[\mE' \to T] &:=& 
  \Omega_{\mF^k}^J[\mE' \to \inn(T) \mid \overline{x}^k],  \\
 \label{deccc}\overline{\Omega}_{\mF^k}^J[\adv;{\bf S}_J \to T] &:=& 
 \Omega_{\mF^k}^J[{\bf S}_J \to \mE']   \concat 
 \Omega[\adv; \mE' \to \mE']  \concat \Omega_{\mF^k}^J[\mE' \to T].
 \end{eqnarray}
We can now apply Corollary 
\ref{propprel5}  to equation (\ref{deccc}), and obtain
 \begin{equation}\label{intermediate}
 \prod_{k=1}^n \overline{\Omega}_{\mF^k}^J[\adv;{\bf S}_J \to T]  = 
 \left( \prod_{k=1}^n  \Omega_{\mF^k}^J[{\bf S}_J \to \mE']\right) \concat 
 \Omega[\adv; \mE' \to \mE']^n \concat
 \left(\prod_{k=1}^n \Omega_{\mF^k}^J[\mE' \to T]\right).
 \end{equation}
By assumption, we have $\Omega_{\mF^k}^J[\adv;{\bf S}_J \to T] \ge 
\overline{\Omega}_{\mF^k}^J[\adv;{\bf S}_J \to T]$ for all $1 \le k \le n$. Combining 
this with equation (\ref{intermediate}) we deduce
$$\prod_{k=1}^n \Omega_{\mF^k}^J[\adv;{\bf S}_J \to T]  \ge 
  \left( \prod_{k=1}^n  \Omega_{\mF^k}^J[{\bf S}_J \to \mE']\right) \concat 
 \Omega[\adv; \mE' \to \mE']^n  \concat 
 \left(\prod_{k=1}^n \Omega_{\mF^k}^J[\mE' \to T]\right).$$
Since, by definition of $\regz(\mN,\adv)$, the code $\mC$ is good for 
$\prod_{k=1}^n \Omega_{\mF^k} [\adv;{\bf S} \to T]$,  the code 
$\prod_{i \in J} \mC_i$ is good for
$\prod_{k=1}^n \Omega_{\mF^k}^J[\adv;{\bf S}_J \to T]$.
Thus by Proposition~\ref{genconc} we conclude that
$$\sum_{i \in J} \alpha_i  =  \frac{1}{n}  
\log_{|\mA|} \left| \prod_{i \in J} \mC_i \right| \le 
\frac{1}{n}  \CA \left( \Omega[\adv; \mE' \to \mE']^n\right)
 \le  \CAz(\Omega[\adv; \mE' \to \mE']).$$ 
The final result can be obtained by minimizing over all $T \in {\bf T}$ and $\mE'$. 
\end{proof}

In the remainder of the section we establish an analogous porting lemma
 for the compound zero-error capacity. We start by defining Hamming-type channels 
 associated with an adversary of type
$\adv=\AB{t}{e}{\mU}$.

\begin{definition} \label{preldefn}
Let $L \ge 1$, $\bm{\mU}$, $\bm t$ and $\bm e$ be as in 
Notation~\ref{edgespecific}, and let $\bm{\mV}=(\mV_1,...,\mV_L) 
\subseteq \bm{\mU}$ be arbitrary (possibly $\bm \mV=\bm \mU$). 
Take a non-empty subset of edges $\mE' \subseteq \mE$, and set $s:=|\mE'|$. 
Denote by $e_1 < e_2 < \cdots < e_s$ the edges in $\mE$, sorted 
according to the linear order extension $\le$. For all
$1 \le \ell \le L$, let $V_\ell:=\{1 \le i \le s : e_i \in \mU_\ell\}$.
The channel $\HH_{\bm t, \bm e} \langle \bm{\mV}, \mE' \rangle$ is defined by
$\HH_{\bm t, \bm e} \langle \bm{\mV}, \mE' \rangle:= \HB{t}{e}{V} 
: \mA^s \dto \hat\mA^s$,
where $\bm V:=(V_1,...,V_L)$ and  $\HB{t}{e}{V}$ is the Hamming-type 
channel introduced in Definition~\ref{defhat}.
\end{definition}
 
 The channel $\HH_{\bm t, \bm e} \langle \bm{\mV}, \mE' \rangle$ can be interpreted as
 the ``projection'' of the adversary $\adv_{\bm t, \bm e} \langle \bm{\mV} \rangle$ on 
 $\mE'$. In particular, note that $\mE'$  determines 
 the size of the input/output alphabets
of the channel $\HH_{\bm t, \bm e} \langle \bm{\mV}, \mE' \rangle$. 

We can now state the Compound Porting Lemma.

\begin{lemma} \label{portinglemma1}
Let $\bm{\mU}$, $\bm t$ and $\bm e$ be as in 
Notation~\ref{edgespecific}, and assume that $\adv=\AB{t}{e}{\mU}$.
Then for all $(\alpha_1,...,\alpha_N) \in 
\overline{\regzr(\mN,\adv)}$ and for all non-empty subset $J \subseteq I$ we have
 \begin{equation*}\label{region3}
 \sum_{i \in J} \alpha_i  \le  \min_{T \in {\bf T}} \,  \min \{ 
 \CAzr(\HH_{\bm t, \bm e}
  \langle \bm\mU, \mE' \rangle)  : \mE' \subseteq \mE \mbox{ is a cut between ${\bf S}_J$ and $T$}\}.
 \end{equation*}
 \end{lemma}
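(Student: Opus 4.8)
The plan is to run the argument of the proof of Lemma~\ref{portinglemma} while keeping track of the union over sub-tuples of $\bm{\mU}$, pushing it through the product and concatenation operations by means of Corollary~\ref{propprel5} and Proposition~\ref{proprunionfamily}. Since the inequalities in the statement cut out a closed subset of $\R^N$, it suffices to establish the bound for every $(\alpha_1,\dots,\alpha_N)\in\regzr(\mN,\adv)$. Fix such a point and let $n\ge1$, network codes $\mF^1,\dots,\mF^n$, and $\mC=\mC_1\times\cdots\times\mC_N\subseteq\mX_1^n\times\cdots\times\mX_N^n$ be as in Definition~\ref{defcompoundregion}. Fix also a non-empty $J\subseteq I$ (I work out the case $J\subsetneq I$; the case $J=I$ is identical, with no $\overline{x}$ to freeze), a terminal $T\in{\bf T}$, a cut $\mE'$ between ${\bf S}_J$ and $T$, and an element $\overline{x}=(\overline{x}^1,\dots,\overline{x}^n)\in\prod_{i\in I\setminus J}\mC_i$. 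By Definition~\ref{defadvchN}, freezing the ${\bf S}_{I\setminus J}$-coordinates to $\overline{x}$ carries each $\Omega_{\mF^k}[\AB{t}{e}{\mV};{\bf S}\to\inn(T)]$ to $\Omega_{\mF^k}^J[\AB{t}{e}{\mV};{\bf S}_J\to\inn(T)\mid\overline{x}^k]$; as restriction of the input alphabet commutes with products and with unions of fan-out sets, the fact that $\mC$ is good for each of the channels in Definition~\ref{defcompoundregion} yields that $\prod_{i\in J}\mC_i$ is good for $\bigcup_{\bm{\mV}\subseteq\bm{\mU},\,|\bm{\mV}|\le\bm{t}+\bm{e}}\prod_{k=1}^n\Omega_{\mF^k}^J[\AB{t}{e}{\mV};{\bf S}_J\to\inn(T)\mid\overline{x}^k]$, hence a fortiori for the sub-union over those $\bm{\mV}$ with $\mV_\ell\subseteq\mU_\ell\cap\mE'$ for all $\ell$ (restricting the index set of a union only makes the channel finer).

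Fix one such $\bm{\mV}$; now $\AB{t}{e}{\mV}$ acts only on edges of the cut $\mE'$, and so, exactly as in Subsection~\ref{submethod} and Remarks~\ref{important1} and~\ref{important2}, the transfer from ${\bf S}_J$ to $\inn(T)$ under $\AB{t}{e}{\mV}$ factors through $\mE'$, with middle factor the Hamming-type channel $\HH_{\bm{t},\bm{e}}\langle\bm{\mV},\mE'\rangle$ of Definition~\ref{preldefn} (which, because $\bm{\mV}\subseteq\mE'$, is just $\AB{t}{e}{\mV}$ read on $\mA^{|\mE'|}$):
\[
\Omega_{\mF^k}^J[\AB{t}{e}{\mV};{\bf S}_J\to\inn(T)\mid\overline{x}^k]\ \ge\ \Omega_{\mF^k}^J[{\bf S}_J\to\mE'\mid\overline{x}^k]\concat\HH_{\bm{t},\bm{e}}\langle\bm{\mV},\mE'\rangle\concat\Omega_{\mF^k}^J[\mE'\to\inn(T)\mid\overline{x}^k]
\]
for every $1\le k\le n$. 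Multiplying over $k$ and applying Corollary~\ref{propprel5} to the right-hand side gives
\[
\prod_{k=1}^n\Omega_{\mF^k}^J[\AB{t}{e}{\mV};{\bf S}_J\to\inn(T)\mid\overline{x}^k]\ \ge\ \Big(\prod_{k=1}^n\Omega_{\mF^k}^J[{\bf S}_J\to\mE'\mid\overline{x}^k]\Big)\concat\HH_{\bm{t},\bm{e}}\langle\bm{\mV},\mE'\rangle^{n}\concat\Big(\prod_{k=1}^n\Omega_{\mF^k}^J[\mE'\to\inn(T)\mid\overline{x}^k]\Big).
\]

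Next I would take the union of both sides over all admissible $\bm{\mV}$. Unions preserve the relation $\ge$, and since the two outer factors are independent of $\bm{\mV}$, Proposition~\ref{proprunionfamily} moves the union into the middle slot. By Definition~\ref{preldefn} one has $\HH_{\bm{t},\bm{e}}\langle\bm{\mU},\mE'\rangle=\HB{t}{e}{V}$ for a suitable $L$-tuple $\bm{V}$ of pairwise disjoint subsets of $[|\mE'|]$, and the channels $\HH_{\bm{t},\bm{e}}\langle\bm{\mV},\mE'\rangle$ with $\mV_\ell\subseteq\mU_\ell\cap\mE'$ and $|\bm{\mV}|\le\bm{t}+\bm{e}$ are precisely the $\HB{t}{e}{W}$ with $\bm{W}\subseteq\bm{V}$ and $|\bm{W}|\le\bm{t}+\bm{e}$; hence by Definition~\ref{defcompound} the union of their $n$-th powers equals $\HB{t}{e}{V}^{n,\rest}=\HH_{\bm{t},\bm{e}}\langle\bm{\mU},\mE'\rangle^{n,\rest}$. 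Therefore $\prod_{i\in J}\mC_i$ is good for a channel that is $\ge\big(\prod_{k}\Omega_{\mF^k}^J[{\bf S}_J\to\mE'\mid\overline{x}^k]\big)\concat\HH_{\bm{t},\bm{e}}\langle\bm{\mU},\mE'\rangle^{n,\rest}\concat\big(\prod_{k}\Omega_{\mF^k}^J[\mE'\to\inn(T)\mid\overline{x}^k]\big)$, hence good for this concatenation. Applying Proposition~\ref{genconc} to the three-fold concatenation (using associativity) gives $\sum_{i\in J}\alpha_i=\tfrac1n\log_{|\mA|}\big|\prod_{i\in J}\mC_i\big|\le\tfrac1n\CA\big(\HH_{\bm{t},\bm{e}}\langle\bm{\mU},\mE'\rangle^{n,\rest}\big)\le\CAzr\big(\HH_{\bm{t},\bm{e}}\langle\bm{\mU},\mE'\rangle\big)$ by the definition of compound zero-error capacity; minimizing over $T\in{\bf T}$ and over all cuts $\mE'$ between ${\bf S}_J$ and $T$ gives the assertion.

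The step I expect to be the main obstacle is the per-use inequality displayed in the second paragraph: one must verify that confining $\AB{t}{e}{\mV}$ to the cut $\mE'$ produces a channel finer than the genuine network channel. This is precisely the type of statement that is assumed as a hypothesis in Lemma~\ref{portinglemma}; for a cut that is not an antichain it is not automatic and rests on the particular construction of $\Omega_\mF[\mE'\to\inn(T)]$ in Definition~\ref{def11} (which gives priority to the edges of $\mE'$ nearest to $T$), so I would prove it carefully in the spirit of Remarks~\ref{important1} and~\ref{important2}. The remaining care concerns the order of operations: each product of concatenations must first be rewritten as a concatenation of products via Corollary~\ref{propprel5}, and only afterwards may the union over $\bm{\mV}$ be absorbed into the middle factor via Proposition~\ref{proprunionfamily}.
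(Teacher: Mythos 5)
Your proof is correct and follows essentially the same route as the paper's: per-use cut decomposition, Corollary~\ref{propprel5} to convert a product of concatenations into a concatenation of products, Proposition~\ref{proprunionfamily} to absorb the union over admissible $\bm{\mV}$ into the middle slot, and then Proposition~\ref{genconc}. The only presentational difference is that you restrict the union to sub-tuples $\bm{\mV}$ contained in the cut $\mE'$ before identifying the middle channel with $\HH_{\bm t,\bm e}\langle\bm\mU,\mE'\rangle^{n,\rest}$, whereas the paper keeps the full union and relies on the (equivalent) observation that $\HH_{\bm t,\bm e}\langle\bm\mV,\mE'\rangle$ depends only on $\bm\mV\cap\mE'$.
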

\begin{proof}[Sketch of the proof] We proceed as in the proof of Lemma~\ref{portinglemma}. 
Take 
$(\alpha_1,...,\alpha_N) \in \regzr(\mN,\adv)$, and let
$n \ge 1$, $\mF^1,...,\mF^n$ and $\mC= \mC_1 \times \cdots \times \mC_N\subseteq 
\mX_1^n \times \cdots \times \mX_N^n$ be as in Definition~\ref{defcompoundregion}. 
Fix any element 
$\overline{x}=(\overline{x}^1,...,\overline{x}^n) \in \prod_{i \in I \setminus J} \mC_i.$
 Let $T \in {\bf T}$ be a terminal, and let $\mE'$ be a cut between ${\bf S}_J$ and $T$. 
 To simplify the notation throughout the proof, for $\bm \mV \subseteq \bm \mU$ 
 and for $1 \le k \le n$ we define
 \begin{eqnarray}
 \nonumber \Omega_{\mF^k}^J[\adv\langle \bm \mV \rangle;{\bf S}_J \to T] &:=& 
 \Omega_{\mF^k}^J[\adv_{\bm t, \bm e}\langle \bm \mV \rangle;
 {\bf S}_J \to \inn(T) \mid \overline{x}^k], \\
 \nonumber \Omega_{\mF^k}^J[{\bf S}_J \to \mE']  &:=&
  \Omega_{\mF^k}^J[{\bf S}_J \to \mE' \mid \overline{x}^k], \\
  \nonumber \Omega_{\mF^k}^J[\mE' \to T] &:=& 
  \Omega_{\mF^k}^J[\mE' \to \inn(T) \mid \overline{x}^k],  \\
 \label{deccc2}\overline{\Omega}_{\mF^k}^J[\adv\langle \bm \mV \rangle;{\bf S}_J \to T] &:=& 
 \Omega_{\mF^k}^J[{\bf S}_J \to \mE']   \concat \HH_{\bm t, \bm e} \langle \bm\mV, \mE' \rangle
  \concat \Omega_{\mF^k}^J[\mE' \to T].
 \end{eqnarray}
 Now observe that $\Omega_{\mF^k}^J[\adv \langle \bm \mV \rangle;{\bf S}_J \to T] \ge 
 \overline{\Omega}_{\mF^k}^J[\adv\langle \bm \mV \rangle;{\bf S}_J \to T]$ for all $1 \le k \le n$.
 Thus combining equation (\ref{deccc2}),  Corollary~\ref{propprel5} and Proposition
 ~\ref{proprunionfamily}, we obtain
\begin{multline*}
\bigcup_{\substack{ \bm{\mV} \subseteq \bm{\mU} \\
|\bm{\mV}| \le \bm{t}+\bm e}} \prod_{k=1}^n 
\Omega_{\mF^k}^J[\adv \langle \bm \mV \rangle;{\bf S}_J \to T]  \\ \ge
 \left( \prod_{k=1}^n  \Omega_{\mF^k}^J[{\bf S}_J \to \mE']\right) \concat 
 \left(\bigcup_{\substack{ \bm{\mV} \subseteq \bm{\mU} \\
|\bm{\mV}| \le \bm{t}+\bm e}} \HH_{\bm t, \bm e} 
\langle \bm\mV, \mE' \rangle^n \right) \concat 
 \left(\prod_{k=1}^n \Omega_{\mF^k}^J[\mE' \to T]\right).
 \end{multline*}
 As in the proof of Lemma~\ref{portinglemma}, by Proposition~\ref{genconc} we conclude
$$\sum_{i \in J} \alpha_i  =  \frac{1}{n} 
\log_{|\mA|} \left| \prod_{i \in J} \mC_i \right|  \le 
\frac{1}{n}  \CA\left(\bigcup_{\substack{ \bm{\mV} \subseteq \bm{\mU} \\
|\bm{\mV}| \le \bm{t}+\bm e}} \HH_{\bm t, \bm e} \langle \bm\mV, \mE' \rangle^n \right)
 \le \CAzr(\HH_{\bm t, \bm e} \langle \bm\mU, \mE' \rangle),$$
where the last inequality  can be shown using the 
definition of 
$\HH_{\bm t, \bm e} \langle \bm\mV, \mE' \rangle$
(see Definition~\ref{preldefn}). 
\end{proof}


\section{Capacity Regions: Upper Bounds}
\label{secbounds}

In this section we apply the theoretical bounds established in Section~\ref{secportinglemma} 
to concrete networking 
 contexts. Our results study one-shot models, zero-error models, and compound zero-error models.
Moreover, they cover several classes of network adversaries, including 
multiple adversaries, restricted adversaries, different types of 
error/erasure adversaries, and 
rank-metric adversaries (see Section~\ref{extensions}). The bounds presented in this section 
do not follow in any obvious from known results in the context of network communications under 
probabilistic error models.

We follow the notation of Section~\ref{secnetwork}, and
 work with a fixed network $\mN=(\mV,\mE,{\bf S},{\bf T})$ over an alphabet $\mA$.
The network sources are ${\bf S}=\{S_1,...,S_N\}$, and $I=\{1,...,N\}$ is the set of source indices.
We let $\beta$ be the parameter introduced in Notation~\ref{notazbeta}.

\begin{notation} \label{nnnn}
If $J \subseteq I$ is a non-empty subset and $T \in {\bf T}$ is a terminal, we denote by 
$\mu({\bf S}_J,T)$ the min-cut between ${\bf S}_J$ and $T$, i.e., the minimum size of an edge-cut that separates all the sources in ${\bf S}_J$ from $T$.
By the edge-connectivity version of Menger's Theorem (see~\cite{menger}), $\mu({\bf S}_J,T)$ 
coincides with the maximum number of edge-disjoint directed paths connecting ${\bf S}_J$ to $T$.
Note moreover that, by property~\ref{prnE} of Definition~\ref{defnetwork}, one has $\mu({\bf S}_J,T) \ge 1$ for all non-empty $J \subseteq I$ and for all $T \in {\bf T}$.
\end{notation}

\subsection{One-Shot Capacity Region}

We start by considering a single adversary $\adv$ restricted to a set of edges $\mU \subseteq \mE$, with limited error and erasure powers. The following result shows that any bound from classical Coding Theory translates into a bound for the one-shot capacity region of
$(\mN,\adv)$
via the Porting Lemma.

\begin{theorem} \label{THEO1}
Let $\mU \subseteq \mE$ be a subset of edges, and let $t,e \ge 0$ be integers. 
Assume $\adv=\AP{t}{e}{\mU}$. For all 
$(\alpha_1,...,\alpha_N) \in \mR(\mN,\adv)$ and for all non-empty $J \subseteq I$ we have 
\begin{equation}
\sum_{i \in J} \alpha_i \le \min_{T \in {\bf T}} \, \min\left\{ |\mE' \setminus \mU| + \beta(\mA,|\mE' \cap \mU|,2t+e+1) : \mE' \subseteq \mE \mbox{ is a cut between ${\bf S}_J$ and $T$}\right\}. \label{NNice}
\end{equation}
 In particular, if $\mU=\mE$ then for all 
$(\alpha_1,...,\alpha_N) \in \mR(\mN,\adv)$ and for all non-empty $J \subseteq I$ we have
 $$\sum_{i \in J} \alpha_i \le \min_{T \in {\bf T}}  \beta(\mA,\mu({\bf S}_J,T),2t+e+1).$$
\end{theorem}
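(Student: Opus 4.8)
The plan is to derive both inequalities directly from the Porting Lemma (Lemma~\ref{portinglemma}, part~\ref{ineq1}) by exhibiting, for each cut $\mE'$, an explicit choice of the intermediate channel $\Omega[\adv;\mE' \to \mE']$ and computing its one-shot capacity via Proposition~\ref{boundext}.

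First I would verify the hypotheses of Lemma~\ref{portinglemma}. Fix a cut $\mE'$ separating ${\bf S}_J$ from $T$, set $s := |\mE'|$, and identify the edges of $\mE'$ with $[s]$ via the linear order $\le$; let $U := \{i \in [s] : e_i \in \mU\}$, so $|U| = |\mE' \cap \mU|$ and $|[s]\setminus U| = |\mE'\setminus\mU|$. The natural candidate for the intermediate channel is the Hamming-type channel $\HP{t}{e}{U} : \mA^{|\mE'|}\dto\hat\mA^{|\mE'|}$ of Definition~\ref{elementarychannels}: this channel describes exactly the action of an adversary who may corrupt up to $t$ and erase up to $e$ of the edges in $\mE'\cap\mU$ (and who cannot touch edges of $\mE'\setminus\mU$). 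One must check the two finer-than conditions: that $\Omega_\mF[\adv;{\bf S}\to\inn(T)]$ is finer than $\Omega_\mF[{\bf S}\to\mE']\concat \HP{t}{e}{U}\concat\Omega_\mF[\mE'\to\inn(T)]$, and the analogous statement for the frozen-sources channels $\Omega_\mF^J[\cdots\mid\overline{x}]$. This holds because any output the adversary $\AP{t}{e}{\mU}$ can produce at $\inn(T)$ — when it acts anywhere in the network — can also be produced by first computing the honest edge-values on $\mE'$, then letting a $(t,e)$-adversary restricted to $\mE'\cap\mU$ act there, then propagating through the tail channel; this uses Remark~\ref{important1} / Remark~\ref{important2} (the channel factors through the cut $\mE'$) together with the fact that restricting the adversary to a subset of edges can only shrink fan-out sets. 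This is the step I expect to require the most care, since it is the conceptual heart of the ``porting'' idea and the non-antichain cut subtlety (Remark~\ref{noantichain}) is lurking; but the machinery in Section~\ref{secnetwork} is designed precisely so that the factorization holds for arbitrary cuts.

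Granting this, part~\ref{ineq1} of Lemma~\ref{portinglemma} gives
$$\sum_{i\in J}\alpha_i \le \min_{T\in{\bf T}}\,\min\{\CA(\HP{t}{e}{U}) : \mE'\subseteq\mE \text{ is a cut between } {\bf S}_J \text{ and } T\}.$$
Now apply Proposition~\ref{boundext} with $s = |\mE'|$, $u = |U| = |\mE'\cap\mU|$: it yields $\CA(\HP{t}{e}{U}) = s - u + \beta(\mA,u,2t+e+1) = |\mE'\setminus\mU| + \beta(\mA,|\mE'\cap\mU|,2t+e+1)$, which is exactly the bound~(\ref{NNice}).

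For the special case $\mU = \mE$, the bound reads $\sum_{i\in J}\alpha_i \le \min_{T\in{\bf T}}\min\{\beta(\mA,|\mE'|,2t+e+1) : \mE' \text{ a cut between } {\bf S}_J,T\}$. Since $\beta(\mA,u,d)$ is nondecreasing in $u$ (a code in $\mA^{u}$ of minimum distance $d$ extends to one in $\mA^{u+1}$ of minimum distance $d$ by appending a fixed coordinate, so $\beta(\mA,u,d)\le\beta(\mA,u+1,d)$), the inner minimum over cuts $\mE'$ is attained at a minimum cut, i.e.\ at $|\mE'| = \mu({\bf S}_J,T)$. Hence $\sum_{i\in J}\alpha_i \le \min_{T\in{\bf T}}\beta(\mA,\mu({\bf S}_J,T),2t+e+1)$, completing the proof. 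The only routine loose end is the monotonicity of $\beta$, which is immediate from its definition in Notation~\ref{notazbeta}.
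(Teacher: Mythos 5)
Your proof is correct and follows the same route as the paper's: apply part~\ref{ineq1} of Lemma~\ref{portinglemma} with $\Omega[\adv;\mE'\to\mE'] := \HP{t}{e}{U}$, compute that channel's one-shot capacity via Proposition~\ref{boundext}, and for $\mU=\mE$ invoke monotonicity of $\beta(\mA,\cdot,d)$ to reduce the inner minimum to the min-cut. You supply more detail than the paper does in verifying the finer-than hypotheses of the Porting Lemma, but the substance is identical.
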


\begin{proof}
Combine Proposition~\ref{boundext} and part~\ref{ineq1} of Lemma~\ref{portinglemma}.
The second part of the statement follows from the fact that
$\beta(\mA,u',2t+e+1) \le \beta(\mA,u,2t+e+1)$ for all $u \ge u' \ge 1$.
\end{proof}

\begin{remark}
When $\mU \subsetneq \mE$ in Theorem~\ref{THEO1}, the second minimum in (\ref{NNice}) is not realized, in general, by a 
minimum cut $\mE'$ between ${\bf S}_J$ and $T$ (examples can be easily found). Thus the  topology of 
the set $\mU$ of ``vulnerable'' edges plays an important role in the evaluation of the bound in (\ref{NNice}).
\end{remark}

As special cases of Theorem~\ref{THEO1}, we obtain generalizations 
of the Singleton-type and 
Hamming-type bounds established in~\cite{YeungCai} for single-source networks. Note that \textit{any} other
classical bound from Coding Theory can be ported to the networking context via Theorem~\ref{THEO1}, in the general case where the adversary is possibly restricted to operate on a subset $\mU \subseteq \mE$ of vulnerable edges. Observe moreover that no extra property of the edge-cuts $\mE'$ is needed in Theorem~\ref{THEO1}. In particular, we do not require that the sets $\mE'$ are antichain cuts (see also Remark
\ref{noantichain}).

\begin{corollary}[Singleton-type and Hamming-type bound] \label{corohasi}
Let $t,e \ge 0$ be integers, and assume $\adv=\AP{t}{e}{\mE}$. Then for all 
$(\alpha_1,...,\alpha_N) \in \mR(\mN,\adv)$ and for all non-empty $J \subseteq I$ we have 
$$\displaystyle \sum_{i \in J} \alpha_i  \le \min_{T \in {\bf T}}  \, \max \left\{ 0,  \mu({\bf S}_J, T)-2t-e \right\}$$
and
$$\displaystyle \sum_{i \in J} \alpha_i  \le  \min_{T \in {\bf T}} \,   \max \left\{ 0,  {\mu({\bf S}_J,T)} - \log_{|\mA|} \left(\displaystyle \mathlarger{\sum}_{h=0}^{t'} \binom{\mu({\bf S}_J,T)}{h} (|\mA|-1)^h \right) \right\}, \quad\mbox{ where $t':= \lfloor t+e/2 \rfloor$}.$$
\end{corollary}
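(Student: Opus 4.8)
The plan is to derive both bounds as special cases of Theorem~\ref{THEO1} by specializing $\mU = \mE$ and then bounding the inner minimum using two classical bounds from Coding Theory, namely the Singleton bound and the (sphere-packing) Hamming bound.

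First I would invoke Theorem~\ref{THEO1} with $\mU=\mE$, which gives, for any non-empty $J \subseteq I$ and any $(\alpha_1,\dots,\alpha_N)\in\mR(\mN,\adv)$,
$$\sum_{i\in J}\alpha_i \le \min_{T\in\mathbf{T}}\,\beta(\mA,\mu(\mathbf S_J,T),2t+e+1).$$
So everything reduces to showing two upper bounds on $\beta(\mA,\mu,2t+e+1)$, where I abbreviate $\mu:=\mu(\mathbf S_J,T)\ge 1$. For the Singleton-type bound: if $\mC\subseteq\mA^\mu$ has $|\mC|\ge 2$ and $\dH(\mC)\ge d$ where $d:=2t+e+1$, the classical Singleton bound gives $\log_{|\mA|}|\mC|\le \mu-d+1=\mu-2t-e$; since also $\log_{|\mA|}|\mC|\ge\log_{|\mA|}2>0$ when $|\mC|\ge 2$, we get $\beta(\mA,\mu,2t+e+1)\le\max\{0,\mu-2t-e\}$ (the case where no such code exists, i.e.\ $\mu<d$, is covered by the $\max$ with $0$ by the definition of $\beta$ in Notation~\ref{notazbeta}). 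Minimizing over $T\in\mathbf T$ yields the first displayed inequality.

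For the Hamming-type bound: I would use that a code $\mC\subseteq\mA^\mu$ with $\dH(\mC)\ge 2t'+1$, where $t':=\lfloor t+e/2\rfloor$, can correct $t'$ errors, so the balls of radius $t'$ around codewords are disjoint in $\mA^\mu$; counting gives $|\mC|\cdot\sum_{h=0}^{t'}\binom{\mu}{h}(|\mA|-1)^h\le|\mA|^\mu$, hence $\log_{|\mA|}|\mC|\le\mu-\log_{|\mA|}\bigl(\sum_{h=0}^{t'}\binom{\mu}{h}(|\mA|-1)^h\bigr)$. The one thing to check here is that $\dH(\mC)\ge 2t+e+1$ implies $\dH(\mC)\ge 2t'+1$: indeed $2t'+1=2\lfloor t+e/2\rfloor+1\le 2(t+e/2)+1=2t+e+1$, so the Hamming bound applies to every code counted by $\beta(\mA,\mu,2t+e+1)$. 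Combining with the $|\mC|\ge 2$ constraint as before gives $\beta(\mA,\mu,2t+e+1)\le\max\{0,\mu-\log_{|\mA|}(\sum_{h=0}^{t'}\binom{\mu}{h}(|\mA|-1)^h)\}$, and minimizing over $T$ gives the second displayed inequality.

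The main obstacle, such as it is, is purely bookkeeping: making sure the degenerate cases (no code of the required minimum distance exists, or $|\mA|^\mu$ is too small to admit two codewords) are correctly absorbed into the $\max\{0,\cdot\}$, and verifying the floor inequality $2\lfloor t+e/2\rfloor+1\le 2t+e+1$ so that a code with guaranteed distance $2t+e+1$ genuinely corrects $t'=\lfloor t+e/2\rfloor$ errors. There is no deep difficulty; the substance is entirely in Theorem~\ref{THEO1} and the two textbook bounds, so the proof is short.
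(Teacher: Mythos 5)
Your proof is correct and follows exactly the route the paper intends: specialize Theorem~\ref{THEO1} to $\mU=\mE$ and then bound $\beta(\mA,\mu,2t+e+1)$ via the classical Singleton and Hamming (sphere-packing) bounds, with the observation that $2\lfloor t+e/2\rfloor+1\le 2t+e+1$ making the Hamming bound applicable. The paper's own proof is just a one-line pointer to Theorem~\ref{THEO1} plus these two textbook bounds; you have simply filled in the straightforward details, including the $\max\{0,\cdot\}$ bookkeeping that the definition of $\beta$ in Notation~\ref{notazbeta} requires.
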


\begin{proof}
The result can be shown by combining Theorem~\ref{THEO1} and the well known Singleton bound and Hamming bound from classical Coding Theory (see~\cite{plessbook} for a general reference). Note that the proofs of these two bounds do not require the alphabet $\mA$ to be a finite field.
\end{proof}

\begin{example} \label{1shotexample}
Let $\mN$ be the network in Figure~\ref{fig1shotexample}. The edges are ordered according to their indices. Consider the adversary
$\adv:=\adv_{1,0} \langle \mE \rangle$ who can corrupt at most one of
the values of the edges from $\mE$. 
Let $\mA$ be the network alphabet, and let $(\alpha_1,\alpha_2) \in \mR(\mN,\adv)$ be arbitrary. 
\begin{figure}[htbp]
\centering
\begin{tikzpicture}
\tikzset{vertex/.style = {shape=circle,draw,inner sep=0pt,minimum size=1.9em}}
\tikzset{nnode/.style = {shape=circle,fill=myg,draw,inner sep=0pt,minimum
size=1.9em}}
\tikzset{edge/.style = {->,> = stealth}}
\tikzset{dedge/.style = {densely dotted,->,> = stealth}}
\tikzset{ddedge/.style = {dashed,->,> = stealth}}

\node[vertex] (S1) at  (0,0) {$S_1$};
\node[vertex,below=0.75\mynodespace of S1] (S2) {$S_2$};
\node[nnode,right=\mynodespace of S1] (V1)  {$V_1$};
\node[nnode,right=\mynodespace of S2] (V2)  {$V_2$};
\coordinate (A) at (barycentric cs:V1=0.5,V2=0.5);
\node[vertex,right=\mynodespace of A] (T)  {$T$};

\draw[dedge,bend left=10] (S1)  to node[fill=white,inner sep=1pt,sloped]{\small $e_1$} (V1);
\draw[dedge,bend right=10] (S1)  to node[fill=white,inner sep=1pt,sloped]{\small $e_2$} (V1);
\draw[dedge] (S1)  to node[near start,fill=white,inner sep=1pt,sloped]{\small $e_3$} (V2);

\draw[dedge] (S2)  to node[near start,fill=white,inner sep=1pt,sloped]{\small $e_4$} (V1);
\draw[dedge,bend left=10] (S2)  to node[fill=white,inner sep=1pt,sloped]{\small $e_5$} (V2);
\draw[dedge,bend right=10] (S2)  to node[fill=white,inner sep=1pt,sloped]{\small $e_6$} (V2);

\draw[dedge,bend left=10] (V1)  to node[fill=white,inner sep=1pt,sloped]{\small $e_7$} (T);
\draw[dedge,bend right=10] (V1)  to node[fill=white,inner sep=1pt,sloped]{\small $e_8$} (T);

\draw[dedge,bend left=10] (V2)  to node[fill=white,inner sep=1pt,sloped]{\small $e_9$} (T);
\draw[dedge,bend right=10] (V2)  to node[fill=white,inner sep=1pt,sloped]{\small $e_{10}$} (T);
\end{tikzpicture} 
\caption{{{Network for Example~\ref{1shotexample}.}}}\label{fig1shotexample}
\end{figure}
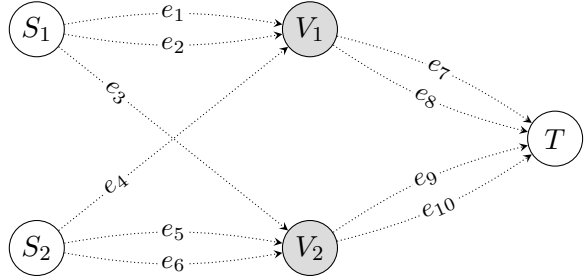

Using the Singleton-Type bound of Corollary~\ref{corohasi} with
$I=\{1\}$ we get $\alpha_1 \le 1$. Applying the same bound with
$I=\{2\}$ and $I=\{1,2\}$ one obtains, respectively, $\alpha_2 \le 1$ and
$\alpha_1 + \alpha_2 \le 2$. Therefore
$\mR(\mN,\adv) \subseteq \{(\alpha_1,\alpha_2) \in \R_{\ge 0}^2 : 
\alpha_1 \le 1,\, \alpha_2 \le 1, \,\alpha_1+\alpha_2 \le 2\}$
for any alphabet $\mA$.

Now assume $\mA=\F_2$. Applying in the same way the Hamming-Type bound  we obtain 
\begin{equation} \label{eqhamm}
\mR(\mN,\adv) \subseteq \{(\alpha_1,\alpha_2) \in \R_{\ge 0}^2 : \alpha_1 \le 1, \,\alpha_2 \le 1, \,\alpha_1+\alpha_2 \le 4-\log_2 5\} \quad \mbox{for $\mA=\F_2$}.
\end{equation}
Using the definition of one-shot capacity region (Definition~\ref{netcap}), it is easy to see that (\ref{eqhamm}) implies that
$\mR(\mN,\adv) \subseteq \{(0,0), (1,0), (0,1)\}$ for $\mA=\F_2$.
Therefore for $\mA=\F_2$ the Hamming-Type Bound is better than the Singleton-type Bound. 
\end{example}

\subsection{Other Capacity Regions}

We now consider multiple adversaries acting on pairwise disjoint sets of network edges, with different error and erasure powers. In this context, the three capacity regions defined in Subsection~\ref{subcapacityregions} can be upper-bounded by combining the Compound Porting Lemma with the results established in Section~\ref{sechammingtype}.

\begin{theorem} \label{THEO2}
Fix an integer $L \ge 1$. Let $\mU_1,...,\mU_L \subseteq \mE$ be pairwise disjoint subsets of edges, and let $t_1,...,t_L \ge 0$ and 
$e_1,...,e_L \ge 0$ be non-negative integers. Define $\bm{\mU}:=(U_1,...,U_L)$, 
$\bm t:=(t_1,...,t_L)$ and $\bm e:=(e_1,...,e_L)$, and assume
$\adv=\AB{t}{e}{\mU}$.
Then for all $(\alpha_1,...,\alpha_N) \in 
\overline{\regzr(\mN,\adv)}$ and for all non-empty subset $J \subseteq I$ we have
\begin{equation} \label{eq1THEO2}
\sum_{i \in J} \alpha_i  \le  \min_{T \in {\bf T}}  \, \min \left\{ 
|\mE'| - \sum_{\ell=1}^L \min \left\{2t_\ell+e_\ell, |\mE' \cap \mU_\ell|\right\}  : \mE' \subseteq \mE \mbox{ is a cut between ${\bf S}_J$ and $T$}\right\}.
\end{equation}
In particular, the inequality in (\ref{eq1THEO2}) holds for all non-empty  $J \subseteq I$ 
and for any $(\alpha_1,...,\alpha_N) \in 
\overline{\mR_{0}(\mN,\adv)}$ and  
$(\alpha_1,...,\alpha_N) \in 
{\mR(\mN,\adv)}$.
\end{theorem}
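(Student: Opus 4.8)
The plan is to obtain (\ref{eq1THEO2}) by feeding the capacity bound for Hamming-type channels of Section~\ref{sechammingtype} into the Compound Porting Lemma, and then to read off the ``in particular'' part from the inclusions among capacity regions recorded in Remark~\ref{rmkregions}. First I would apply Lemma~\ref{portinglemma1} directly: since $\adv=\AB{t}{e}{\mU}$, that lemma gives, for every non-empty $J\subseteq I$ and every $(\alpha_1,\dots,\alpha_N)\in\overline{\regzr(\mN,\adv)}$,
$$\sum_{i\in J}\alpha_i\ \le\ \min_{T\in{\bf T}}\ \min\bigl\{\CAzr(\HH_{\bm{t},\bm{e}}\langle\bm{\mU},\mE'\rangle):\mE'\subseteq\mE\text{ is a cut between }{\bf S}_J\text{ and }T\bigr\}.$$
Hence it suffices to bound $\CAzr(\HH_{\bm{t},\bm{e}}\langle\bm{\mU},\mE'\rangle)$ for a single fixed cut $\mE'$.

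Next I would unwind Definition~\ref{preldefn}. Writing $s:=|\mE'|$ and letting $V_1,\dots,V_L\subseteq[s]$ be the sets constructed there, one has $\HH_{\bm{t},\bm{e}}\langle\bm{\mU},\mE'\rangle=\HB{t}{e}{V}$ with $\bm{V}=(V_1,\dots,V_L)$, where the $V_\ell$ are pairwise disjoint (because the $\mU_\ell$ are) and $|V_\ell|=|\mE'\cap\mU_\ell|$ for all $1\le\ell\le L$. Applying Theorem~\ref{mainthhamming} to the channel $\HB{t}{e}{V}$ --- that is, with the ambient dimension instantiated to $s=|\mE'|$ and the disjoint coordinate sets instantiated to the $V_\ell$ --- yields
$$\CAzr(\HB{t}{e}{V})\ \le\ s-\sum_{\ell=1}^{L}\min\{2t_\ell+e_\ell,|V_\ell|\}\ =\ |\mE'|-\sum_{\ell=1}^{L}\min\{2t_\ell+e_\ell,|\mE'\cap\mU_\ell|\}.$$
Substituting this into the bound produced by Lemma~\ref{portinglemma1} and taking the two minima gives (\ref{eq1THEO2}).

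Finally, the ``in particular'' assertion follows from Remark~\ref{rmkregions}: for any adversary one has $\mR(\mN,\adv)\subseteq\regz(\mN,\adv)\subseteq\overline{\regz(\mN,\adv)}$, and for $\adv=\AB{t}{e}{\mU}$ one additionally has $\overline{\regz(\mN,\adv)}\subseteq\overline{\regzr(\mN,\adv)}$; therefore both $\mR(\mN,\adv)$ and $\overline{\regz(\mN,\adv)}$ are contained in $\overline{\regzr(\mN,\adv)}$, so (\ref{eq1THEO2}) applies to their elements as well. Since Lemma~\ref{portinglemma1} and Theorem~\ref{mainthhamming} are already in hand, I do not expect a real obstacle here; the only point that needs care is the bookkeeping in the middle step --- matching the edge-counting quantities $|\mE'|$ and $|\mE'\cap\mU_\ell|$ with the coordinate-counting parameters $s$ and $|V_\ell|$ appearing in Definition~\ref{preldefn} and Theorem~\ref{mainthhamming}. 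It is also worth recalling (cf.\ the remark after Theorem~\ref{THEO1}) that when $\mU_1\cup\cdots\cup\mU_L\subsetneq\mE$ the inner minimum over cuts $\mE'$ need not be attained at a minimum cut, so (\ref{eq1THEO2}) cannot in general be rewritten in terms of $\mu({\bf S}_J,T)$.
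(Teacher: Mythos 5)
Your proposal is correct and matches the paper's own proof, which is exactly the two-line combination of Lemma~\ref{portinglemma1} with Theorem~\ref{mainthhamming} followed by an appeal to Remark~\ref{rmkregions}. You have simply written out the bookkeeping (identifying $s=|\mE'|$ and $|V_\ell|=|\mE'\cap\mU_\ell|$ via Definition~\ref{preldefn}) that the paper leaves implicit.
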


\begin{proof} It suffices to combine Theorem~\ref{mainthhamming} and Lemma~\ref{portinglemma1}. 
The second part of the statement follows from the fact that 
$\mR(\mN,\adv) \subseteq \overline{\regz(\mN,\adv)} \subseteq \overline{\regzr(\mN,\adv)}$,
as observed in Remark~\ref{rmkregions}.
\end{proof}

We now show how to apply 
Theorem~\ref{THEO2} in a concrete example.

\begin{example} \label{exfinale}
Let $\mN$ be the network in Figure~\ref{figexfinale}. The network edges are ordered according to their indices. We consider two adversaries associated
with the sets of edges $\mU_1=\{e_5,e_6,e_7\}$ and 
$\mU_2=\{e_1,e_8,e_9,e_{10},e_{11},e_{12}\}$. 
Both the adversaries have error power 1 and erasure power 0. 
In our notation, the adversary is therefore $\adv=\AB{t}{e}{\mU}$, 
where $\bm{\mU}=(\mU_1,\mU_2)$, $\bm{t}=(1,1)$ and $\bm{e}=(0,0)$. 

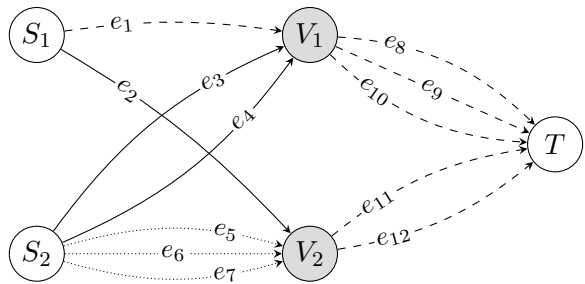
\begin{figure}[htbp]
\centering
\begin{tikzpicture}
\tikzset{vertex/.style = {shape=circle,draw,inner sep=0pt,minimum size=1.9em}}
\tikzset{nnode/.style = {shape=circle,fill=myg,draw,inner sep=0pt,minimum
size=1.9em}}
\tikzset{edge/.style = {->,> = stealth}}
\tikzset{dedge/.style = {densely dotted,->,> = stealth}}
\tikzset{ddedge/.style = {dashed,->,> = stealth}}

\node[vertex] (S1) {$S_1$};
\node[vertex,below=0.75\mynodespace of S1] (S2) {$S_2$};

\node[nnode,right=\mynodespace of S1] (V1) {$V_1$};
\node[nnode,right=\mynodespace of S2] (V2) {$V_2$};
\coordinate (A) at (barycentric cs:V1=0.5,V2=0.5);
\node[vertex,right=\mynodespace of A] (T) {$T$};

\draw[ddedge,bend left=8] (S1)  to node[near start,sloped,fill=white, inner sep=1pt]{\small $e_1$} (V1);

\draw[edge,bend left=8] (S1) to  node[near start,sloped,fill=white, inner sep=1pt]{\small $e_2$} (V2);

\draw[edge,bend left=15] (S2)  to node[near end,sloped,fill=white, inner sep=1pt]{\small $e_3$} (V1);

\draw[edge,bend right=15] (S2) to  node[near end,sloped,fill=white, inner sep=1pt]{\small $e_4$} (V1);

\draw[dedge,bend left=16] (S2)  to node[near end,sloped,fill=white, inner sep=1pt]{\small $e_5$} (V2);

\draw[dedge] (S2)  to node[sloped,fill=white, inner sep=1pt]{\small $e_6$} (V2);

\draw[dedge,bend right=16] (S2)  to node[near end,sloped,fill=white, inner sep=1pt]{\small $e_7$} (V2);

\draw[ddedge,bend left=20] (V1)  to node[near start,sloped,fill=white, inner sep=1pt]{\small $e_8$} (T);

\draw[ddedge] (V1)  to node[sloped,fill=white, inner sep=1pt]{\small $e_9$} (T);

\draw[ddedge,bend right=20] (V1)  to node[near start,sloped,fill=white, inner sep=1pt]{\small $e_{10}$} (T);

\draw[ddedge,bend left=15] (V2)  to node[near start,sloped,fill=white, inner sep=1pt]{\small $e_{11}$} (T);

\draw[ddedge,bend right=15] (V2)  to node[near start,sloped,fill=white, inner sep=1pt]{\small $e_{12}$} (T);

\end{tikzpicture} 
\caption{{{Network for Example~\ref{exfinale}.}}}\label{figexfinale}
\end{figure}

We want to describe the three capacity regions of
$(\mN,\adv)$. Let $(\alpha_1,...,\alpha_N) \in 
\overline{\regzr(\mN,\adv)}$ be arbitrary.
Applying Theorem~\ref{THEO2} with $I=\{1\}$ and $\mE'=\{e_1,e_2\}$ we obtain $\alpha_1 \le 2-1=1$. Using the same theorem with
$I=\{2\}$ and $\mE'=\{e_5,e_6,e_7,e_8,e_9,e_{10}\}$ we find 
$\alpha_2 \le 6-2-2=2$. Finally, applying again Theorem~\ref{THEO2}
with $I=\{1,2\}$ and $\mE'=\{e_2,e_5,e_6,e_7,e_8,e_9,e_{10}\}$ we obtain
$\alpha_1+\alpha_2 \le 3$.
Therefore by Remark~\ref{rmkregions} we deduce
$$\mR(\mN,\adv) \subseteq \overline{\regz(\mN,\adv)} \subseteq 
\overline{\regzr(\mN,\adv)} \subseteq \{(\alpha_1,\alpha_2) \in \R_{\ge 0}^2 : 
\alpha_1 \le 1, \,\alpha_2 \le 2, \,\alpha_1+\alpha_2 \le 3\}.$$

We now explicitly give a communication scheme 
that achieves the rate $(\alpha_1,\alpha_2)=(1,2)$ in one shot
for $\mA=\F_5$. It can be shown that such a scheme exists for
$\mA=\F_q$ whenever $q \ge 4$.

We start by assigning the network code functions. Since erasures 
are excluded from the model, it suffices to assign functions
$\mF_{V_1}: \F_5^3 \to \F_5^3$ and $\mF_{V_2}: \F_5^4 \to \F_5^2$.
For all $(x_1,x_3,x_4) \in \F_5^3$, define $\mF_{V_1}(x_1,x_3,x_4):=(x_1+2x_3+3x_4,x_3,x_4)$. To define $\mF_2$,
we first select any function $f:\F_5^3 \to \F_5$ with the following property:
for all $(x_5,x_6,x_7) \in \F_5^3$ and $x \in \F_5$,  
$f(x_5,x_6,x_7)=x$ if at least two among $x_5,x_6,x_7$ equal $x$.
The function $f$ is an arbitrary extension of a ``majority vote'' decoding function. Now for all 
$(x_2,x_5,x_6,x_7) \in \F_5^4$ define $\mF_{V_2}(x_2,x_5,x_6,x_7):=(x_2+f(x_5,x_6,x_7),2x_2)$.

Let us construct the local codes $\mC_1 \subseteq \F_5^2$ and $\mC_2 \subseteq \F_5^5$ for the two sources
$S_1$ and $S_2$.
We take $\mC_1:=\{(a,3a) : a \in \F_5\}$ as code for source $S_1$, and 
$\mC_2:=\{(b,c,2b+c,2b+c,2b+c) : (b,c) \in \F_5 \times \F_5\}$ as code for source $S_2$. 
Note that $\log_5 |\mC_1|=1=\alpha_1$ and $\log_5 |\mC_2|=2=\alpha_2$.

It is easy to see that for any transmitted $(\overline{a},3\overline{a}) \in \mC_1$ and 
$(\overline{b},\overline{c},2\overline{b}+\overline{c},2\overline{b}+\overline{c},2\overline{b}+\overline{c}) \in \mC_2$,
 terminal $T$ observes over the incoming edges the vector
\begin{equation}\label{ricev}
(\overline{a}+2\overline{b}+3\overline{c}, \,\overline{b}, \,\overline{c}, \,3\overline{a}+2\overline{b}+\overline{c},\,\overline{a})  +  w,
\end{equation}
where $w \in \F_5^5$ is a vector of Hamming weight at most $1$. Note moreover that
$$(\overline{a}+2\overline{b}+3\overline{c}, \,\overline{b}, \,\overline{c}, \,3\overline{a}+2\overline{b}+\overline{c},\,\overline{a}) = 
(\overline{a},\overline{b},\overline{c}) \cdot G, \quad \mbox{ where } 
G:=\begin{pmatrix}  1& 0 & 0 & 3 & 1 \\ 2 & 1 & 0 & 2 & 0 \\ 3 & 0 & 1 & 1 & 0\end{pmatrix} \in \F_5^{3 \times 5}.$$
One can check that $G$ is the generator matrix of a
 code $\mD \subseteq \F_5^5$ of dimension 3 
and minimum Hamming distance 3.
Thus $T$ can recover $\overline{a}, \overline{b}$ and $\overline{c}$ from (\ref{ricev}) using a minimum distance decoder for $\mD$.
\end{example}

Theorem 
\ref{THEO2} implies the following corollary describing special types of adversaries.

\begin{corollary}\label{corosingle}  
Let $t \ge 0$ be an integer, and assume $\adv=\AP{t}{0}{\mE}$. For all 
$(\alpha_1,...,\alpha_N) \in \overline{\regzr(\mN,\adv)}$ and for all non-empty $J \subseteq I$ we have 
$$\displaystyle \sum_{i \in J} \alpha_i  \le  \min_{T \in {\bf T}} \,  \max \left\{ 0, \,\mu({\bf S}_J, T)-2t \right\}.$$
In particular, if $\mN$ is adversary-free then for all 
$(\alpha_1,...,\alpha_N) \in \overline{\regzr(\mN,\adv)}$ and all $\emptyset \neq J \subseteq I$ we have 
$$\displaystyle \sum_{i \in J} \alpha_i  \le  \min_{T \in {\bf T}}  \mu({\bf S}_J, T).$$
Moreover, the two bounds above hold for all non-empty  $J \subseteq I$ 
and for any $(\alpha_1,...,\alpha_N) \in 
\overline{\mR_{0}(\mN,\adv)}$ and  
$(\alpha_1,...,\alpha_N) \in 
{\mR(\mN,\adv)}$.
\end{corollary}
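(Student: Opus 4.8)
The plan is to obtain the corollary as a direct specialization of Theorem~\ref{THEO2} to a single, unrestricted adversary. I would set $L:=1$, $\mU_1:=\mE$, $t_1:=t$ and $e_1:=0$, so that $\adv=\AB{t}{e}{\mU}$ with $\bm{\mU}=(\mE)$ is exactly the adversary $\AP{t}{0}{\mE}$ of the statement (the ``pairwise disjoint'' hypothesis being vacuous for a single set). For any cut $\mE'\subseteq\mE$ between ${\bf S}_J$ and $T$ we have $\mE'\cap\mU_1=\mE'$, hence the expression inside the inner minimum of (\ref{eq1THEO2}) simplifies to
$$|\mE'|-\min\{2t,|\mE'|\}=\max\{0,\,|\mE'|-2t\}.$$
Therefore Theorem~\ref{THEO2} already gives, for every non-empty $J\subseteq I$ and every $(\alpha_1,\dots,\alpha_N)\in\overline{\regzr(\mN,\adv)}$,
$$\sum_{i\in J}\alpha_i\le\min_{T\in{\bf T}}\,\min\left\{\max\{0,\,|\mE'|-2t\}:\mE'\subseteq\mE\text{ is a cut between }{\bf S}_J\text{ and }T\right\}.$$

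The second step is to evaluate the inner minimum. The map $x\mapsto\max\{0,x-2t\}$ is non-decreasing on $\N$, so among all edge-cuts separating ${\bf S}_J$ from $T$ the quantity $\max\{0,|\mE'|-2t\}$ is minimized by a cut of smallest cardinality; by the definition of $\mu$ in Notation~\ref{nnnn} such a cut exists and has size $\mu({\bf S}_J,T)$. This yields
$$\sum_{i\in J}\alpha_i\le\min_{T\in{\bf T}}\,\max\{0,\,\mu({\bf S}_J,T)-2t\},$$
which is the first displayed inequality. For the adversary-free case I would simply substitute $t=0$; since $\mu({\bf S}_J,T)\ge 1$ by property~\ref{prnE} of Definition~\ref{defnetwork}, the right-hand side becomes $\min_{T\in{\bf T}}\mu({\bf S}_J,T)$.

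The final ``moreover'' assertion requires no new work: by Remark~\ref{rmkregions} one has the inclusions $\mR(\mN,\adv)\subseteq\overline{\regz(\mN,\adv)}\subseteq\overline{\regzr(\mN,\adv)}$, so any inequality valid on $\overline{\regzr(\mN,\adv)}$ holds a fortiori on $\overline{\mR_{0}(\mN,\adv)}$ and on $\mR(\mN,\adv)$. I do not anticipate a genuine obstacle here: all of the substance is already packaged in Theorem~\ref{THEO2} (and, through it, in Lemma~\ref{portinglemma1} and Theorem~\ref{mainthhamming}), and the only points deserving explicit comment are the monotonicity argument that collapses the minimum over all cuts to the min-cut and the harmless observation that taking $e=0$ loses nothing for this family of adversaries.
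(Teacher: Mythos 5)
Your proof is correct and takes exactly the route the paper intends: the paper offers no separate argument for Corollary~\ref{corosingle}, stating only that it follows from Theorem~\ref{THEO2}, and your specialization ($L=1$, $\mU_1=\mE$, $e_1=0$), the simplification $|\mE'|-\min\{2t,|\mE'|\}=\max\{0,|\mE'|-2t\}$, the monotonicity reduction to the min-cut $\mu({\bf S}_J,T)$, and the appeal to Remark~\ref{rmkregions} for the ``moreover'' part are precisely the missing details. The $t=0$ substitution together with $\mu({\bf S}_J,T)\ge 1$ handles the adversary-free case cleanly.
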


We conclude this section studying the scenario where the network alphabet is of the form $\mA=\mB^m$, where 
$\mB$ is a finite set with $|\mB| \ge 2$ and $m \ge 2$ is an integer.
Consider an adversary $\adv$ who can erase up to $e$ components and corrupt up to $t$ components 
of each edge of the network, where $t,e \ge 0$ are non-negative integers.
We denote such an adversary by $\adv_{t,e} \langle \mB,m \rangle$. 
Combining Theorem~\ref{productalphabetthm} and 
Lemma~\ref{portinglemma} one easily obtains the following result.

\begin{theorem}\label{bndproduct}
Let $t,e \ge 0$ be non-negative integers, and assume
$\adv=\AP{t}{e}{\mB,m}$.
Then for all $(\alpha_1,...,\alpha_N) \in 
\overline{\mR_{0}(\mN,\adv)}$ and for all non-empty subset $J \subseteq I$ we have
\begin{equation} \label{eq1bndproduct}
m \cdot \sum_{i \in J} \alpha_i  \le  \min_{T \in {\bf T}}  \mu({\bf S}_J,T) \cdot \max\{0, \, m-2t-e\}.
\end{equation}
In particular, the inequality in (\ref{eq1bndproduct}) holds for all non-empty  $J \subseteq I$ 
and for all $(\alpha_1,...,\alpha_N) \in 
{\mR(\mN,\adv)}$.
\end{theorem}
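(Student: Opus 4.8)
The plan is to apply Lemma~\ref{portinglemma}, the one-shot/zero-error Porting Lemma, with the family of intermediate channels obtained by restricting the adversary to a cut, namely $\Omega[\adv;\mE'\to\mE'] := \HP{t}{e}{\mB,m,|\mE'|}$ (the Hamming-type channel over the product alphabet $\mA=\mB^m$ of Theorem~\ref{productalphabetthm}), and then to substitute into its conclusion the capacity bound of that theorem.

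First I would check that this choice of intermediate channels meets the hypotheses of Lemma~\ref{portinglemma}. Fix a network code $\mF$, a terminal $T$, a non-empty $J\subseteq I$, a cut $\mE'$ separating ${\bf S}_J$ from $T$, and (when $J\subsetneq I$) a frozen input $\overline{x}\in\prod_{i\in I\setminus J}\mX_i$; the case $J=I$ is identical and easier. By Remark~\ref{important1} (resp.\ Remark~\ref{important2}) the error-free transfer from ${\bf S}$ (resp.\ ${\bf S}_J$) to $\inn(T)$ factors through $\mE'$. The point is that $\adv=\AP{t}{e}{\mB,m}$, which may corrupt up to $t$ and erase up to $e$ sub-components of \emph{every} edge, can in particular elect to touch only the edges on the cut $\mE'$ and leave all other edges intact; the family of outputs this produces at $\inn(T)$ is exactly the fan-out of $\Omega_\mF[{\bf S}\to\mE']\concat\HP{t}{e}{\mB,m,|\mE'|}\concat\Omega_\mF[\mE'\to\inn(T)]$. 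Hence this concatenation is finer than $\Omega_\mF[\adv;{\bf S}\to\inn(T)]$, and similarly in the conditioned version, which is precisely the requirement of Lemma~\ref{portinglemma}.

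Granting this, part~\ref{ineq2} of Lemma~\ref{portinglemma} yields, for every $(\alpha_1,\dots,\alpha_N)\in\overline{\regz(\mN,\adv)}$ and every non-empty $J\subseteq I$,
\[
\sum_{i\in J}\alpha_i \;\le\; \min_{T\in{\bf T}}\ \min\bigl\{\,\CAz\bigl(\HP{t}{e}{\mB,m,|\mE'|}\bigr) : \mE'\subseteq\mE\text{ a cut between }{\bf S}_J\text{ and }T\,\bigr\}.
\]
By Theorem~\ref{productalphabetthm} (capacities in base $|\mA|=|\mB|^m$) one has $\CAz(\HP{t}{e}{\mB,m,s})\le (s/m)\max\{0,m-2t-e\}$ for every $s\ge1$. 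Since this bound is non-decreasing in $s$, the inner minimum over cuts is attained at a cut of minimum size, which by Notation~\ref{nnnn} has $|\mE'|=\mu({\bf S}_J,T)$. Substituting and multiplying through by $m$ gives $m\sum_{i\in J}\alpha_i\le\min_{T\in{\bf T}}\mu({\bf S}_J,T)\max\{0,m-2t-e\}$, which is~(\ref{eq1bndproduct}); the ``in particular'' assertion follows from $\mR(\mN,\adv)\subseteq\overline{\regz(\mN,\adv)}$ (Remark~\ref{rmkregions}), or directly from part~\ref{ineq1} of Lemma~\ref{portinglemma}.

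I expect the one genuinely delicate step to be the hypothesis verification in the second paragraph: one must argue carefully that ``$\adv$ attacking only the edges of $\mE'$'' is captured \emph{exactly} by inserting $\HP{t}{e}{\mB,m,|\mE'|}$ as the middle concatenation factor, so that the finer-than relation of the Porting Lemma really holds (and, in the product-alphabet setting, that the extended alphabet at the terminals is $\hat\mB^m$, so that partial edge erasures are admissible). Everything else is routine; the only bookkeeping to watch is keeping all logarithms in base $|\mA|=|\mB|^m$, which is what produces the factor $m$ on the left-hand side that distinguishes this bound from its single-alphabet analogue in Corollary~\ref{corosingle}.
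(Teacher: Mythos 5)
Your proof is correct and follows precisely the route the paper intends: the published proof consists of the single sentence preceding the statement, namely that the result follows by combining Theorem~\ref{productalphabetthm} with Lemma~\ref{portinglemma}, and that is exactly the argument you have fleshed out (choose $\Omega[\adv;\mE'\to\mE']:=\HP{t}{e}{\mB,m,|\mE'|}$, verify the finer-than hypothesis, apply part~\ref{ineq2} of the Porting Lemma, insert the bound $\CAz(\HP{t}{e}{\mB,m,s})\le (s/m)\max\{0,m-2t-e\}$, and minimize over cuts by taking a min-cut). The subtlety you flag is genuine---strictly speaking $\HP{t}{e}{\mB,m,|\mE'|}$ has output alphabet $(\hat\mB^m)^{|\mE'|}$, and a partially-erased vector in $\hat\mB^m$ belongs neither to $\mA$ nor equals $\star$, so it is not an element of the paper's extended alphabet $\hat\mA=\mA\cup\{\star\}$; one must tacitly enlarge the domain of the network-code functions so that such partial erasures are admissible before the concatenation in the Porting Lemma is even defined---but the paper's own one-line proof leaves this exactly as implicit as you do.
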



\section{Capacity Regions: Constructions} \label{secconstructions}

This section studies the achievability of some of the upper bounds established in 
Section~\ref{secbounds}, both for one-shot and (compound) zero-error models. 
We describe communication schemes that achieve any integer point of the capacity region
of certain adversarial networks, most of which are inspired by
previously proposed approaches (in particular, by~\cite{MANIAC} and~\cite{koettermedard}).
We also show that, for the case of restricted adversaries, linear network coding does not suffice in 
general
to achieve every point in the capacity region of adversarial networks.
In the sequel we follow the notation of Section~\ref{secbounds}.

\subsection{Adversary-Free Scenario}
We start by investigating the adversary-free scenario, showing how in this case
the algebraic approach of~\cite{koettermedard} can be extended to achieve, 
in one shot, any integer point in the rate region 
described by the bound of Corollary~\ref{corosingle}, over sufficiently large fields.

\begin{remark}
Let  $(a_1,...,a_N) \in \N^N$ be an integer vector that satisfies $\sum_{i \in J} a_i  \le  \min_{T \in {\bf T}}  \mu({\bf S}_J, T)$
for all  $\emptyset \neq J \subseteq I$ (cf. Corollary~\ref{corosingle}).
To show that $(a_1,...,a_N)$ can be achieved in one shot in an adversary-free context, it does not suffice to
directly apply the approach of~\cite{koettermedard} to the network obtained
from $\mN$
by adding a super-source connected to all the sources $S_1,...,S_N$ with a sufficiently large number of edges.
Indeed, this approach would only show that there exists a communication scheme that allows 
the \textit{set} of sources $\{S_1,...,S_N\}$ to 
transmit to all the terminals at a \textit{global rate} of $a_1+a_2+ \cdots +a_N$. However, 
 such a scheme does not necessarily allow
each source $S_i$ to transmit at the prescribed rate $a_i$, for all $i \in I$. In other words, in the notation 
of Definition~\ref{netcap}, such a scheme does not necessarily induce a global code $\mC$ for the sources that decomposes as a cartesian product, defining local codes $\mC_1,...,\mC_N$.
This issue can be solved by extending the algebraic approach 
of~\cite{koettermedard} to multi-source networks via the following Graph Theory lemma.
\end{remark}

\begin{lemma} \label{lemmagt}
Let $(a_1,...,a_N) \in \N^N$ be an integer $N$-tuple such that
$\sum_{i \in J} a_i  \le  \min_{T \in {\bf T}}  \mu({\bf S}_J,T)$
for all non-empty  $J \subseteq I$.
Then for each terminal $T \in {\bf T}$ there exist $a_1+a_2+ \cdots + a_N$ edge-disjoint directed paths connecting 
${\bf S}=\{S_1,...,S_N\}$ to $T$,  $a_i$ of which originate in $S_i$ for all $i \in I$.
\end{lemma}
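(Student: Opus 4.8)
The plan is to derive the lemma from the max-flow/min-cut theorem applied to an auxiliary network obtained from $\mN$ by attaching a super-source.

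First I would fix a terminal $T \in {\bf T}$ and build a network $\widetilde{\mN}$ as follows: add a new vertex $S^\ast$, join $S^\ast$ to each source $S_i$ by exactly $a_i$ parallel directed edges, keep all the edges of $\mN$, and declare every edge of $\widetilde{\mN}$ to have capacity $1$. Since $S^\ast$ has only outgoing edges and $\mN$ is acyclic, $\widetilde{\mN}$ is again a directed acyclic multigraph. Put $a := a_1 + \cdots + a_N$. Because the total capacity of the edges leaving $S^\ast$ is exactly $a$, it suffices to produce an integer-valued $S^\ast$--$T$ flow in $\widetilde{\mN}$ of value $a$: such a flow necessarily saturates every edge out of $S^\ast$, and flow conservation at $S_i$ then forces exactly $a_i$ units to pass through $S_i$; decomposing the flow into paths will yield the required family, as I explain in the last step.

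The key step is to check that the minimum capacity of a cut separating $S^\ast$ from $T$ in $\widetilde{\mN}$ equals $a$. Let $A$ be any vertex set with $S^\ast \in A$ and $T \notin A$, and set $J := \{ i \in I : S_i \in A\}$. The edges of $\widetilde{\mN}$ going from $A$ to its complement split into the $\sum_{i \notin J} a_i$ super-edges $S^\ast \to S_i$ with $i \notin J$ and a set $C \subseteq \mE$ of original edges. Any directed path in $\mN$ from a source $S_i$ with $i \in J$ to $T$ begins inside $A$ and ends outside $A$, hence uses at least one edge of $C$; therefore $C$ is an edge-cut between ${\bf S}_J$ and $T$ in $\mN$, so $|C| \ge \mu({\bf S}_J,T)$ whenever $J \ne \emptyset$. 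Invoking the hypothesis $\sum_{i \in J} a_i \le \mu({\bf S}_J, T)$, the capacity of the cut is $\sum_{i \notin J} a_i + |C| \ge \sum_{i \notin J} a_i + \sum_{i \in J} a_i = a$ (and the bound is trivial when $J = \emptyset$, since then all super-edges cross). Conversely the cut $A = \{S^\ast\}$ has capacity exactly $a$, so the min cut is $a$. By the max-flow/min-cut theorem together with the integrality of maximum flows for integer capacities, there exists an integer $S^\ast$--$T$ flow $f$ of value $a$ in $\widetilde{\mN}$.

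Finally I would decompose $f$ into paths. As $S^\ast$ has no incoming edges, $f$ saturates all super-edges, and since $\widetilde{\mN}$ is acyclic, $f$ decomposes into $a$ directed $S^\ast$--$T$ paths each carrying one unit of flow. Deleting the initial super-edge of each path turns it into a directed path in $\mN$ from some $S_i$ to $T$; since every edge of $\mN$ has capacity $1$ in $\widetilde{\mN}$, it lies on at most one of these paths, so the resulting ${\bf S}$--$T$ paths are pairwise edge-disjoint. Exactly $a_i$ of the decomposition paths use a super-edge $S^\ast \to S_i$ (that bundle carries $a_i$ units), hence $a_i$ of the ${\bf S}$--$T$ paths originate at $S_i$, which is the claim. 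The only delicate point in the argument is the cut computation of the third step --- specifically, recognizing the original edges of an arbitrary $S^\ast$--$T$ cut as an edge-cut between ${\bf S}_J$ and $T$ for the right set $J$ --- after which everything follows from standard facts about network flows.
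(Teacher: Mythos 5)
Your proof is correct and takes essentially the same approach as the paper: both attach a super-source connected to each $S_i$ by $a_i$ parallel edges and reduce the lemma to showing that the minimum cut separating the super-source from $T$ in this auxiliary graph has size $a_1+\cdots+a_N$. The paper invokes Menger's theorem and argues by contradiction on an arbitrary edge-cut, while you invoke max-flow/min-cut on vertex-partition cuts and decompose the resulting integral flow into paths, but these are just two standard phrasings of the same underlying cut argument.
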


\begin{proof}
We start by adding a vertex $S \notin \mV$ to the digraph $\mG:=(\mV,\mE)$. For all $i \in I$, connect $S$ to $S_i$
with exactly $a_i$ directed edges. This defines a new digraph $\mG':=(\mV',\mE')$. Fix any terminal $T \in {\bf T}$. 
By the edge-connectivity version of Menger's Theorem,
it suffices to show that the minimum size of an edge-cut between $S$ and $T$ (in the graph $\mG'$) is $a:=a_1+a_2+\cdots +a_N$, i.e., that
$\mu(S,T)=a$, where this time $\mu$ denotes the min-cut in the new graph $\mG'$.

It is clear that $\mu(S,T) \le a$. Assume by contradiction that there exists a cut $\Gamma \subseteq \mE'$ 
with $|\Gamma|<a$ that separates $S$ from $T$.
For all $i \in I=\{1,...,N\}$, denote by $\mE_i$ the set of directed edges connecting $S$ to $S_i$, and let $n_i:=|\Gamma \cap \mE_i|$. Since
$\mE_i \cap \mE_j = \emptyset$ for all $i,j \in I$ with $i \neq j$, we have 
\begin{equation} \label{idimp}
\sum_{i \in I} n_i = \sum_{i \in I}|\Gamma \cap \mE_i| \le |\Gamma| < a=\sum_{i \in I} a_i.
\end{equation}
Moreover, for all $i \in I$ one clearly has $n_i =|\Gamma \cap \mE_i| \le |\mE_i|=a_i$. Define  
$J:=\{i \in I : n_i<a_i\} \subseteq I$.

If $J=\emptyset$ then $a_i=n_i$ for all $i \in I$. This contradicts (\ref{idimp}). Now assume $J \neq \emptyset$,
and observe that for all $i \in J$ the cut $\Gamma$ does not disconnect $S$ from $S_i$. Hence 
$\Gamma$ must be a cut between ${\bf S}_J$ and  $T$. As a consequence, the set
$\Gamma \setminus \bigcup_{i \in I \setminus J} \mE_i$
is a cut between ${\bf S}_J$ and $T$. Therefore 
\begin{equation*}\mu({\bf S}_{J},T)  \le  \left| \Gamma \setminus \bigcup_{i \in I \setminus J} \mE_i  \right|  =  |\Gamma| - 
\sum_{i \in I \setminus J} n_i  = |\Gamma| - \sum_{i \in I \setminus J} a_i  <  a -\sum_{i \in I \setminus J} a_i
 =  \sum_{i \in J} a_i,
\end{equation*} a contradiction. This concludes the proof.
\end{proof}

Lemma~\ref{lemmagt} can now be employed to extend the approach of~\cite{koettermedard} from one to multiple sources,
obtaining the following achievability result for adversary-free 
networks. 

\begin{theorem} \label{THM1}
Assume $\adv= \AP{0}{0}{\mE}$. We have
$$\mR(\mN,\adv) \supseteq 
\left\{(a_1,...,a_N) \in \N^N \, :\, \sum_{i \in J} a_i  \le   
\min_{T \in {\bf T}}  \mu({\bf S}_J, T) \mbox{ for all $\emptyset \neq J \subseteq I$}\right\},$$
provided that $\mA=\F_q$ and $q$ is sufficiently large. Moreover,
every integer rate vector $(a_1,...,a_N)$ as above can be achieved in one shot employing 
linear network coding.
\end{theorem}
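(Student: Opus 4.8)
The plan is to reduce to the classical single\nobreakdash-source algebraic network coding theorem of Koetter and M\'edard~\cite{koettermedard}, applied to an auxiliary digraph with a super\nobreakdash-source, using Lemma~\ref{lemmagt} to certify the required connectivity, and then to translate the resulting linear code back to $\mN$ so that the global code decomposes as the needed Cartesian product.

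First I would fix an integer vector $(a_1,\dots,a_N)$ satisfying the cut inequalities and form the digraph $\mG'=(\mV',\mE')$ obtained from $(\mV,\mE)$ by adjoining a new vertex $S$ joined to each $S_i$ by exactly $a_i$ parallel directed edges; write $a:=a_1+\cdots+a_N$. By Lemma~\ref{lemmagt}, for every terminal $T\in{\bf T}$ there are $a$ edge\nobreakdash-disjoint directed paths from $S$ to $T$ in $\mG'$, $a_i$ of which start at $S_i$. In $\mG'$ the vertices $S_1,\dots,S_N$ are treated as ordinary (coding) intermediate vertices, while $S$ performs no coding: its $a$ output edges simply carry the $a$ free symbols, partitioned into blocks $\mathbf v_i\in\F_q^{a_i}$ fed into $S_i$. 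For each $T$, the transfer matrix $M_T$ from the $a$ edges out of $S$ to the edges of $\inn(T)$ is a matrix whose entries are polynomials in the finitely many local coding coefficients (those at $S_1,\dots,S_N$ and at the vertices of $\mV\setminus({\bf S}\cup{\bf T})$), of degree bounded in terms of $|\mE|$. Edge\nobreakdash-disjointness of the $a$ $S$--$T$ paths guarantees, by the argument of~\cite{koettermedard} (choosing coefficients along the paths so as to realize an identity\nobreakdash-like $a\times a$ submatrix), that some $a\times a$ minor of $M_T$ is a nonzero polynomial. The product over $T\in{\bf T}$ of these minors is then a nonzero polynomial in finitely many variables, of degree at most roughly $|{\bf T}|\cdot a\cdot|\mE|$, so for $q$ sufficiently large there is a choice of coefficients in $\F_q$ making all of them nonzero simultaneously; fix such a choice, so that every $M_T$ has full column rank $a$.

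Next I would translate this $\mG'$-code back to $\mN$. Let $L_{S_i}\colon\F_q^{a_i}\to\F_q^{|\out(S_i)|}$ be the linear map the chosen code performs at $S_i$, and for $V\in\mV\setminus({\bf S}\cup{\bf T})$ let $\mF_V$ be the map the code performs at $V$, extended arbitrarily to inputs containing $\star$ (harmless, since in the adversary\nobreakdash-free channel no erasure symbol is ever produced). This defines a linear network code $\mF$ for $\mN$ in the sense of Definition~\ref{defK}. Since $M_T$ has rank $a=\sum_i a_i$ and factors through the block\nobreakdash-diagonal encoding matrix $\operatorname{diag}(L_{S_1},\dots,L_{S_N})$, whose rank is at most $\sum_i\operatorname{rank}(L_{S_i})\le\sum_i a_i=a$, each $L_{S_i}$ must have rank exactly $a_i$, hence be injective; set $\mC_i:=L_{S_i}(\F_q^{a_i})\subseteq\mX_i$, so $|\mC_i|=q^{a_i}$ and $\log_{|\mA|}|\mC_i|=a_i$, and put $\mC:=\mC_1\times\cdots\times\mC_N$.

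Finally, since $\adv=\AP{0}{0}{\mE}$ the channel $\Omega_\mF[\adv;{\bf S}\to\inn(T)]=\Omega_\mF[{\bf S}\to\inn(T)]$ is deterministic, and on $\mC$ it sends $(L_{S_1}\mathbf v_1,\dots,L_{S_N}\mathbf v_N)$ to $\mathbf v M_T$, where $\mathbf v=(\mathbf v_1,\dots,\mathbf v_N)$; injectivity of the $L_{S_i}$ together with full column rank of $M_T$ shows this map is injective on $\mC$, so the singleton fan\nobreakdash-out sets over $\mC$ are pairwise disjoint and $\mC$ is good for every $\Omega_\mF[\adv;{\bf S}\to\inn(T)]$, $T\in{\bf T}$. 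By Definition~\ref{netcap}, $(\mF,\mC)$ achieves $(a_1,\dots,a_N)$ in one shot with linear network coding, which proves the theorem. I expect the main obstacle to be the bookkeeping of the second and third paragraphs: invoking the Koetter--M\'edard existence and degree argument on $\mG'$ (which is not literally a combinational network in the sense of Definition~\ref{defnetwork}) and, crucially, deducing from full rank of the composite transfer matrices $M_T$ that each individual source\nobreakdash-encoding map $L_{S_i}$ is injective, so that the resulting global code genuinely decomposes as the required Cartesian product $\mC_1\times\cdots\times\mC_N$.
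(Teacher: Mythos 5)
Your proof is correct and takes essentially the same route as the paper: both rely on Lemma~\ref{lemmagt} to extract $a_i$ edge-disjoint source paths per terminal, both run a Koetter--M\'edard-style polynomial/Sparse-Zeros argument over variables for the local encoding and coding matrices, and both deduce injectivity of each source encoder $E_i$ (your $L_{S_i}$) from the rank of the composite transfer matrix $M_T$. The only cosmetic difference is that you make the super-source auxiliary digraph $\mG'$ explicit in the main argument (the paper confines it to the proof of Lemma~\ref{lemmagt}), and ``full column rank $a$'' should read ``rank $a$'' (i.e.\ full row rank in the row-vector convention $v\mapsto vM_T$); neither affects correctness.
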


\begin{proof}
Let $(a_1,...,a_N) \in \N^N$ be an integer vector such that 
$\sum_{i \in J} a_i \le   
\min_{T \in {\bf T}}  \mu({\bf S}_J, T)$ for all $\emptyset \neq J \subseteq I$. By Lemma~\ref{lemmagt}, 
for each terminal $T \in {\bf T}$ there exist
$a_1+ \cdots + a_N$ edge-disjoint paths connecting ${\bf S}$ to
$T$, of which $a_i$ originate in $S_i$ for all $i \in I$. Without loss of generality, 
we may assume that there is neither an edge nor a vertex of $\mN$ which is not on at least one 
of these $|{\bf T}| \cdot (a_1+ \cdots + a_N)$ paths. In particular,  source $S_i$
has at least $a_i$ outgoing edges for all $i \in I$, and every terminal 
$T$ has exactly $a:=a_1+ \cdots + a_N$ incoming edges
(we are using property~\ref{prnF} of Definition~\ref{defnetwork}).

For all $i \in I$, we let the local code of source $S_i$ 
to be of the form 
$\mC_i:=\{v \cdot E_i : v \in \F_q^{a_i}\}$,
where $E_i$ is a matrix of size $a_i \times |\out(S_i)|$ over $\F_q$ to 
be determined. 

Since erasures are excluded from the model, to
construct a linear network code 
$\mF$ for $\mN$ it suffices to assign
an $(|\inn(V)| \times |\out(V)|)$-matrix 
$\mF_V$ over $\F_q$ for every $V \in \mV \setminus ({\bf S} \cup {\bf T})$. See 
Definition~\ref{defK}.

In analogy with~\cite{koettermedard}, we introduce a variable for each entry of each of the matrices
$E_i$ and $\mF_V$, for $i \in I$ and $V \in \mV \setminus ({\bf S} \cup {\bf T})$. We denote these variables
by $\zeta_1,...,\zeta_M$, and let $\zeta:=(\zeta_1,...,\zeta_M)$.
In the sequel, the $E_i$'s and $\mF_V$'s are denoted by 
$E_i(\zeta)$ and $\mF_V(\zeta)$, to stress the dependency on $\zeta$.
An evaluation  
$\overline{\zeta}=(\overline{\zeta}_1,...,\overline{\zeta}_M) \in \F_q^M$ of the variables
induces matrices $E_i(\overline{\zeta})$, for $i \in I$, and a linear network code
$\mF(\overline{\zeta})$ for $\mN$. Moreover, 
for all $T \in {\bf T}$ and for all 
$(x_1,...,x_N) \in \F_q^{a_1} \times \cdots \times \F_q^{a_N}$ we have
$$\Omega_{\mF(\overline{\zeta})}[{\bf S} \to \inn(T)](x_1 \cdot E_1(\overline{\zeta}) ,...,x_N \cdot E_N(\overline{\zeta})) 
= (x_1,...,x_N) \cdot M_T(\overline{\zeta})$$
for some $a \times a$ transfer matrix $M_T(\overline{\zeta})$  
whose entries are polynomials in $\zeta_1,...,\zeta_M$ evaluated in 
$\overline{\zeta}$.  Note that the matrix $M_T(\overline{\zeta})$ has size $a \times a$ by property \ref{prnF} of Definition \ref{defnetwork}.

By definition of one-shot capacity region, to prove the theorem it suffices to show that,
for a large enough $q$,
there exists an evaluation $\overline{\zeta}$ of the variables such that
each matrix $M_T(\overline{\zeta})$, $T \in {\bf T}$, is invertible. As in~\cite{koettermedard}, 
this fact follows
from the Sparse Zeros Lemma (see, e.g.,~\cite[Lemma 1]{netbook}) and the existence 
of the routing solutions. The details of this part of the argument are left to the reader. 

The invertibility of the 
$M_T(\overline{\zeta})$'s implies that each 
$E_i(\overline{\zeta})$, $i \in I$, is injective as a linear map.
This shows that  $\log_{|\mA|}|\mC_i|=a_i$ for all $i \in I$,  
defining local codes for the sources of the expected cardinalities.
\end{proof}

\subsection{Single Error-Adversary With Access to All Edges}
In this subsection we focus on adversaries 
of the form $\adv=\AP{t}{0}{\mE}$, and show that every integer vector
$(a_1,...,a_N) \in \N^N$ in the region described by Corollary~\ref{corosingle} can be achieved in one shot,
provided that $\mA=\F_q^m$, and that $q$ and $m$ are sufficiently large. Our scheme is a simple modification of an idea from~\cite{MANIAC}, in which the authors propose a scheme in the context of
random linear network coding. 

Note that the following Theorem~\ref{ACHIEV1} 
cannot be directly obtained from classical results in the context of 
network communications under probabilistic error/erasure models. 

\begin{theorem} \label{ACHIEV1}
Let $t \ge 0$ be an integer, and assume $\adv=\AP{t}{0}{\mE}$. We have
$$\mR(\mN,\adv) \supseteq 
\left\{(a_1,...,a_N) \in \N^N \,: \, \sum_{i \in J} a_i   \le  \min_{T \in {\bf T}}   \,
\max \left\{ 0, \,\mu({\bf S}_J, T)-2t \right\}\mbox{ for all $\emptyset \neq J \subseteq I$}
 \right\},$$
provided that $\mA=\F_q^m$, $q$ is sufficiently large, and $m= \prod_{i \in I} (a_i+2t)$. Moreover,
every integer rate vector $(a_1,...,a_N)$ as above can be achieved in one shot employing 
linear network coding.
\end{theorem}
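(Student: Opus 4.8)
The plan is to adapt to the combinatorial multi-source setting the linear network coding scheme of~\cite{MANIAC} (originally phrased for random linear network coding), combining it with the algebraic Kötter–Médard argument already used in the proof of Theorem~\ref{THM1} and with Lemma~\ref{lemmagt}. Fix an integer vector $(a_1,\dots,a_N)\in\N^N$ satisfying $\sum_{i\in J}a_i\le\min_{T\in{\bf T}}\max\{0,\mu({\bf S}_J,T)-2t\}$ for all $\emptyset\ne J\subseteq I$; whenever one of these minima is $0$ the corresponding $a_i$'s vanish, so after discarding trivial terms we may assume $\sum_{i\in J}a_i+2t\le\mu({\bf S}_J,T)$ for all $J$ and all $T$. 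Set $\mA=\F_q^m$ with $m=\prod_{i\in I}(a_i+2t)$. By Definition~\ref{netcap}, the goal reduces to exhibiting a \emph{linear} network code $\mF$ for $\mN$ and local codes $\mC_i\subseteq\mX_i$ with $\log_{|\mA|}|\mC_i|=a_i$ such that $\mC=\mC_1\times\cdots\times\mC_N$ is good for every channel $\Omega_\mF[\adv;{\bf S}\to\inn(T)]$, $T\in{\bf T}$.

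First I would translate ``goodness'' into a rank condition. Since $\mF$ is linear (hence the gain between any two edges is an $\F_q$-scalar) and $\adv=\AP{t}{0}{\mE}$, an adversary altering at most $t$ edge values perturbs the vector received by $T$, viewed as a matrix in $\F_q^{|\inn(T)|\times m}$, by a sum of at most $t$ rank-one matrices, hence by a matrix of $\F_q$-rank at most $t$; moreover this perturbation is independent of the transmitted symbol. Therefore $\mC$ is good for $\Omega_\mF[\adv;{\bf S}\to\inn(T)]$ as soon as the image of $\mC$ under the clean transfer map to $T$ — which, $\mF$ being linear, is again an $\F_q$-linear code in $\F_q^{|\inn(T)|\times m}$ — has minimum $\F_q$-rank distance at least $2t+1$. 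So the whole problem becomes: design $\mF$ and the $\mC_i$ so that, simultaneously for all terminals, the received code is rank-$(2t+1)$-separated, with the prescribed product structure and rates.

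For the construction I would proceed as in the proof of Theorem~\ref{THM1}. Lemma~\ref{lemmagt} applies to the vector $(a_1+2t,a_2,\dots,a_N)$ — and, more generally, to any redistribution $(a_i+r_i)_{i\in I}$ with $r_i\ge 0$ and $\sum_i r_i=2t$, since then $\sum_{i\in J}(a_i+r_i)\le\sum_{i\in J}a_i+2t\le\mu({\bf S}_J,T)$ — so for every terminal $T$ there are $\sum_i a_i+2t$ edge-disjoint directed paths from ${\bf S}$ to $T$ carrying enough flow out of each $S_i$; after the usual harmless reduction we may take every terminal to have exactly $\sum_i a_i+2t$ relevant incoming edges (surplus incoming edges are forced to carry the zero packet and are absorbed by the decoder as extra rank-metric redundancy, which does not lower the minimum rank distance). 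On top of this routing skeleton I place, at each source, a rank-metric (Gabidulin-type) outer encoder $E_i$ over $\F_{q^m}$; the choice $m=\prod_i(a_i+2t)$ is made so that all the MRD codes in play exist over $\F_{q^m}$ and can be interleaved over a common extension field, which is what lets the $2t$ units of redundancy required at a terminal be supplied \emph{collectively} by the sources rather than source by source. Then I would introduce one variable per entry of each $E_i$ and each local coding matrix $\mF_V$, express the condition ``the received code at $T$ has minimum $\F_q$-rank distance $\ge 2t+1$'' for each $T$ as the non-vanishing of a polynomial in these variables, observe that the routing solution together with the MRD encoders furnishes an evaluation at which this polynomial is nonzero, and invoke the Sparse Zeros Lemma to obtain, for all sufficiently large $q$, one evaluation — i.e. one linear network code and one tuple of local codes — valid at all terminals at once. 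The rate bookkeeping $|\mC_i|=q^{ma_i}$, i.e. $\log_{|\mA|}|\mC_i|=a_i$, follows since each $E_i$ is injective.

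The main obstacle is exactly the interaction between the cartesian-product constraint on $\mC$ and the required rank-$(2t+1)$-separation of each terminal's received code: if each source simply appended its own $2t$ redundant rank-metric symbols, one would need the stronger inequalities $\sum_{i\in J}(a_i+2t)\le\mu({\bf S}_J,T)$ for all $J$, which are not implied by the hypothesis. The role of the $\prod_i(a_i+2t)$-fold interleaving is precisely to let the redundancy be shared among the independently-encoding sources, and verifying that this is compatible with a single linear network code that is good at every terminal — that is, that the relevant polynomial non-vanishing conditions do not collapse — is the technical heart of the argument. A secondary, more routine point is checking that a rank-$t$ edge perturbation indeed induces a rank-$\le t$ matrix perturbation at the terminal and that the zero-padding reduction does not affect minimum rank distance.
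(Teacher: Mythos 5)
You correctly identify the key obstacle --- the cartesian-product constraint on $\mC$ forces each source to encode independently, while the $2t$ units of rank-metric redundancy required at a terminal can only be afforded collectively --- and you rightly point to the MANIAC scheme and the interleaving factor $m=\prod_i(a_i+2t)$ as the means to share that redundancy. But the step you propose to close the argument does not work, and it is not the argument the paper makes. You want to express ``the received code at $T$ has minimum $\F_q$-rank distance $\ge 2t+1$'' as the non-vanishing of a polynomial in $\zeta$ and invoke the Sparse Zeros Lemma. This fails on degree grounds: the condition is a conjunction over all nonzero received codewords, and the number of codewords is $\sim q^{m\sum_i a_i}$, which grows with $q$; any polynomial encoding a simultaneous rank check over all of them (e.g.\ a product over codewords) has degree outpacing $q$, so Schwartz--Zippel becomes vacuous. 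More fundamentally, for a fixed codeword $c$ the locus $\{\zeta:\mbox{rk}(R_T(\zeta)c)\le 2t\}$ is cut out by \emph{all} $(2t{+}1)\times(2t{+}1)$ minors, so the full failure locus is a union of high-codimension varieties, not the zero set of one polynomial whose non-vanishing you could certify.

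The paper avoids this entirely by a nested tower of extension fields, and by never asking the Sparse Zeros Lemma to certify a rank-distance property. For $N=2$ set $n_1:=a_1+2t$, $n_2:=a_2+2t$, $q_1:=q^{n_1}$, $q_2:=q_1^{n_2}$ (so $m=n_1 n_2$). Source $S_1$ transmits a Gabidulin codeword over $\F_{q_1}$ with $\F_q$-rank distance $2t{+}1$, interleaved $n_2$ times; source $S_2$ transmits a Gabidulin codeword over $\F_{q_2}$ with $\F_{q_1}$-rank distance $2t{+}1$. These rank distances are fixed by the choice of $G_1,G_2$ and do not depend on $\zeta$. The Sparse Zeros Lemma is applied only to the determinants of two \emph{specific, fixed-size} matrices $A_T(\zeta)$ and $B_T(\zeta)$ --- whose degree does not grow with $q$ --- to make them right-invertible at every terminal. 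Decoding is then \emph{layered}: the terminal recovers $X_2$ first with a rank-distance decoder over $\F_{q_1}$ (using that an $\F_q$-rank-$\le t$ perturbation has $\F_{q_1}$-rank $\le t$), subtracts its contribution, and then recovers $X_1$ by decoding $n_2$ parallel blocks over $\F_q$. This layered nesting of fields and the two-stage decoding --- not a single polynomial condition on a global rank distance --- is what makes the $2t$ redundancy units shareable across independently encoding sources, and it is precisely the structure missing from your sketch.
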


\begin{proof}
We will show the theorem only for $N=2$ sources and in the case where, for any terminal $T \in {\bf T}$,
we have $\mu(S_1,T), \mu(S_2,T), \mu({\bf S},T) > 2t$. 

Fix a pair $(a_1,a_2) \in \N^2$ with $a_1 \le \mu(S_1,T)-2t$,
$a_2 \le \mu(S_2,T)-2t$ and $a_1+a_2 \le \mu({\bf S},T)-2t$
for all $T \in {\bf T}$. 
Observe that $a_1 \le \mu(S_1,T)$, $ a_2+2t \le \mu(S_2,T)$, and 
$a_1+(a_2+2t) \le \mu({\bf S},T)$.
By Lemma~\ref{lemmagt}, 
for each $T \in {\bf T}$ there exist
$a_1+a_2+2t$ edge-disjoint paths connecting ${\bf S}$ to
$T$, of which $a_1$ originate in $S_1$, and $a_2+2t$ originate in $S_2$. Moreover, 
as $a_1+2t \le \mu(S_1,T)$, there exist $a_1+2t$ edge-disjoint paths connecting $S_1$ to
$T$. Without loss of generality, 
we may assume that there is neither an edge nor a vertex of $\mN$ which is not on at least one 
of these $|{\bf T}|\cdot (a_1+a_2+2t+a_1+2t) = |{\bf T}|\cdot (2a_1+a_2+4t)$ paths.
Finally, for ease of notation define:
$$n_1:=a_1+2t, \quad  n_2:=a_2+2t,  \quad q_1:=q^{n_1},  \quad q_2:=q_1^{n_2},  \quad b_1:=|\out(S_1)|,  \quad b_2:=|\out(S_2)|.$$

Before describing a communication scheme that achieves $(a_1,a_2)$ in one shot, following~\cite{MANIAC} 
we introduce some maps
needed in the sequel. Fix finite fields $\F_q \subseteq \F_{q_1} \subseteq \F_{q_2}$ and bases
$\{\beta_1^1,...,\beta_1^{n_1}\}$, $\{\beta_2^1,...,\beta_2^{n_2}\}$ for $\F_{q_1}$ and $\F_{q_2}$
over $\F_q$ and $\F_{q_1}$, respectively. We denote by $\varphi_1: \F_{q_1} \to \F_q^{n_1 \times 1}$
the $\F_q$-isomorphism that expands an element of $\F_{q_1}$ over the basis
$\{\beta_1^1,...,\beta_1^{n_1}\}$. Similarly, we denote by $\varphi_2: \F_{q_2} \to \F_{q_1}^{n_2 \times 1}$
the $\F_{q_1}$-isomorphism that expands an element of $\F_{q_2}$ over the basis
$\{\beta_2^1,...,\beta_2^{n_2}\}$. Extend the maps $\varphi_1$ and $\varphi_2$ entry-wise
to matrices or arbitrary size over $\F_{q_1}$ and $\F_{q_2}$, respectively. 
Note that the entries of matrices are always expanded as column vectors.

We now describe the communication scheme. Set $m:=n_1n_2$, so that $\mA=\F_q^{n_1n_2}$
is the alphabet.
Let $G_1 \in \F_{q_1}^{a_1 \times n_1}$ be the generator matrix 
of a rank-metric code $\mD_1 \subseteq \F_{q_1}^{n_1}$ with 
$\dim_{\F_{q_1}}(\mD_1)=a_1$ and minimum rank distance $2t+1$ over $\F_q$
(see~\cite{gabidulin}).
Similarly, let $G_2 \in \F_{q_2}^{a_2 \times n_2}$ be the generator matrix 
of a rank-metric code $\mD_2 \subseteq \F_{q_2}^{n_2}$ with 
$\dim_{\F_{q_2}}(\mD_2)=a_2$ and minimum rank distance $2t+1$ over $\F_{q_1}$.

Source $S_1$ chooses an arbitrary matrix $X_1 \in \F_{q_1}^{n_2 \times a_1}$, computes
$X_1G_1 \in \F_{q_1}^{n_2 \times n_1}$, and sends over the outgoing edges the columns of the matrix
$\varphi_1(X_1G_1)E_1 \in \F_q^{n_1n_2 \times b_1}$,
where $E_1 \in \F_q^{n_1 \times b_1}$ is a local encoding matrix to be determined. Source $S_2$ chooses an arbitrary  $X_2 \in \F_{q_2}^{1 \times a_2}$, computes
$X_2G_2 \in \F_{q_2}^{1 \times n_2}$, and sends over the outgoing edges the columns of the matrix
$\varphi_1(\varphi_2(X_2G_2))E_2 \in 
\F_q^{n_1n_2 \times b_2}$,
where $E_2 \in \F_q^{n_2 \times b_2}$ is another local encoding matrix that will be chosen later in the proof.

The vertices in 
$\mV\setminus({\bf S} \cup {\bf T})$ process the incoming packets using a linear network code $\mF$. According to Definition~\ref{defK},
we therefore need to assign to every  $V \in \mV\setminus({\bf S} \cup {\bf T})$
an $|\inn(V)| \times |\out(V)|$ matrix $\mF_V$ over $\F_q$ (erasures are excluded from the model).

As in the proof of Theorem~\ref{THM1}, we introduce a variable for each entry of each of the matrices
$E_i$ and $\mF_V$, for $i \in \{1,2\}$ and $V \in \mV \setminus ({\bf S} \cup {\bf T})$. We denote these variables
by $\zeta_1,...,\zeta_M$, and let $\zeta:=(\zeta_1,...,\zeta_M)$.
In the sequel, the $E_i$'s and $\mF_V$'s are denoted by 
$E_i(\zeta)$ and $\mF_V(\zeta)$.
Note that an evaluation  
$\overline{\zeta}=(\overline{\zeta}_1,...,\overline{\zeta}_M) \in \F_q^M$ of the variables
induces matrices $E_i(\overline{\zeta})$, for $i \in \{1,2\}$, and a linear network code
$\mF(\overline{\zeta})$ for $\mN$.  Moreover, 
for all $T \in {\bf T}$ and for all 
$X_1 \in \F_{q_1}^{n_2 \times a_1}$ and $X_2 \in \F_{q_2}^{1 \times a_2}$ we have 
\begin{multline*}
\Omega_{\mF(\overline{\zeta})}[{\bf S} \to \inn(T)](\varphi_1(X_1G_1)E_1(\overline{\zeta}), \varphi_1(\varphi_2(X_2G_2))E_2(\overline{\zeta})) \\ =
\varphi_1(X_1G_1)E_1(\overline{\zeta})\cdot M_T^1(\overline{\zeta}) + 
\varphi_1(\varphi_2(X_2G_2))E_2(\overline{\zeta})\cdot M_T^2(\overline{\zeta}),
\end{multline*}
where the packets are organized as column vectors,
and $M_T^1(\overline{\zeta})$, $M_T^2(\overline{\zeta})$ are the transfer matrices
from source $S_1$ and $S_2$, respectively, to the terminal $T$. These two matrices are well defined in the context of erasure-free linear network coding (see~\cite[Section~VII.B]{MANIAC}), and their entries are polynomials in $\zeta_1,...,\zeta_M$ evaluated in $\overline{\zeta}$. Note that the size of $M_T^i(\overline{\zeta})$
is $b_i \times |\inn(T)|$, for all $T \in {\bf T}$ and $i \in \{1,2\}$.

Using the existence of routing solutions and the Sparse Zeros Lemma (e.g.,~\cite[Lemma 1]{netbook}),
it can be shown that there exists an evaluation $\overline{\zeta} \in \F_q^M$ of the variables 
such that, for all $T \in {\bf T}$, the matrices 
\begin{equation} \label{twomatr}
A_T(\overline{\zeta}):=\begin{pmatrix}  G_1E_1(\overline{\zeta})M_T^1(\overline{\zeta}) \\ E_2(\overline{\zeta})M_T^2(\overline{\zeta})\end{pmatrix} \in \F_{q_1}^{(a_1+n_2) \times |\inn(T)|},\
\quad B_T(\overline{\zeta}):=\begin{pmatrix}  E_1(\overline{\zeta}) M_T^1(\overline{\zeta}) \end{pmatrix} \in \F_q^{n_1 \times |\inn(T)|}
\end{equation}
are both right-invertible (or equivalently full-rank), provided that $q$ is sufficiently large (for details about the matrix $A_T(\overline{\zeta})$, see the proof of~\cite[Lemma 2]{MANIAC}). In the sequel we fix such an evaluation 
$\overline{\zeta}$, and simply write $E_1$, $E_2$, $\mF$, $M_T^1$, $M_T^2$, $A_T$, $B_T$ for
$E_1(\overline{\zeta})$, $E_2(\overline{\zeta})$, $\mF(\overline{\zeta})$, $M_T^1(\overline{\zeta})$, $M_T^2(\overline{\zeta})$, $A_T(\overline{\zeta})$, $B_T(\overline{\zeta})$.
The local codes for sources $S_1$ and $S_2$ are, respectively, 
$$\mC_1=\{\varphi_1(X_1G_1)E_1 : X_1 \in \F_{q_1}^{n_1 \times a_1}\} \subseteq \F_q^{n_1n_2 \times b_1}, \quad
\mC_2=\{\varphi_1(\varphi_2(X_2G_2))E_2 : X_2 \in \F_{q_2}^{1 \times a_2}\}\subseteq \F_q^{n_1n_2 \times b_2},$$
where the network packets are again organized  as column vectors.
Since the matrices in (\ref{twomatr}) are both full-rank, the matrices $E_1$ and $E_2$ are full-rank as well (and thus right-invertible). As a consequence, we have
$|\mC_1|=|\F_{q_1}^{n_2 \times a_1}|=q^{ma_1}$ and 
$|\mC_2|=|\F_{q_2}^{1 \times a_2}|
=q^{ma_2}$. Therefore it remains to prove that $\mC_1 \times \mC_2$ is  good for 
each channel $\Omega_\mF[\adv;{\bf S} \to \inn(T)]$, $T \in {\bf T}$. We will show this by explicitly giving a decoding 
procedure. In the remainder of the proof the packets will be always organized as column vectors.

A given terminal $T \in {\bf T}$ receives
\begin{equation} \label{fff}
R_T=\varphi_1(X_1G_1)E_1M_T^1 + \varphi_1(\varphi_2(X_2G_2))E_2M_T^2 + Z_T \in \F_q^{n_1n_2 \times |\inn(T)|},
\end{equation}
where $Z$ is an error matrix such that $\mbox{rk}_{\F_q}(Z_T) \le t$ (see~\cite[Section IV]{onmetrics} for details).
Applying $\varphi_1^{-1}$ to both sides of (\ref{fff}), and using the fact that such map is $\F_q$-linear, the terminal computes
$$\varphi_1^{-1}(R_T) X_1G_1E_1M_T^1+ \varphi_2(X_2G_2)E_2M_T^2 + \varphi_1^{-1}(Z_T)
 = \begin{pmatrix} X_1 & \varphi_2(X_2G_2) \end{pmatrix} 
  A_T + \varphi_1^{-1}(Z_T).$$
Terminal $T$ can now multiply on the right both members of the previous equality by the right-inverse of $A_T$, which is a matrix over $\F_{q_1}$ 
denoted by $A_T^{-1}$,
and obtain
\begin{equation} \label{interm}
\varphi_1^{-1}(R_T) A_T^{-1} = \begin{pmatrix} X_1 & \varphi_2(X_2G_2) \end{pmatrix} 
 + \varphi_1^{-1}(Z_T)A_T^{-1}.
 \end{equation}

 Observe that  
 $\mbox{rk}_{\F_{q_1}}(\varphi_1^{-1}(Z_T)A_T^{-1}) \le 
 \mbox{rk}_{\F_{q_1}}(\varphi_1^{-1}(Z_T)) \le \mbox{rk}_{\F_q}(Z_T)\le t$,
 where the second inequality follows from~\cite[Lemma 1]{MANIAC}.
 Therefore $T$ can delete the first $a_1$ columns of (\ref{interm}), and recover
 $X_2 \in \F_{q_2}^{1 \times a_2}$ using a minimum rank-distance decoder for the code generated by $G_2$.
 Now $T$ uses~(\ref{fff}) and computes
 \begin{equation}\label{fff1}
 \overline{R}_T:= R_T - \varphi_1(\varphi_2(X_2G_2))E_2M_T^2 = \varphi_1(X_1G_1)E_1M_T^1 + Z_T =
 \varphi_1(X_1G_1) B_T + Z_T.
 \end{equation}
 By our choice of $\overline{\zeta}$, $B_T$ is a right-invertible matrix over $\F_q$, whose right-inverse is denoted by
 $B_T^{-1}$. Multiplying on the right both sides of (\ref{fff1}) by $B_T^{-1}$, the terminal obtains
 \begin{equation}\label{final}
 \overline{R}_T B_T^{-1}= \varphi_1(X_1G_1) + Z_TB_T^{-1} \in \F_q^{n_1n_2 \times n_1}.
 \end{equation}
Define $\overline{X}:=\varphi_1(X_1G_1)$ and $\overline{Z}_T:=Z_TB_T^{-1}$. Organize
the $n_1n_2$ rows of $\overline{X}$ and $\overline{Z}_T$ in $n_2$ blocks of $n_1$ rows, and re-write
(\ref{final}) as
$$\overline{R}_T B_T^{-1} = \overline{X} + \overline{Z}_T = \begin{pmatrix}  \overline{X}^1 \\ \vdots \\ 
\overline{X}^{n_2} \end{pmatrix}   + \begin{pmatrix}  \overline{Z}_T^1 \\ \vdots \\ 
\overline{Z}_T^{n_2} \end{pmatrix}.$$
Since $\mbox{rk}_{\F_q}(Z_T) \le t$, we have $\mbox{rk}_{\F_q}(\overline{Z}_T^i) \le t$
for all $1 \le i \le n_2$. Moreover, $\varphi_1$,
 $\overline{X}^i= \varphi_1(X_1^iG_1)$, where
$X_1^i$ is the $i$-th row of $X_1$. Therefore $T$ can recover $\overline{X}$ by applying $n_2$ times a 
minimum rank-distance decoder for the code generated by $G_1$. Clearly, this allows $T$ to recover $X_1$ as well.  
\end{proof}

\subsection{A Scheme for the Compound Model}

In this subsection we adapt the scheme of Theorem~\ref{ACHIEV1} to 
the compound model, i.e., to the scenario where the adversary is forced to act on the 
same set of edges in each network use. We show that, in this specific context, the use of long network packets can be avoided by 
employing
the network multiple times. Note that this fact does not follow
directly from Theorem~\ref{ACHIEV1}, as a network alphabet 
of the form $\F_q^m$ does not model $m$ uses of the network
(see Remark~\ref{rmkisdifferent}). In fact, our scheme does not 
work if the adversary can act on a different set of edges in each channel use.

\begin{theorem}  \label{ACHIEV2}
Let $t \ge 0$ be an integer, and assume $\adv=\AP{t}{0}{\mE}$. We have
$$\overline{\regzr(\mN,\adv)} \supseteq 
\left\{(a_1,...,a_N) \in \N^N \,:\,\sum_{i \in J} a_i   \le   \min_{T \in {\bf T}}   \, 
\max \left\{ 0, \,\mu({\bf S}_J, T)-2t \right\}\mbox{ for all $\emptyset \neq J \subseteq I$}
 \right\},$$
provided that $\mA=\F_q^m$, $q$ is sufficiently large, and $m:=\min\{a_i : i \in I, a_i \neq 0\}+2t$. Moreover,
every integer rate vector $(a_1,...,a_N)$ as above can be achieved using 
linear network coding.
\end{theorem}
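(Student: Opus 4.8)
The plan is to deduce Theorem~\ref{ACHIEV2} from Theorem~\ref{ACHIEV1} by an alphabet extension, exploiting the fact that --- in contrast to the non-compound situation discussed in Remark~\ref{rmkisdifferent} --- $n$ uses of $\mN$ over the alphabet $\mathcal{A}=\F_q^m$ against an adversary that is pinned to the same $\le t$ edges in every use \emph{do} model a single use of $\mN$ over the alphabet $\mathcal{A}^n=\F_q^{mn}$ against the ordinary adversary $\AP{t}{0}{\mE}$.

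Fix an integer vector $(a_1,\dots,a_N)$ in the region described in the statement; we may assume $a_i\neq 0$ for all $i$ (a source with $a_i=0$ transmits nothing and can be removed, decreasing $N$). Let $i^\ast$ attain $a^\ast:=\min_i a_i$, put $m:=a^\ast+2t$ and $m':=\prod_{i\in I}(a_i+2t)$, and set $n:=m'/m=\prod_{i\neq i^\ast}(a_i+2t)$, which is a positive integer because $a_{i^\ast}+2t=m$ is one of the factors of $m'$. First I would apply Theorem~\ref{ACHIEV1} to $\mN$ over the alphabet $\mathcal{B}:=\F_q^{m'}$: for $q$ large enough it supplies an $\F_q$-linear network code $\mF'$ for $\mN$ and a cartesian code $\mC=\mC_1\times\cdots\times\mC_N$ with $\mC_i\subseteq\mathcal{B}^{|\out(S_i)|}$, $|\mC_i|=|\mathcal{B}|^{a_i}$, which is good for $\Omega_{\mF'}[\AP{t}{0}{\mE};{\bf S}\to\inn(T)]$ for every $T\in{\bf T}$. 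I would then identify $\mathcal{B}=\F_q^{m'}=(\F_q^m)^n=\mathcal{A}^n$. Since an $\F_q$-scalar acts block-wise on $(\F_q^m)^n$, the $\F_q$-linear code $\mF'$ is the $n$-fold coordinatewise repetition of a single $\F_q$-linear network code $\mF$ for $\mN$ over $\mathcal{A}$; likewise $\mathcal{B}^{|\out(S_i)|}=\mX_i^{\,n}$, so $\mC_i\subseteq\mX_i^{\,n}$ with $\log_{|\mathcal{A}|}|\mC_i|=n\,a_i$, as required by Definition~\ref{defcompoundregion}.

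It then remains to check the channel identity
\begin{equation*}
\Omega_{\mF'}[\AP{t}{0}{\mE};{\bf S}\to\inn(T)] \;=\; \bigcup_{\substack{\mV\subseteq\mE\\ |\mV|\le t}}\ \prod_{k=1}^{n}\Omega_{\mF}[\AP{t}{0}{\mV};{\bf S}\to\inn(T)],
\end{equation*}
which holds because corrupting the $\mathcal{B}$-value carried by an edge amounts to corrupting that edge's $\mathcal{A}$-value in each of the $n$ uses (with possibly different values), while the \emph{set} of at most $t$ corrupted edges is the same across the $n$ uses. The right-hand side is exactly the channel appearing in Definition~\ref{defcompoundregion} for $\adv=\AP{t}{0}{\mE}$ and $\mF^1=\cdots=\mF^n=\mF$, so goodness of $\mC$ carries over and yields $(a_1,\dots,a_N)\in\regzr(\mN,\adv)\subseteq\overline{\regzr(\mN,\adv)}$, achieved by the linear network code $\mF$.

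The step I expect to be most delicate is precisely this channel identity: one must verify that the single-adversary channel over the large alphabet is the \emph{union} over the $\le t$-subsets of edges of the $n$-fold products (same subset, independent error values), and no larger channel, and that this matches the formula of Definition~\ref{defcompoundregion} with $L=1$ --- this is where the compound hypothesis is indispensable. The only other bookkeeping is confirming that the linear network code produced inside the proof of Theorem~\ref{ACHIEV1} has all of its matrices over the base field $\F_q$, so that it legitimately descends to the alphabet $\mathcal{A}$, together with the harmless reduction to $a_i\neq 0$ made at the outset.
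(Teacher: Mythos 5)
Your proof is correct, and it takes a genuinely different route from the paper's, though the two share the same core idea. The paper re-derives the decoding for the compound scheme by hand (for $N=2$): the sources split the matrices $\varphi_1(X_1G_1)$ and $\varphi_1(\varphi_2(X_2G_2))$ into $n_2$ horizontal blocks of $n_1$ rows, send block $j$ during network use $j$ with the same $\F_q$-linear network code $\mF$ from Theorem~\ref{ACHIEV1}, and the terminal stacks the received matrices; the crucial observation there is that the aggregated error matrix $Z_T=WU_T$ has $W$ with at most $t$ non-zero columns---because the adversary is pinned to the same edges in every use---so $\mbox{rk}_{\F_q}(Z_T)\le t$ and the decoder of Theorem~\ref{ACHIEV1} applies verbatim. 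That rank bound is the concrete manifestation of your channel identity, and the row-block splitting is the concrete manifestation of your observation that an $\F_q$-linear network code over $\F_q^{mn}$ acts block-wise on $(\F_q^m)^n$. Your structural formulation buys you two things: it handles arbitrary $N$ uniformly (the paper only writes out $N=2$, with $n_1=a_1+2t$ and $n_2=a_2+2t$ coinciding with your $m$ and $n$), and it isolates exactly where the compound hypothesis is indispensable, since your identity would fail for the ordinary $n$-th power---precisely the content of Remark~\ref{rmkisdifferent}. Your bookkeeping points are in order: the network code matrices are over $\F_q$ by Definition~\ref{defK} and hence descend, and $|\mC_i|=(q^{m'})^{a_i}=|\mA|^{na_i}$ so that $\log_{|\mA|}|\mC_i|=na_i$ as Definition~\ref{defcompoundregion} requires. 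One small clean-up: rather than ``removing'' a source $S_{i_0}$ with $a_{i_0}=0$ (its outgoing edges remain accessible to the adversary), it is cleaner to keep $S_{i_0}$ and take $\mC_{i_0}$ to be a singleton; the scheme of Theorem~\ref{ACHIEV1} degenerates gracefully to $a_{i_0}=0$, as the corresponding rank-metric code is then the zero code.
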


\begin{proof}
As for Theorem~\ref{ACHIEV1}, we show the result only for $N=2$ sources and in the case where, 
for any $T \in {\bf T}$,
we have $\mu(S_1,T), \mu(S_2,T), \mu({\bf S},T) > 2t$. Assume $a_1 \le a_2$ without loss of generality. 

In the sequel we follow the notation
of the proof of Theorem~\ref{ACHIEV1}, and modify the scheme to achieve $(a_1,a_2)$ in
$n_2$ channel uses, provided that $\mA=\F_q^{n_1}$ and $q$ is sufficiently large.
In every network use, each intermediate vertex $V \in \mV \setminus ({\bf S} \cup {\bf T})$ processes the incoming packets according to 
the network code $\mF$ constructed in the proof of Theorem~\ref{ACHIEV1} (thus $\mF$ is the same in every network use).

After choosing $X_1$ and $X_2$, sources $S_1$ and $S_2$ compute
$\varphi_1(X_1G_1), \varphi_1(\varphi_2(X_2G_2)) \in \F_{q}^{n_1n_2 \times n_1}$, respectively.
The rows of these matrices are then organized in $n_2$ blocks of
$n_1$ rows:
$$\varphi_1(X_1G_1) = \begin{pmatrix}  Y_1^1 \\ \vdots \\ Y_1^{n_2} \end{pmatrix}, \quad
\varphi_1(\varphi_2(X_2G_2)) = \begin{pmatrix}  Y_2^1 \\ \vdots \\ Y_2^{n_2} \end{pmatrix}.$$
In the $j$-th network use, $S_i$ sends over the outgoing edges the columns of
$Y_1^jE_i \in \F_q^{n_1 \times b_i}$, for $1 \le j \le n_2$ and  $i \in \{1,2\}$.
This defines codes $\mC_1$, $\mC_2$ for $S_1$ and $S_2$ of cardinality
$q^{n_1n_2a_1}$ and $q^{n_1n_2a_2}$, respectively.

The decoding is as follows. In the $j$-th network use, terminal $T \in {\bf T}$
collects packets from $\F_q^{n_1}$ over the incoming edges, and organizes them as the columns of a 
$n_1 \times |\inn(T)|$ matrix over $\F_q$,  which is denoted by $R_T^j$. 
Observe that
$R_T^j = Y_1^j E_1M_T^1 + Y_2^j E_2M_T^2 + Z_T^j$,
where $Z_T^j$ is the error matrix. Recall that $Z_T^j$ is defined by $Z_T^j: = W^j U_T$, where
$W^j \in \F_q^{n_1 \times |\mE|}$ is the matrix whose columns are the error packets, and 
$U_T$ is the $|\mE| \times |\inn(T)|$ transfer matrix from the edges of $\mN$ to the destination $T$.
Note that  $U_T$ is well defined in the context
of erasure-free linear network coding 
(see~\cite[Section I and IV]{onmetrics} for details).
After $n_2$ channel uses, terminal $T$ constructs the matrix 
\begin{equation} \label{fff22}
R_T:= \begin{pmatrix}  R_T^1 \\ \vdots \\ R_T^{n_2} \end{pmatrix} = \varphi_1(X_1G_1)E_1M_T^1 + 
\varphi_1(\varphi_2(X_2G_2))E_2M_T^2 +  Z_T, \quad \mbox{where }
Z_T:=\begin{pmatrix}  Z_T^1 \\ \vdots \\ Z_T^{n_2} \end{pmatrix}.
\end{equation}
Now observe that $Z_T$ can be written as
$$Z_T=\begin{pmatrix}  Z_T^1 \\ \vdots \\ Z_T^{n_2} \end{pmatrix} = WU_T, \quad
\begin{pmatrix} W^1 \\ \vdots \\ W^{n_2} \end{pmatrix} \in \F_q^{n_1n_2 \times |\mE|}.$$
Since the adversary acts on the same edges in each network use, the matrix $W$ has at most $t$ non-zero columns, which implies
$\mbox{rk}_{\F_q}(Z_T) \le \mbox{rk}_{\F_q}(W) \le t$ (this fact would not be true if the adversary was able to act on a different set 
of edges in each use of the channel). Decoding can therefore be completed starting from equation (\ref{fff22}) as in the proof of 
Theorem~\ref{ACHIEV1} (cf. equation (\ref{fff})).
\end{proof}

\subsection{Product Network Alphabets}
Throughout this subsection we assume that $\mA=\mB^m$ is a product alphabet, where 
$\mB$ is a finite set with $|\mB| \ge 2$ and $m \ge 2$ is a fixed integer. The following theorem describes a capacity-achieving scheme for adversarial networks of the form $(\mN, \adv)$,
where $\adv=\adv_{t,e} \langle \mB,m \rangle$ is the adversary 
of Theorem~\ref{bndproduct}.

\begin{theorem}
Let $t,e \ge 0$ be integers, and let $\adv=\adv_{t,e} \langle \mB,m \rangle$. Assume $m \ge 2t+e+1$, and let $k:=m-2t-e$. Then
$$\mR(\mN,\adv) \supseteq 
\left\{\frac{k}{m} (a_1,...,a_N) \in \R_{\ge 0}^N  : a_i \in \N \mbox{ for all $i \in I$, } \sum_{i \in J} a_i  \le  \min_{T \in {\bf T}}   
\mu({\bf S}_J, T) \mbox{ for all $\emptyset \neq J \subseteq I$}\right\},$$
provided that $\mB=\F_q$ and $q$ is sufficiently large. 
\end{theorem}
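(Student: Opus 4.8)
The plan is to reduce the claim to the adversary-free achievability result of Theorem~\ref{THM1} by protecting each individual network edge with a maximum-distance-separable code of length $m$ over $\F_q$, and by letting the intermediate vertices decode and re-encode on every edge. Fix an integer vector $(a_1,\dots,a_N)\in\N^N$ with $\sum_{i\in J}a_i\le\min_{T\in{\bf T}}\mu({\bf S}_J,T)$ for all nonempty $J\subseteq I$, and set $k:=m-2t-e\ge 1$. Since $q$ is large, I would fix a Reed--Solomon (hence MDS) code $\mathcal{E}\subseteq\F_q^m$ of dimension $k$ and minimum Hamming distance $2t+e+1$, written $\mathcal{E}=\{vG:v\in\F_q^k\}$ for a generator matrix $G\in\F_q^{k\times m}$, together with a bounded-distance decoder $\Delta:\hat{\F}_q^m\to\F_q^k$ that recovers $v$ from any word of the form ``$vG$ corrupted in at most $t$ coordinates and erased in at most $e$ coordinates'' (and is defined arbitrarily on the remaining inputs); such a $\Delta$ exists precisely because $\mathcal{E}$ has minimum distance $2t+e+1$.

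Next I would apply Theorem~\ref{THM1} to the adversary-free network to obtain, for $q$ large, a linear network code $\mF^{\circ}=\{L_V\}$ over $\F_q$, encoding matrices $E_i\in\F_q^{a_i\times|\out(S_i)|}$ (injective as linear maps), and invertible transfer matrices $M_T\in\F_q^{a\times a}$ with $a:=a_1+\cdots+a_N$, realizing the scheme of that theorem; as in its proof we may assume each $S_i$ has at least $a_i$ outgoing edges and each terminal exactly $a$ incoming edges. From this I build a scheme for $(\mN,\adv)$ over $\mA=\F_q^m$: source $S_i$ chooses a message $X_i\in\F_q^{k\times a_i}$, forms $X_iE_i\in\F_q^{k\times|\out(S_i)|}$, and on each outgoing edge transmits the $\mathcal{E}$-codeword obtained by encoding the corresponding column of $X_iE_i$ through $G$; each intermediate vertex $V$ applies $\Delta$ to every incoming packet, stacks the recovered vectors into $U_V\in\F_q^{k\times|\inn(V)|}$, forms $U_VL_V$, and transmits on each outgoing edge the $\mathcal{E}$-codeword encoding the corresponding column of $U_VL_V$ through $G$; each terminal $T$ applies $\Delta$ to its incoming packets, stacks them into $R_T\in\F_q^{k\times a}$, and outputs $R_TM_T^{-1}$. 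This is a well-defined (non-linear) network code in the sense of Definition~\ref{defK}, it defines a cartesian code $\mC=\mC_1\times\cdots\times\mC_N$ with $\mC_i\subseteq\mX_i$, and since $E_i$ and $G$ are injective we get $\log_{|\mA|}|\mC_i|=\frac{k}{m}a_i$, which is exactly the target rate.

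For correctness I would argue by induction along the acyclic order on $\mN$, using two observations. First, every packet leaving a source or an intermediate vertex is an honest codeword of $\mathcal{E}$, because $\mathcal{E}$ is linear and the intermediate processing only takes $\F_q$-linear combinations of coefficient vectors before re-encoding through $G$. Second, $\adv=\adv_{t,e}\langle\mB,m\rangle$ perturbs \emph{each edge separately} by at most $t$ errors and $e$ erasures, so every invocation of $\Delta$ recovers exactly the coefficient vector transmitted on that edge; hence every vertex reconstructs the same coefficient vectors it would see in the adversary-free execution of $\mF^{\circ}$, applied coordinatewise over the $k$ positions of $\F_q^k$. Therefore $R_T=(X_1\mid\cdots\mid X_N)\,M_T$, so $R_TM_T^{-1}$ returns all source messages, which shows that for distinct codewords of $\mC$ the fan-out sets under $\Omega_\mF[\adv;{\bf S}\to\inn(T)]$ are disjoint for every $T\in{\bf T}$; by Definition~\ref{netcap} the rate $\frac{k}{m}(a_1,\dots,a_N)$ then lies in $\mR(\mN,\adv)$. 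The step needing the most care is exactly this decode-and-re-encode requirement: if intermediate vertices merely forwarded $\F_q$-linear combinations of the $\F_q^m$-packets, adversarial errors and erasures from the many attacked edges would accumulate well past the correction radius of $\mathcal{E}$, so one must decode at every vertex to confine each attack to the single edge it was placed on; this is also why, unlike in Theorems~\ref{THM1}, \ref{ACHIEV1} and~\ref{ACHIEV2}, the scheme is not claimed to be linear. Finally, one checks that the region obtained in this way is precisely the integer-type locus allowed by the converse bound of Theorem~\ref{bndproduct}.
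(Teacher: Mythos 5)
Your proof is correct and follows essentially the same route as the paper: invoke Lemma~\ref{lemmagt} through Theorem~\ref{THM1} to get an adversary-free linear network code with injective encoders $E_i$ and invertible transfer matrices $M_T$, protect every edge with an outer $[m,k,2t+e+1]_q$ code (the paper takes any such code; you specialize to Reed--Solomon, which is fine for $q$ large), and let each intermediate vertex decode its incoming packets, apply the linear map $L_V$ to the recovered coefficient vectors, and re-encode before forwarding. Your version is somewhat more explicit than the paper's about the correctness argument (the induction along the acyclic order showing every transmitted packet is an honest codeword and every local decode succeeds), but the construction, the rate computation $\log_{|\mA|}|\mC_i|=\tfrac{k}{m}a_i$, and the reliance on the per-edge confinement of the adversary to within the code's correction radius are all exactly as in the paper.
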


\begin{proof}
Let $(a_1,...,a_N) \in \N^N$ be an integer vector such that 
$\sum_{i \in J} a_i \le   
\min_{T \in {\bf T}}  \mu({\bf S}_J, T)$ for all non-empty $J \subseteq I$. 
By Lemma~\ref{lemmagt}, 
for each terminal $T \in {\bf T}$ there exist
$a_1+ \cdots + a_N$ edge-disjoint paths connecting ${\bf S}$ to
$T$, of which $a_i$ originate in $S_i$ for all $i \in I$. Without loss of generality, 
we may assume that there is neither an edge nor a vertex of $\mN$ which is not on at least one 
of these paths.
Moreover, we let $E_i(\overline{\zeta})$ and 
$\mF_V(\overline{\zeta})$, for $i \in I$ and $V \in \mV 
\setminus ({\bf S} \cup {\bf T})$, be as in the proof of
Theorem~\ref{THM1}.

We will give a communication scheme (and therefore construct a pair $(\mF,\mC)$) that achieves
the rate 
$k/m \cdot (a_1,...,a_N)$ in one shot.
Let 
$\mD \subseteq \F_q^m$ be a code of cardinality $q^k$ and minimum Hamming distance $d=2t+e+1$.
Let $\Enc:\F_q^k \to \F_q^m$ and $\Dec:({\F_q} \cup \{\star\})^m \to \F_q^k$
be an encoder and a decoder for $\mD$, respectively.
For all $i \in I$, let $b_i:=|\out(S_i)|$. Source $S_i$ chooses $x_i =(x_{i,1},...,x_{i,a_i}) \in (\F_q^{k})^{a_i}$, 
computes
$$\begin{pmatrix} x^\top_{i,1} & \cdots & x^\top_{i,a_i} \end{pmatrix} \cdot E_i(\overline{\zeta})=: \begin{pmatrix} y^\top_{i,1} & 
\cdots & y^\top_{i,b_i} \end{pmatrix} \in \F_q^{k \times b_i},$$
and sends $\Enc(y_{i,1}),...,\Enc(y_{i,b_i}) \in \F_q^m$ over the outgoing edges. Since the $E_i(\overline{\zeta})$'s are injective as linear maps 
(see the proof of Theorem~\ref{THM1}), this defines local codes
$\mC_1,...,\mC_N$ for the sources with $\mC_i \subseteq (\F_q^m)^{b_i}$ and 
$\log_{|\mA|}|\mC_i|=\log_{q^m}q^{ka_i} =k/m \cdot a_i$, for all
$i \in I$.

Let $V \in \mV \setminus ({\bf S} \cup {\bf T})$ be a vertex, and set
$r:=|\inn(V)|$, $s:=|\out(V)|$ for ease of notation.  The vertex $V$ collects $x_1,...,x_r \in (\F_q \cup \{\star\})^m$ over the incoming edges, and computes
$$\begin{pmatrix}\Dec(x_1)^\top & \cdots & \Dec(x_r)^\top \end{pmatrix} \cdot \mF_V(\overline{\zeta})=: \begin{pmatrix} 
y^\top_1 & \cdots & y^\top_s \end{pmatrix} \in \F_q^{m \times d}.$$
The vectors $\Enc(y_1),...,\Enc(y_s) \in \F_q^m$ are then sent over the outgoing edges of $V$.
This defines a network code $\mF$ for $\mN$ via
$\mF_V: (x_1,...,x_r) \mapsto (\Enc(y_1),...,\Enc(y_s))$, following the notation above.
It is easy to see that 
the pair $(\mF,\mC_1 \times \cdots \times \mC_N)$ achieves the rate
$k/m \cdot (a_1,...,a_N)$ in one shot.
\end{proof}

\subsection{Linear and Non-Linear Network Coding}

Theorem~\ref{ACHIEV1} shows that linear network coding suffices to achieve any integer point in the capacity region of an adversarial network of the form $(\mN,\AP{t}{0}{\mE})$, provided
that the network alphabet is of the form $\F_q^m$, with $q$ and $m$ sufficiently large.
We now show that this is not the case in general 
if the adversary has the form  $\adv=\AP{t}{0}{\mU}$, where
$\mU \subsetneq \mE$ is a proper subset of vulnerable edges.


\begin{example} \label{dexNON}
Let $\mN$ be the network in Figure~\ref{figexNON}. The edges 
of $\mN$ are ordered according to their indices. 
Let $\mU:=\{e_1,e_2,e_3\} \subsetneq \mE$ and $\adv:=\AP{1}{0}{\mU}$.

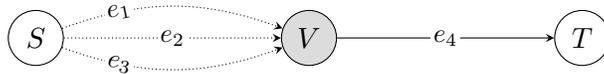
\begin{figure}[htbp]
  \centering
     \begin{tikzpicture}
\tikzset{vertex/.style = {shape=circle,draw,inner sep=0pt,minimum size=1.9em}}
\tikzset{nnode/.style = {shape=circle,fill=myg,draw,inner sep=0pt,minimum
size=1.9em}}
\tikzset{edge/.style = {->,> = stealth}}
\tikzset{dedge/.style = {densely dotted,->,> = stealth}}

\node[nnode] (N) at (0,0) {$V$};
\node[vertex, right=\mynodespace of N] (R) {$T$};
\node[vertex, left=\mynodespace of N] (S) {$S$};

\draw[dedge, bend left=20] (S)  to node[near start,fill=white, sloped, inner sep=1pt]{\small $e_1$} (N);
\draw[dedge, bend right=20] (S) to  node[near start,fill=white, sloped, inner sep=1pt]{\small $e_3$} (N);
\draw[dedge] (S) to node[fill=white, sloped, inner sep=1pt]{\small $e_2$} (N);
\draw[edge] (N) to node[fill=white, sloped, inner sep=1pt]{\small $e_4$} (R);
\end{tikzpicture} 
\caption{{{Network for Example~\ref{dexNON}.}}}\label{figexNON}
\end{figure}

It is easy to see that $1 \in \mR(\mN,\adv)$ for any network alphabet $\mA$. 
We now show that the rate $1$ cannot be achieved employing a linear network code at the intermediate vertex $V$. 

Assume that $\mA=\F_q^m$ for some prime power $q$ and some $m \ge 1$, and that the vertex
 $V$ processes the incoming packets according to a linear network code $\mF_V:\mA^3 \to \mA$ 
 (erasures are excluded from the model). By definition of linear network code (Definition~\ref{defK}), 
 there exist $\lambda_1,\lambda_2,\lambda_3 \in \F_q$ such that 
 $\mF_V(x)=\lambda_1x_1+\lambda_2x_2+\lambda_3x_3$ for all
 $x=(x_1,x_2,x_3) \in \mA^3$. Moreover, by definition of $\adv$ we have
 $$
 \Omega_\mF[\adv;{\bf S} \to \inn(T)](x) \\  = 
 \{\lambda_1y_1 + \lambda_2y_2 + \lambda_3y_3 \,:\, y \in \mA^3  \mbox{ and } y_i \neq x_i 
 \mbox{ for at most one value of $i$}\}.
$$
Assume by contradiction that there exists a good code $\mC \subseteq \mA^3$ for the channel $\Omega_\mF[\adv;{\bf S} \to \inn(T)]$ with
$|\mC|=|\mA|=q^m \ge 2$. Then at least one among $\lambda_1, \lambda_2, \lambda_3$ must be non-zero. 
Without loss of generality, we may assume 
$\lambda_1 \neq 0$. Let $x,x' \in \mC$ with $x \neq x'$, and define 
$$y_1:= \lambda_1^{-1}(-\lambda_2x_2-\lambda_3x_3),~ y_2:=x_2, ~ y_3:=x_3,  \quad\quad
y'_1:= \lambda_1^{-1}(-\lambda_2x'_2-\lambda_3x'_3),~  y'_2:=x'_2,~ y'_3:=x'_3.$$
Then 
$0= \lambda_1y_1+\lambda_2y_2+\lambda_3y_3 = \lambda_1y'_1+\lambda_2y'_2+\lambda_3y'_3 \in  
\Omega_\mF[\adv;{\bf S} \to \inn(T)](x) \cap \Omega_\mF[\adv;{\bf S} \to \inn(T)](x')$, 
contradicting the fact that $\mC$ is good for the channel $\Omega_\mF[\adv;{\bf S} \to \inn(T)]$.
\end{example}


\section{Other Adversarial Models} \label{extensions}

Using the combinatorial framework developed in this work, other adversarial
models can be investigated.  As already shown for certain  of
adversaries, Lemma~\ref{portinglemma} and~\ref{portinglemma1} allow to port to
the network context any upper bound for the capacity of channels of the form
$\mA^s \dto \hat\mA^s$.  We include the analysis of
error-adversaries having access to overlapping sets of coordinates, and of
rank-metric adversaries.  See~\cite{del1,alb1} for a general reference on
rank-metric codes in matrix representation.

In the sequel we only consider erasure-free adversarial models, and study the capacities 
of certain channels of the form $\mA^s \dto \mA^s$, where 
$\mA$ is a finite set with $|\mA| \ge 2$ and $s \ge 1$. All the capacities are expressed as logarithms in base 
 $|\mA|$.
 
 The upper bounds established in
 this section can be ported to the networking context using the Porting
  Lemmas established in Section \ref{secportinglemma}. The details are left to the reader.

\subsection{Error-Adversaries Acting on Overlapping Sets of Coordinates}

We start by considering $L \ge 1$ adversaries having access to possibly intersecting sets of
coordinates $U_1,...,U_L \subseteq [s]$. The adversaries have error powers
$t_1,...,t_L \ge 0$, and zero erasure powers.

\begin{definition}\label{EXTdefff3}
Let $L \ge 1$ be an integer, and let $U_1,...,U_L \subseteq [s]$. Let $t_1,...,t_L$
be integers with $t_\ell \ge 0$ for all $1 \le \ell \le L$.
Set $\bm{U}:=(U_1,...,U_L)$ and $\bm{t}:=(t_1,...,t_L)$. The channel
$\HHB{t}{U}: \mA^s \dto \mA^s$ is defined by
$$\HHB{t}{U} := \bigcup_{\substack{ p \textnormal{ permutation} \\
\textnormal{of } \{1,...,s\} }} \HHP{t_{p(1)}}{U_{p(1)}} \concat \cdots \concat\HHP{t_{p(L)}}{U_{p(L)}},$$
where $\HHP{t_{p(i)}}{U_{p(i)}} := \HP{t_{p(i)}}{0}{U_{p(i)}}: \mA^s \dto \mA^s \subseteq \hat{\mA}^s$ for all $i \in [s]$. See Definition~\ref{elementarychannels} for details.
\end{definition}

Before proceeding with the analysis of channels of type
 $\HHB{t}{U}$,
we observe that the order in which the adversaries act is in fact irrelevant.

\begin{lemma} \label{lemcomm}
Let $U_1,U_2 \subseteq [s]$ be sets, and let $t_1,t_2 \ge 0$ be integers. Then
$$\HHP{t_1}{U_1} \concat \HHP{t_2}{U_2} = \HHP{t_2}{U_2} \concat \HHP{t_1}{U_1}.$$
\end{lemma}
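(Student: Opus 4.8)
The plan is to prove both concatenations coincide by identifying each with the same, manifestly symmetric, channel $\Omega^\ast : \mA^s \dto \mA^s$. For $x \in \mA^s$ and $z \in \mA^s$, write $D(z) := \{i \in [s] : z_i \neq x_i\}$ for the disagreement set (suppressing the dependence on $x$), and set
$$\Omega^\ast(x) := \left\{ z \in \mA^s : D(z) \subseteq U_1 \cup U_2, \; |D(z) \cap (U_1 \setminus U_2)| \le t_1, \; |D(z) \cap (U_2 \setminus U_1)| \le t_2, \; |D(z)| \le t_1 + t_2 \right\}.$$
Interchanging $(t_1,U_1)$ with $(t_2,U_2)$ fixes $U_1 \cup U_2$ and $U_1 \cap U_2$, swaps $U_1 \setminus U_2$ with $U_2 \setminus U_1$, and swaps the two middle conditions, so $\Omega^\ast$ is left unchanged. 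Hence it suffices to prove $\HHP{t_1}{U_1} \concat \HHP{t_2}{U_2} = \Omega^\ast$; applying the same with the roles of the pairs reversed and using the symmetry of $\Omega^\ast$ then yields the claim.

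First I would prove $(\HHP{t_1}{U_1} \concat \HHP{t_2}{U_2})(x) \subseteq \Omega^\ast(x)$. By Definition~\ref{elementarychannels} with $e=0$ and the definition of concatenation, $z$ lies in the left-hand side iff there is $y \in \mA^s$ with $y_i = x_i$ for $i \notin U_1$, $z_i = y_i$ for $i \notin U_2$, $|\{i \in U_1 : y_i \neq x_i\}| \le t_1$, and $|\{i \in U_2 : z_i \neq y_i\}| \le t_2$. If $i \notin U_1 \cup U_2$ then $z_i = y_i = x_i$, so $D(z) \subseteq U_1 \cup U_2$. If $i \in D(z) \cap (U_1 \setminus U_2)$ then $z_i = y_i \neq x_i$, so such $i$ fall among the $\le t_1$ coordinates where $y$ differs from $x$ in $U_1$; symmetrically $D(z) \cap (U_2 \setminus U_1)$ lies among the $\le t_2$ coordinates where $z$ differs from $y$ in $U_2$. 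Finally each $i \in D(z)$ has $y_i \neq x_i$ or $z_i \neq y_i$, and these two index sets have sizes at most $t_1$ and $t_2$, so $|D(z)| \le t_1 + t_2$; thus $z \in \Omega^\ast(x)$.

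For the reverse inclusion, given $z \in \Omega^\ast(x)$ I would split the overlap. Put $A := D(z) \cap (U_1 \setminus U_2)$, $B := D(z) \cap (U_2 \setminus U_1)$, $C := D(z) \cap U_1 \cap U_2$; these are pairwise disjoint with union $D(z)$, and $|A| \le t_1$, $|B| \le t_2$, $|A| + |B| + |C| \le t_1 + t_2$. Choose $C_1 \subseteq C$ with $|C_1| = \min\{|C|,\, t_1 - |A|\}$ and set $C_2 := C \setminus C_1$; a short case check using $|A|+|B|+|C| \le t_1+t_2$ gives $|A|+|C_1| \le t_1$ and $|B|+|C_2| \le t_2$. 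Define $y$ by $y_i := z_i$ for $i \in A \cup C_1$ and $y_i := x_i$ otherwise. Then $y$ differs from $x$ exactly on $A \cup C_1 \subseteq U_1$, a set of size $\le t_1$, so $y \in \HHP{t_1}{U_1}(x)$; and $z$ differs from $y$ exactly on $D(z) \setminus (A \cup C_1) = B \cup C_2 \subseteq U_2$, of size $\le t_2$, so $z \in \HHP{t_2}{U_2}(y)$. Hence $z \in (\HHP{t_1}{U_1} \concat \HHP{t_2}{U_2})(x)$, which finishes the argument.

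The only nontrivial step is the combinatorial splitting of $C$: one must verify that the three inequalities $|A| \le t_1$, $|B| \le t_2$, $|A|+|B|+|C| \le t_1+t_2$ are precisely what allows $C$ to be partitioned into two pieces fitting the residual budgets $t_1 - |A|$ and $t_2 - |B|$. This is also where the erasure-free assumption enters: in the presence of erasures each side would carry a second budget, and the overlap bookkeeping would require a more careful (still elementary) argument.
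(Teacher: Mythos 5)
Your proof is correct, but it takes a genuinely different route from the paper's. The paper argues directly on a chain: given $y \in (\HHP{t_1}{U_1} \concat \HHP{t_2}{U_2})(x)$ witnessed by an intermediate $z$ with $z \in \HHP{t_1}{U_1}(x)$ and $y \in \HHP{t_2}{U_2}(z)$, it exhibits a single new intermediate $z'$ (set $z'_i := y_i$ on $\{i : z_i \neq y_i\}$, $z'_i := x_i$ on $\{i : z_i \neq x_i\} \setminus \{i : z_i \neq y_i\}$, $z'_i := z_i$ otherwise) and checks directly that $z' \in \HHP{t_2}{U_2}(x)$ and $y \in \HHP{t_1}{U_1}(z')$, giving one inclusion; the other follows by swapping indices. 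Your approach instead produces an explicit closed-form description $\Omega^\ast$ of the concatenated fan-out set, observes that this description is manifestly symmetric in $(t_1,U_1) \leftrightarrow (t_2,U_2)$, and then proves $\HHP{t_1}{U_1}\concat\HHP{t_2}{U_2} = \Omega^\ast$ by a two-sided inclusion, the harder direction requiring the combinatorial split of the overlap region $C = D(z)\cap U_1 \cap U_2$ into budgets $C_1, C_2$. Your split is valid: since $|A|\le t_1$, the quantity $t_1-|A|$ is nonnegative, and in both cases of the $\min$ the residual bound $|B|+|C_2|\le t_2$ follows from $|A|+|B|+|C|\le t_1+t_2$. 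The paper's argument is shorter and avoids needing to identify the target set in advance; yours is longer but strictly more informative, since it yields a self-contained characterization of $(\HHP{t_1}{U_1} \concat \HHP{t_2}{U_2})(x)$ that could be reused elsewhere. Your closing remark about erasures is also apt: with $e_1,e_2>0$ the intermediate symbol may lie in $\hat\mA^s$, and the paper's construction in Lemma~\ref{keylong} (part~\ref{keylongitem3}) shows how the bookkeeping changes when a $\star$-budget is present.
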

\begin{proof}
To simplify the notation, we set $\HH_1:=\HHP{t_1}{U_1}$ and 
$\HH_2:=\HHP{t_2}{U_2}$. We need to show that 
for all $x \in \mA^s$ we have $(\HH_1 \concat \HH_2)(x)=(\HH_2 \concat \HH_1)(x)$.
By symmetry, it suffices to show that for all $x \in \mA^s$ we have $(\HH_1 \concat 
\HH_2)(x) \subseteq (\HH_2 \concat \HH_1)(x)$. To see this, fix an arbitrary $y \in (\HH_1 \concat \HH_2)(x)$. Then by definition of concatenation there exists $z \in \mA^s$ such that $z \in \HH_1(x)$ and $y \in  \HH_2(z)$.

Define the sets $\overline{U}_1:=\{1 \le i \le s : z_i \neq x_i\} \subseteq U_1$ and 
$\overline{U}_2:=\{1 \le i \le s : z_i \neq y_i\} \subseteq U_2$. Now construct $z' \in \mA^s$
as follows:
$$\mbox{for $1 \le i \le s$, } \quad z'_i:= \left\{ \begin{array}{ll} y_i & \mbox{ if $i \in \overline{U}_2$,} \\ x_i & \mbox{ if $i \in \overline{U}_1 \setminus \overline{U}_2$,} \\
z_i & \mbox{ otherwise.} \end{array}   \right.$$
One can directly check that $z' \in \HH_2(x)$ and  $y \in \HH_1(z')$. Therefore
$y \in (\HH_2 \concat \HH_1)(x)$. Since $y$ was arbitrary,  this shows that
$(\HH_1 \concat \HH_2)(x) \subseteq (\HH_2 \concat \HH_1)(x)$ for all $x \in \mA^s$,
and concludes the proof.
\end{proof}

\begin{proposition} \label{permuta}
Let $L \ge 1$, $\bm{U}=(U_1,...,U_L)$ and $\bm{t}=(t_1,...,t_L)$ be  as in Definition~\ref{EXTdefff3}. Then
$$\HHB{t}{U} =  \HHP{t_1}{U_1} \concat \cdots \concat \HHP{t_L}{U_L}.$$
\end{proposition}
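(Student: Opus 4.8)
The plan is to prove this by induction on $L$, using Lemma~\ref{lemcomm} as the key combinatorial tool that makes the union in Definition~\ref{EXTdefff3} collapse to a single ordered concatenation. The base case $L=1$ is immediate, since the only permutation of $\{1\}$ is the identity. For the inductive step, the statement to establish is that for every permutation $p$ of $\{1,\dots,L\}$,
\begin{equation*}
\HHP{t_{p(1)}}{U_{p(1)}} \concat \cdots \concat \HHP{t_{p(L)}}{U_{p(L)}} = \HHP{t_1}{U_1} \concat \cdots \concat \HHP{t_L}{U_L},
\end{equation*}
after which the union over all permutations in Definition~\ref{EXTdefff3} is a union of copies of one and the same channel, hence equal to it. (Note: the statement in Definition~\ref{EXTdefff3} indexes permutations by $\{1,\dots,s\}$ but concatenates $L$ factors $\HHP{t_{p(1)}}{U_{p(1)}},\dots,\HHP{t_{p(L)}}{U_{p(L)}}$; I read this as a permutation of $\{1,\dots,L\}$, and would silently correct the index set to $\{1,\dots,L\}$.)

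The heart of the argument is that any permutation $p$ can be reached from the identity by a sequence of transpositions of adjacent positions, and each such adjacent swap is justified by Lemma~\ref{lemcomm}: if two consecutive factors in a concatenation are $\HHP{t_a}{U_a} \concat \HHP{t_b}{U_b}$, Lemma~\ref{lemcomm} lets us replace this block by $\HHP{t_b}{U_b} \concat \HHP{t_a}{U_a}$, and by associativity of concatenation (Proposition~\ref{genconc}, part~\ref{genconc1}) the surrounding factors are unaffected. Formally: write $p$ as a composition of adjacent transpositions $\tau_{i_1} \cdots \tau_{i_r}$; applying these one at a time to the identity ordering, and invoking Lemma~\ref{lemcomm} together with associativity at each stage, transforms $\HHP{t_1}{U_1} \concat \cdots \concat \HHP{t_L}{U_L}$ into $\HHP{t_{p(1)}}{U_{p(1)}} \concat \cdots \concat \HHP{t_{p(L)}}{U_{p(L)}}$. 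This shows all $L!$ orderings give the same channel, so $\bigcup_p$ over them equals any single one, namely $\HHP{t_1}{U_1} \concat \cdots \concat \HHP{t_L}{U_L}$.

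I do not expect a serious obstacle here; the only thing requiring a little care is the bookkeeping for the adjacent-transposition reduction. One clean way to phrase it, avoiding explicit manipulation of reduced words, is a secondary induction on $L$: given a permutation $p$ of $\{1,\dots,L\}$, let $j$ be the position with $p(j)=L$; using Lemma~\ref{lemcomm} repeatedly to commute the factor $\HHP{t_L}{U_L}$ past the factors to its right, move it to the last position, then apply the inductive hypothesis (for $L-1$ factors) to reorder the remaining ones into $\HHP{t_1}{U_1} \concat \cdots \concat \HHP{t_{L-1}}{U_{L-1}}$. Either formulation is routine once Lemma~\ref{lemcomm} and associativity are in hand; the substantive content has already been isolated into Lemma~\ref{lemcomm}.
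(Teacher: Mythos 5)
Your proposal is correct and follows essentially the same route as the paper, which proves the proposition by iterating Lemma~\ref{lemcomm}; your write-up just spells out the adjacent-transposition bookkeeping (and rightly flags the typo in Definition~\ref{EXTdefff3}, where the permutation should range over $\{1,\dots,L\}$ rather than $\{1,\dots,s\}$).
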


\begin{proof}
Apply Lemma~\ref{lemcomm} iteratively.
\end{proof}

The compound zero-error capacity of a channel of type $\HHB{t}{U}$ is defined as follows.

\begin{definition}\label{EXTdefcompound}
Let $L \ge 1$,  $\bm{U}:=(U_1,...,U_L)$ and $\bm{t}:=(t_1,...,t_L)$
be as in Definition~\ref{EXTdefff3}.
For $n \ge 1$, the \textbf{compound channel}
$\HHB{t}{U}^{n,\rest}: (\mA^s)^n \dto (\mA^s)^n$ is defined by 
$$\HHB{t}{U}^{n,\rest} := 
\bigcup_{\substack{ \bm{V} \subseteq \bm{U} \\
|\bm{V}| \le \bm{t}}}
\HHB{t}{V}^n,$$ 
and the 
\textbf{compound zero-error capacity} of the channel 
 $\HHB{t}{U}$ is the number 
$$\CAzr (\HHB{t}{U}) :=
\sup\left\{\CA (\HHB{t}{U}^{n,\rest})/n : n \in \N_{\ge 1}\right\}.$$
\end{definition}

We can now state the analogue of  Theorem \ref{mainthhamming} for the case 
of error-adversaries acting on possibly overlapping sets of coordinates.

\begin{theorem} \label{EXTmainthhamming}
Let $L \ge 1$,  $\bm{U}:=(U_1,...,U_L)$ and $\bm{t}:=(t_1,...,t_L)$
be as in Definition~\ref{EXTdefff3}.
For all $n \ge 1$ we have
\begin{equation*}
n \cdot \CA(\HHB{t}{U})  \le  
\CA(\HHB{t}{U}^n)  \le   
\CA(\HHB{t}{U}^{n,\rest})  \le    
n \left(s-\sigma_{\bm t} \langle \bm U \rangle \right),
\end{equation*}
where
$$\sigma_{\bm t} \langle \bm U \rangle:= \max \left\{   \left|\bigcup_{\ell=1}^L 
U_\ell^{(1)} \cup U_\ell^{(2)} \right|   \,:\,  U_\ell^{(j)} \subseteq U_\ell \mbox{ and } \left|U_\ell^{(j)} \right| \le t_\ell  \mbox{ for all $1 \le \ell \le L$ and $j=1,2$} \right\}$$
is a parameter which we call the \textbf{adversarial strength}.
In particular,
$$\CA(\HHB{t}{U})  \le  
\CAz(\HHB{t}{U})  \le   
\CAzr(\HHB{t}{U})  \le    
s-\sigma_{\bm t} \langle \bm U \rangle.$$
Moreover, all the above 
inequalities are achieved with equality if $\mA=\F_q$ and $q$ is sufficiently large.
\end{theorem}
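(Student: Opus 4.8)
The plan is to mirror closely the structure of the proof of Theorem~\ref{mainthhamming}, since the two results have the same shape: a chain of three inequalities for $\CA$ of powers, plus a matching lower bound over large fields. The leftmost inequality $n\cdot\CA(\HHB{t}{U})\le\CA(\HHB{t}{U}^n)$ is just part~1 of Proposition~\ref{relate} (taking $n$ powers of a good code), and the middle inequality $\CA(\HHB{t}{U}^n)\le\CA(\HHB{t}{U}^{n,\rest})$ follows because $\HHB{t}{U}^n$ is one of the channels in the union defining $\HHB{t}{U}^{n,\rest}$ (take $\bm V=\bm U$), so it is finer, hence has at least as large a capacity by the remark after Definition~\ref{deffiner}. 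The ``In particular'' statement about zero-error and compound zero-error capacities then follows by dividing by $n$ and taking suprema, exactly as in Proposition~\ref{relationcapacities} and the proof of Theorem~\ref{mainthhamming}. So the real content is (a) the upper bound $\CA(\HHB{t}{U}^{n,\rest})\le n(s-\sigma_{\bm t}\langle\bm U\rangle)$ and (b) the lower bound $\CA(\HHB{t}{U})\ge s-\sigma_{\bm t}\langle\bm U\rangle$ over $\F_q$ for large $q$.

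For the upper bound (a), I would follow the projection argument of Theorem~\ref{mainthhamming}. First observe that by Proposition~\ref{permuta} and Lemma~\ref{lemcomm}, for each $\bm V\subseteq\bm U$ with $|\bm V|\le\bm t$ the channel $\HHB{t}{V}$ is the (order-independent) concatenation $\HHP{t_1}{V_1}\concat\cdots\concat\HHP{t_L}{V_L}$. I then want a ``key lemma'' analogous to Lemma~\ref{keylong}: given the maximizer in the definition of $\sigma_{\bm t}\langle\bm U\rangle$, i.e.\ sets $\overline U_\ell^{(1)},\overline U_\ell^{(2)}\subseteq U_\ell$ with $|\overline U_\ell^{(j)}|\le t_\ell$ realizing $\sigma:=\sigma_{\bm t}\langle\bm U\rangle=\bigl|\bigcup_\ell(\overline U_\ell^{(1)}\cup\overline U_\ell^{(2)})\bigr|$, I claim there exist $\bm V,\bm V'\subseteq\bm U$ with $|\bm V|,|\bm V'|\le\bm t$ such that for any $x,x'\in\mA^s$ agreeing off $\overline U:=\bigcup_\ell(\overline U_\ell^{(1)}\cup\overline U_\ell^{(2)})$ one has $\HHB{t}{V}(x)\cap\HHB{t}{V'}(x')\neq\emptyset$; indeed take $V_\ell=\overline U_\ell^{(1)}$, $V_\ell'=\overline U_\ell^{(2)}$, and use the common word obtained by overwriting $x$ on $\overline U^{(1)}:=\bigcup_\ell\overline U_\ell^{(1)}$ with the corresponding coordinates of $x'$ — one checks this lies in both fan-out sets. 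Then, letting $\pi:\mA^s\to\mA^{s-\sigma}$ be projection off $\overline U$ and $\Pi$ its $n$-fold extension, for any capacity-achieving good code $\mC$ for $\HHB{t}{U}^{n,\rest}$ the restriction $\Pi|_\mC$ is injective: if $\Pi(x)=\Pi(x')$ then $x^k,x'^k$ agree off $\overline U$ for all $k$, so by the key lemma and the product formula for fan-out sets of powers, $\HHB{t}{V}^n(x)\cap\HHB{t}{V'}^n(x')\neq\emptyset$, and since both these powers appear in the union defining $\HHB{t}{U}^{n,\rest}$, goodness of $\mC$ forces $x=x'$. Hence $|\mC|\le q^{n(s-\sigma)}$, i.e.\ $\CA(\HHB{t}{U}^{n,\rest})\le n(s-\sigma)$.

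For the lower bound (b), I would take $\mA=\F_q$ and let $\mC\subseteq\F_q^s$ be a code of cardinality $q^{s-\sigma}$ with $\dH(\mC)=\sigma+1$ (an MDS code, available for $q$ large), and show $\mC$ is good for $\HHB{t}{U}$. Suppose not: then there are distinct $x,x'\in\mC$ and $z\in\HHB{t}{U}(x)\cap\HHB{t}{U}(x')$. Using Proposition~\ref{permuta}, write the first membership as a chain $x=z^{(0)},z^{(1)},\dots,z^{(L)}=z$ with $z^{(\ell)}\in\HHP{t_\ell}{U_\ell}(z^{(\ell-1)})$, so $z^{(\ell)}$ and $z^{(\ell-1)}$ differ only in a set $U_\ell^{(1)}\subseteq U_\ell$ of size $\le t_\ell$; hence $z$ and $x$ agree outside $\bigcup_\ell U_\ell^{(1)}$. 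Similarly $z$ and $x'$ agree outside some $\bigcup_\ell U_\ell^{(2)}$ with $|U_\ell^{(2)}|\le t_\ell$. Therefore $x$ and $x'$ agree outside $\bigcup_\ell(U_\ell^{(1)}\cup U_\ell^{(2)})$, a set of size $\le\sigma$ by definition of the adversarial strength, so $\dH(x,x')\le\sigma$, contradicting $\dH(\mC)=\sigma+1$. This gives $\CA(\HHB{t}{U})\ge s-\sigma$, and combined with (a) all inequalities in the theorem collapse to equalities for large $q$.

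The main obstacle I expect is the key lemma in step (a): unlike in Lemma~\ref{keylong}, the sets $U_\ell$ here may overlap, so the bookkeeping for which coordinates get overwritten — and the verification that the overwritten word genuinely lies in $\HHB{t}{V}(x)=(\HHP{t_1}{V_1}\concat\cdots\concat\HHP{t_L}{V_L})(x)$, realized through some valid ordering of the $L$ concatenation steps, each changing at most $t_\ell$ coordinates within $V_\ell$ — needs care. The point is that because the $V_\ell$'s are exactly the sets $\overline U_\ell^{(1)}$ achieving the max, one can process them in any order (Lemma~\ref{lemcomm}) and each step only ever needs to flip coordinates that are ``supposed to'' change, so the per-adversary budget $t_\ell$ is never exceeded even on overlapping coordinates; making this argument airtight is where the work lies.
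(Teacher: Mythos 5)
Your proposal follows the paper's proof closely: the same chain-of-inequalities structure, the same key lemma (your $\bm V=(\overline U_\ell^{(1)})_\ell$, $\bm V'=(\overline U_\ell^{(2)})_\ell$ and the ``overwrite $x$ on $\overline U^{(1)}$ with $x'$'' vector is exactly the paper's $z[L]$ constructed in Lemma~\ref{EXTkeylong}), the same projection-injectivity argument via Proposition~\ref{permuta}, and the same MDS-code lower bound with the $\dH(x,x')\le\sigma$ contradiction. The obstacle you flag (the overlapping-$U_\ell$ bookkeeping in the chain) is exactly the content the paper carries out with its recursive $z[\ell], z'[\ell]$ construction, so you have correctly identified where the work is.

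One small point of logic is off. For the middle inequality $\CA(\HHB{t}{U}^n)\le\CA(\HHB{t}{U}^{n,\rest})$ you say ``$\HHB{t}{U}^n$ is one of the channels in the union (take $\bm V=\bm U$), so it is finer.'' This does not work: $\bm V=\bm U$ need not satisfy $|\bm V|\le\bm t$, so it is generally not a member of the union, and even when it is, the fact that a union \emph{contains} a channel as a member makes the union \emph{coarser}, not finer. The correct one-line argument is the opposite observation: \emph{every} $\bm V$ in the union satisfies $\bm V\subseteq\bm U$, so each term $\HHB{t}{V}^n$ is finer than $\HHB{t}{U}^n$; a union of channels all finer than $\HHB{t}{U}^n$ is itself finer (its fan-out sets are contained in those of $\HHB{t}{U}^n$), whence $\CA(\HHB{t}{U}^{n,\rest})\ge\CA(\HHB{t}{U}^n)$ by the remark after Definition~\ref{deffiner}. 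With that correction the proposal matches the paper's proof.
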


To prove Theorem \ref{EXTmainthhamming},
we will need the following preliminary lemma.

\begin{lemmaapp} \label{EXTkeylong}
Let $L \ge 1$,  $\bm{U}:=(U_1,...,U_L)$ and $\bm{t}:=(t_1,...,t_L)$
be as in Definition~\ref{EXTdefff3}. 
Fix any sets
$\overline{U}_\ell^{1}, \overline{U}_\ell^{2}, 
 \subseteq U_\ell$, for 
$1 \le \ell \le L$,  that achieve the maximum in the definition of
the adversarial strength $\sigma_{\bm t} \langle \bm U \rangle$ (see Theorem~\ref{EXTmainthhamming}).
Let $$\overline{U}:=\bigcup_{\ell=1}^L 
\overline{U}_\ell^{1} \cup \overline{U}_\ell^{2},$$
and define $\bm{V}:=(\overline{U}^1_1,...,\overline{U}^1_L) \subseteq \bm{U}$,
 $\bm{V'}=(\overline{U}^2_1,...,\overline{U}^2_L)\subseteq \bm{U}$.
Then for any $x,x' \in \mA^s$ we have 
$\HHB{t}{V}   (x) \cap \HHB{t}{V'}  (x') \neq \emptyset$
whenever $x_i=x'_i$ for all $i \in [s] \setminus \overline{U}$.
\end{lemmaapp}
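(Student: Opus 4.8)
The plan is to construct explicitly a common element $z$ of the two fan-out sets. Recall that, by Proposition~\ref{permuta} and the definition of concatenation, $\HHB{t}{V}(x) = (\HHP{t_1}{\overline{U}^1_1} \concat \cdots \concat \HHP{t_L}{\overline{U}^1_L})(x)$, which is the set of all $z \in \mA^s$ obtainable from $x$ by changing, for each $\ell$, at most $|\overline{U}^1_\ell|\le t_\ell$ coordinates inside $\overline{U}^1_\ell$ (the intermediate steps can always be chosen so that the total set of altered coordinates is $\bigcup_\ell \overline{U}^1_\ell$ or any subset thereof; since changes are unrestricted values this is straightforward). So concretely: $z \in \HHB{t}{V}(x)$ iff $z$ and $x$ agree outside $\overline{U}^1 := \bigcup_{\ell=1}^L \overline{U}^1_\ell$, with the caveat that the coordinates where they differ must be coverable by picking $\le t_\ell$ from each $\overline{U}^1_\ell$ — but since $z_i$ may equal $x_i$ even for $i$ in the chosen sets, in fact \emph{any} $z$ agreeing with $x$ outside $\overline{U}^1$ lies in $\HHB{t}{V}(x)$. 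Similarly $z \in \HHB{t}{V'}(x')$ iff $z$ agrees with $x'$ outside $\overline{U}^2 := \bigcup_{\ell=1}^L \overline{U}^2_\ell$. First I would record this reformulation carefully as the key reduction.

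Given this, I would define $z$ as follows: for $i \notin \overline{U}$ set $z_i := x_i = x'_i$ (these are equal by hypothesis); for $i \in \overline{U}^1$ set $z_i := x'_i$; for the remaining indices $i \in \overline{U} \setminus \overline{U}^1$ — which are then necessarily in $\overline{U}^2$ — set $z_i := x'_i$ as well. Actually it is cleanest to just set $z_i := x'_i$ for all $i \in \overline{U}$ and $z_i := x_i$ for $i \notin \overline{U}$. Then $z$ agrees with $x'$ on all of $\overline{U} \supseteq \overline{U}^2$, hence outside $\overline{U}^2$ we have $z_i = x'_i$ only where $i \notin \overline{U}$, where $z_i = x_i = x'_i$; so $z$ agrees with $x'$ outside $\overline{U}^2$, giving $z \in \HHB{t}{V'}(x')$. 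And $z$ agrees with $x$ outside $\overline{U} \supseteq \overline{U}^1$: for $i \notin \overline{U}^1$, either $i \notin \overline{U}$ (so $z_i = x_i$) or $i \in \overline{U}\setminus\overline{U}^1 \subseteq \overline{U}^2$ — but here $z_i = x'_i$ which need not equal $x_i$. Hmm. So I need $z$ to agree with $x$ \emph{outside} $\overline{U}^1$, which fails on $\overline{U}\setminus\overline{U}^1$. The fix: on $\overline{U}^2\setminus\overline{U}^1$ I cannot use $x'_i$; but I also cannot use $x_i$ there if that breaks agreement with $x'$ outside $\overline{U}^2$. The resolution is that $\overline{U}^2\setminus\overline{U}^1 \subseteq \overline{U}^2$, so changing $z_i$ there is \emph{allowed} by the $\HHB{t}{V'}$ side, and $\overline{U}^2\setminus\overline{U}^1$... wait, it must be covered by the $\overline{U}^1$-side too. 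The correct assignment, mirroring the proof of Lemma~\ref{lemcomm}, is: $z_i := x'_i$ for $i\in\overline{U}^2$; $z_i := x_i$ for $i \in \overline{U}^1\setminus\overline{U}^2$; $z_i := x_i \;(= x'_i)$ otherwise. Then outside $\overline{U}^1$: if $i\in\overline{U}^2\setminus\overline{U}^1$ this is a problem again.

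So in fact the honest statement is that $z$ differs from $x$ exactly on a subset of $\overline{U}^2$ and differs from $x'$ exactly on a subset of $\overline{U}^1$; to land in $\HHB{t}{V}(x)$ I need the difference-set with $x$ to sit inside $\overline{U}^1$, not $\overline{U}^2$. The way out is the one used in Lemma~\ref{lemcomm}: exploit that each adversary's contribution is applied in sequence, so the ``trajectory'' $x \to z' \to z$ can route through an intermediate $z'$. Concretely I would set $z := x'$ restricted to $\overline{U}$ and $= x$ outside, and then argue $z \in \HHB{t}{V}(x)$ directly from the reformulation: $z$ and $x$ agree outside $\overline{U} = \overline{U}^1\cup(\text{rest of }\overline{U}^2)$, and since the coordinates of $\overline{U}^2\setminus\overline{U}^1$ where $z\ne x$ can, via Lemma~\ref{lemcomm}-style reordering, be absorbed... — this is exactly where care is needed. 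The main obstacle is precisely this bookkeeping: showing that the single vector $z$ (agreeing with $x'$ on $\overline{U}$, with $x$ off $\overline{U}$) genuinely lies in $\HHB{t}{V}(x)$ when $\overline{U}^1$ and $\overline{U}^2$ overlap arbitrarily, which requires invoking the reformulation of $\HHB{t}{V}$ as ``any vector agreeing with $x$ outside $\bigcup_\ell \overline{U}^1_\ell$'' — and that reformulation itself needs the observation that an intermediate-step coordinate may be left unchanged, so $\bigcup_\ell\overline{U}^1_\ell$ is the full freedom available. Once that reformulation is nailed, the inclusion $z\in\HHB{t}{V}(x)$ is immediate because $z$ agrees with $x$ outside $\overline{U}$, and $\overline{U}^1\subseteq\overline{U}$ is \emph{not} what is needed — rather, one notes the set where $z$ differs from $x$ is contained in $\overline{U}$, and every coordinate of $\overline{U}$ lies in some $\overline{U}^1_\ell$ \emph{or} some $\overline{U}^2_\ell$; on a coordinate in $\overline{U}^2_\ell\setminus\bigcup_j\overline{U}^1_j$ we have $z_i = x'_i$, which we may also write — since there's freedom — by instead defining $z$ to equal $x_i$ there. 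I will therefore define $z$ by: $z_i = x'_i$ if $i\in\bigcup_\ell\overline{U}^1_\ell$, and $z_i = x_i$ otherwise; then $z\in\HHB{t}{V}(x)$ trivially (differs from $x$ only on $\bigcup_\ell\overline{U}^1_\ell$), and $z\in\HHB{t}{V'}(x')$ because $z$ differs from $x'$ only on $\overline{U}\setminus\bigcup_\ell\overline{U}^1_\ell \subseteq \bigcup_\ell\overline{U}^2_\ell$ (here using that outside $\overline{U}$, $x=x'$, and on $\bigcup_\ell\overline{U}^1_\ell$, $z=x'$). This closes the argument cleanly; I expect the only delicate point to be the preliminary reformulation of $\HHB{t}{V}(x)$ as $\{z : z_i = x_i \text{ for } i\notin\bigcup_\ell\overline{U}^1_\ell\}$, which follows from Proposition~\ref{permuta} together with the remark that a concatenation step is free to leave coordinates fixed.
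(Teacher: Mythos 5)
Your final construction — $z_i := x'_i$ for $i \in \bigcup_\ell \overline{U}^1_\ell$ and $z_i := x_i$ otherwise — is exactly the vector the paper produces via its recursive sequence $z[0],\dots,z[L]$ and $z'[0],\dots,z'[L]$, and your justification via the reformulation $\HHB{t}{V}(x) = \{z : z_i = x_i \text{ for } i \notin \bigcup_\ell V_\ell\}$ (valid because $|V_\ell| \le t_\ell$ makes each concatenation step unconstrained on $V_\ell$) is simply a cleaner packaging of the same recursion. This is correct and essentially the paper's proof, just with the intermediate chain encapsulated as an explicit characterization of $\HHB{t}{V}(x)$.
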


\begin{proof}
To simplify the notation, let $V_\ell:=\overline{U}_\ell^1$ and $V'_\ell:=\overline{U}_\ell^2$
for all $\ell \in \{1,...,L\}$. Let $x,x' \in \mA^s$ with 
 $x_i=x'_i$ for all $i \in [s] \setminus \overline{U}$.
  We will explicitly construct a vector $z \in 
\HHB{t}{V}   (x) \cap \HHB{t}{V'}  (x')$. First of all,
define the sets
$$\overline{V}:= \bigcup_{\ell=1}^L V_\ell, \quad
\overline{V}':= \bigcup_{\ell=1}^L V'_\ell, \quad \overline{W}:= \overline{V}
\cup \overline{V}'.$$
Note that, by construction, $\overline{W} \supseteq \overline{U}$.
Therefore $x_i=x_i'$ for all $i \in [s] \setminus \overline{W}$.
To construct $z$,  recursively define vectors
$z[0],...,z[L] \in \mA^s$ and  $z'[0],...,z'[L] \in \mA^s$
as follows. Set $z[0]:=x$ and $z'[0]:=x'$.
Then for all $1 \le \ell \le L$ define
$$z[\ell]_i:= \left\{ \begin{array}{ll} x_i' & \mbox{ if $i \in V_\ell$,} \\ z[\ell-1]_i & \mbox{ otherwise},  \end{array}\right.\quad
z'[\ell]_i:= \left\{ \begin{array}{ll} x_i & \mbox{ if $i \in V'_\ell \setminus \overline{V}$,} \\ z'[\ell-1]_i & \mbox{ otherwise},  \end{array}\right.\quad \mbox{ for $1 \le i \le s$.} $$
By construction, for all $1 \le \ell \le L$ we have 
$$z[\ell] \in \HHP{t_\ell}{V_\ell}(z[\ell-1])
\quad \mbox{ and } \quad z'[\ell] \in \HHP{t_\ell}{V'_\ell}(z'[\ell-1]).$$ 
By definition of $\HHB{t}{V}$ and 
$\HHB{t}{V'}$ 
we have $z[L] \in \HHB{t}{V}(x)$
and $z'[L] \in \HHB{t}{V'}(x')$. 
However, it is easy to see that $z[L]=z'[L]$. Therefore 
we can take $z:=z[L]=z'[L]$. 
\end{proof}

\begin{proof}[Proof of Theorem~\ref{EXTmainthhamming}]
The fact that $n \cdot \CA(\HHB{t}{U})  \le  
\CA(\HHB{t}{U}^n)  \le   
\CA(\HHB{t}{U}^{n,\rest})$ can be easily  shown.
We only establish
the theorem for $\sigma:=\sigma_{\bm u} \langle \bm{U} 
\rangle <s$.
The case $\sigma=s$ is similar.
It suffices to show that $\CA(\HHB{t}{U}^{n,\rest}) \le n(s-\sigma)$ for all $n \ge 1$, and that for 
$\mA=\F_q$ and sufficiently large $q$ we have $\CA(\HHB{t}{U} ) \ge    
s-\sigma$.

We let
$\overline{U}_\ell^{1}$, $\overline{U}_\ell^{2}$ 
(for $1 \le \ell \le L$) and $\overline{U}$ be as in Lemma
\ref{EXTkeylong}.
Note that, in particular,  
$\sigma=|\overline{U}|$. In the sequel we denote by $\pi:\mA^s \to
 \mA^{s-\sigma}$ the projection on the coordinates outside $\overline{U}$.

Let $n \ge 1$ be an integer. Then $\pi$ extends component-wise to a map
$\Pi:(\mA^s)^n \to (\mA^{s-\sigma})^n$. Let $\mC\subseteq (\mA^s)^n$ 
be a capacity-achieving good code for $\HHB{t}{U} ^{n,\rest}$.
To obtain the upper bound, it suffices to show that 
the restriction of $\Pi$ to $\mC$ is injective. 

Take $x,x' \in \mC$, and assume
$\Pi(x)=\Pi(x')$. We will show that $x=x'$. Write $x=(x^1,...,x^n)$ and $x'=(x'^1,...,x'^n)$. 
By definition of $\Pi$, we have $\pi(x^k)=\pi(x'^k)$ for all $1 \le k \le n$.
By Lemma~\ref{EXTkeylong}, the
$L$-tuples 
$\bm{V}:=(\overline{U}^1_1,...,\overline{U}^1_L) \subseteq \bm{U}$ and
 $\bm{V'}:=(\overline{U}^2_1,...,\overline{U}^2_L) \subseteq \bm{U}$ 
satisfy:
\begin{equation} \label{nonvuotoEXT}
|\bm{V}|, |\bm{V'}| \le \bm{t} \quad \mbox{ and }  \quad
\HHB{t}{V}  (x^k) \cap \HHB{t}{V'}  (x'^k) \neq \emptyset \quad \mbox{
for all $1 \le k \le n$}.
\end{equation}
By definition of $\HHB{t}{U}^{n,\rest}$ we have
\begin{eqnarray*}
\HHB{t}{U}^{n,\rest}(x) \cap \HHB{t}{U}^{n,\rest}(x') 
&\supseteq& \HHB{t}{V}^n(x^1,...,x^n) \cap \HHB{t}{V'}^n (x'^1,...,x'^n) \\
&=& \prod_{k=1}^n \left( \HHB{t}{V} (x^k) \cap \HHB{t}{V'}(x'^k) \right) \neq \emptyset,
\end{eqnarray*}
where the last inequality follows from (\ref{nonvuotoEXT}).
Since $\mC$ is good for $\HHB{t}{U}^{n,\rest}$,
we conclude $x=x'$. This shows that the restriction of $\Pi$ to $\mC$ is injective, as desired. 

We now prove that the upper bounds in the theorem are tight for $\mA=\F_q$ and $q$ sufficiently large.
As already stated, it suffices to show that $\CA(\HHB{t}{U}) \ge    
s-\sigma$. 
Let $\mC \subseteq \F_q^s$ be code
with cardinality $q^{s-\sigma}$ and minimum distance $\dH(\mC)=\sigma+1$.
We will show that $\mC$ is good for 
$\HHB{t}{U}$. Let $x,x' \in \mC$ be arbitrary with $x \neq x'$.
Assume by contradiction that there exists 
$z \in  \HHB{t}{U}(x) \cap \HHB{t}{U}(x')$.
Then by definition of $\HHB{t}{U}$ and Proposition~\ref{permuta} there
exist vectors $z[0],z[1],...,z[L] \in \mA^s$ and $z'[0],z'[1],...,z'[L] \in \mA^s$
with the following properties: $z[0]=x$, $z'[0]=x'$, $z[L]=z'[L]=z$, 
$$z[\ell] \in \HHP{t_\ell}{U_\ell} (z[\ell-1])  \mbox{ and }    
z'[\ell] \in \HHP{t_\ell}{U_\ell}(z'[\ell-1]) \quad \mbox{ for all $1 \le \ell \le L$}.$$
Now for $1 \le \ell \le L$ define the sets 
$$U_\ell^{1} := \{ 1 \le i \le s : z[\ell]_i \neq z[\ell-1]_i\} \quad \mbox { and } \quad
U_\ell^{2} := \{ 1 \le i \le s : z'[\ell]_i \neq z'[\ell-1]_i\}.$$ By the construction of the 
 $U_\ell^{j}$'s and the definition of 
$\sigma_{\bm u} \langle \bm U \rangle$, we have 
$$\sigma = \sigma_{\bm u} \langle \bm U \rangle \ge \left|\bigcup_{\ell=1}^L 
U_\ell^{1} \cup U_\ell^{2} \right|.$$
On the other hand, 
$$\{ 1 \le i \le s : z_i \neq x_i \mbox{ or } z_i \neq x_i'\} \subseteq \left( \bigcup_{\ell=1}^L U_\ell^{1} \right) 
\cup \left( \bigcup_{\ell=1}^L U_\ell^{2} \right) = \bigcup_{\ell=1}^L 
U_\ell^{1} \cup U_\ell^{2}.$$
Therefore the vectors $z$, $x$ and $x'$ must agree in at least $s-\sigma$ components. In particular, we have
$\dH(x,x') \le \sigma$, a contradiction.
\end{proof}

Theorem~\ref{EXTmainthhamming} can be ported to the network context to study 
the scenario where multiple adversaries have access to possibly overlapping sets of network edges,
and can corrupt up to a certain number of them.

\subsection{Rank-Metric Adversaries}
In this section we let $\mA:=\F_q^m$, and study adversarial channels whose input and output alphabet is the matrix 
space $\mA^s \cong \F_q^{m \times s}$, where $m$ and $s$ are positive integers. In the sequel, we denote by 
$M_1,...,M_s$ the columns of a matrix $M \in \F_q^{m \times s}$. 
We consider an adversary who can access only some of the columns of a matrix $M \in \F_q^{m \times s}$, and is able
to change $M$
in any matrix $N \in \F_q^{m \times s}$ such that $\mbox{rk}(N-M) \le t$, where $t \ge 0$ is an integer
measuring the adversary's power.

\begin{definition}
Let $m,s \ge 1$ and $t \ge 0$ be integers, and let $U \subseteq [s]$
be a subset. The \textbf{matrix channel} $\textnormal{R}_t \langle U \rangle: \F_q^{m \times s} \dto \F_q^{m \times s}$ is defined,
for all $M \in \F_q^{m \times s}$,  by
$$\textnormal{R}_t \langle U \rangle(M):=\{N \in \F_q^{m \times s} : \mbox{rk}(N-M) \le t \mbox{ and } M_i=N_i \mbox{ for all $i \notin U$}\}.$$
\end{definition}

The one-shot capacity and zero-error capacity of a rank-metric channel 
$\textnormal{R}_t \langle U \rangle)$ are as follows.

\begin{theorem} \label{EXTrank}
Let $m,s \ge 1$ and $t \ge 0$ be integers, and let $U \subseteq [s]$.
For all $n \ge 1$ we have
\begin{equation*}
n \cdot \CA(\textnormal{R}_t \langle U \rangle)  \le  
\CA(\textnormal{R}_t \langle U \rangle^n)  \le    
n \left( s-\min\{2t,|U|\} \right).
\end{equation*}
In particular,
$$\CA(\textnormal{R}_t \langle U \rangle)  \le  
\CAz(\textnormal{R}_t \langle U \rangle)  \le    
s-\min\{2t,|U|\}.$$
Moreover, all the above 
inequalities are achieved with equality, provided that $q \ge m \ge s$.
\end{theorem}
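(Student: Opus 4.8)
The plan is to mimic the proof of Theorem~\ref{mainthhamming}, replacing Hamming weight by rank and Lemma~\ref{keylong} by an elementary rank-splitting observation. Set $\sigma:=\min\{2t,|U|\}$, and recall that in this section capacities are base-$q^m$ logarithms. The leftmost inequality $n\cdot\CA(\textnormal{R}_t \langle U \rangle)\le\CA(\textnormal{R}_t \langle U \rangle^n)$ is the first part of Proposition~\ref{relate}, and $\CAz(\textnormal{R}_t \langle U \rangle)\le s-\sigma$ follows from the middle inequality by taking the supremum over $n$; so the two substantive points are the bound $\CA(\textnormal{R}_t \langle U \rangle^n)\le n(s-\sigma)$ for every $n\ge 1$, and the matching lower bound $\CA(\textnormal{R}_t \langle U \rangle)\ge s-\sigma$ when $q\ge m\ge s$.

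The first step is the confusion criterion: for $M,M'\in\F_q^{m\times s}$ one has $\textnormal{R}_t \langle U \rangle(M)\cap\textnormal{R}_t \langle U \rangle(M')\ne\emptyset$ if and only if $M_i=M'_i$ for all $i\notin U$ and $\mathrm{rk}(M-M')\le\sigma$. For the forward direction, a common output $N$ satisfies $M_i=N_i=M'_i$ off $U$ and $\mathrm{rk}(M-M')\le\mathrm{rk}(N-M)+\mathrm{rk}(N-M')\le 2t$, while $M-M'$ is supported on the $|U|$ columns indexed by $U$, so $\mathrm{rk}(M-M')\le\min\{2t,|U|\}=\sigma$. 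The reverse direction is the key estimate: if $M-M'$ is supported on $U$ with $\mathrm{rk}(M-M')=r\le\sigma\le 2t$, take a rank factorization $M-M'=\sum_{i=1}^{r}a_ib_i^{\top}$ with each $b_i$ supported on $U$, group the summands into $E_1+E_2$ with $\mathrm{rk}(E_1),\mathrm{rk}(E_2)\le\lceil r/2\rceil\le t$ and $E_1,E_2$ supported on $U$, and put $N:=M-E_1=M'+E_2$; then $N$ lies in both fan-out sets.

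For the upper bound, fix any subset $\overline{U}\subseteq U$ with $|\overline{U}|=\sigma$ and let $\Pi\colon(\F_q^{m\times s})^n\to(\F_q^{m\times(s-\sigma)})^n$ delete, in each of the $n$ blocks, the columns indexed by $\overline{U}$. If $\mC$ is good for $\textnormal{R}_t \langle U \rangle^n$ and $\Pi(M)=\Pi(M')$, then in every block $k$ the matrices $M^k,M'^k$ agree off $\overline{U}\subseteq U$, so $M^k-M'^k$ is supported on $\overline{U}$ and has rank $\le|\overline{U}|=\sigma$; by the criterion $\textnormal{R}_t \langle U \rangle(M^k)\cap\textnormal{R}_t \langle U \rangle(M'^k)\ne\emptyset$ for each $k$, hence $\textnormal{R}_t \langle U \rangle^n(M)\cap\textnormal{R}_t \langle U \rangle^n(M')=\prod_{k=1}^{n}\bigl(\textnormal{R}_t \langle U \rangle(M^k)\cap\textnormal{R}_t \langle U \rangle(M'^k)\bigr)\ne\emptyset$, forcing $M=M'$. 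Thus $\Pi$ is injective on $\mC$, so $|\mC|\le q^{mn(s-\sigma)}$ and $\CA(\textnormal{R}_t \langle U \rangle^n)\le n(s-\sigma)$.

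For the lower bound, by the above it only remains to produce a good code of size $q^{m(s-\sigma)}$ when $q\ge m\ge s$. Identify $\F_q^{m\times s}\cong\F_q^{m\times|U|}\times\F_q^{m\times(s-|U|)}$ by splitting the columns according to $U$, and take $\mC:=\mD\times\F_q^{m\times(s-|U|)}$, where $\mD\subseteq\F_q^{m\times|U|}$ is a maximum rank distance (Gabidulin) code of minimum rank distance $\sigma+1$ and size $q^{m(|U|-\sigma)}$ — this is where $m\ge|U|$ (hence $m\ge s$) and $\sigma\le|U|$ are used. Then $|\mC|=q^{m(s-\sigma)}$, and for distinct $M,M'\in\mC$ either their columns off $U$ differ, or their $U$-parts are distinct codewords of $\mD$ so $\mathrm{rk}(M-M')\ge\sigma+1$; in either case the confusion criterion shows the fan-out sets are disjoint, so $\mC$ is good. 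Combining $\CA(\textnormal{R}_t \langle U \rangle)\ge s-\sigma$ with the upper bounds forces equality throughout the displayed chains. The only real obstacle is the reverse direction of the confusion criterion — realizing a rank-$\le 2t$ perturbation confined to $\sigma$ columns as two successive rank-$\le t$ perturbations; the rest is bookkeeping parallel to Theorem~\ref{mainthhamming} together with the standard Singleton bound for rank-metric codes.
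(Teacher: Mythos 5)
Your proof is correct, and it takes a genuinely different route from the paper's in the one step that actually carries technical content. The paper, by analogy with Lemma~\ref{EXTkeylong}, fixes $\overline{U}^1,\overline{U}^2\subseteq U$ with $|\overline{U}^j|\le t$ and $|\overline{U}^1\cup\overline{U}^2|=\sigma$, and the ambiguity witness is built by \emph{column splicing}: set $N_i:=M'_i$ for $i\in\overline{U}^1$ and $N_i:=M_i$ elsewhere, so that $N-M$ is supported on $\overline{U}^1$ and $N-M'$ on $\overline{U}\setminus\overline{U}^1\subseteq\overline{U}^2$, and the rank bound $\le t$ on each follows from the trivial fact that a matrix supported on $\le t$ columns has rank $\le t$. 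You instead prove an explicit confusion criterion and realize the witness by a \emph{rank factorization} $M-M'=\sum_i a_ib_i^\top$ (with $b_i$ supported on $U$), split into $E_1+E_2$ of ranks $\le\lceil r/2\rceil\le t$. Both arguments are valid; the paper's is slightly slicker in that it never invokes a factorization, while yours is more native to the rank metric and, as a bonus, isolates a clean characterization of when two inputs are confusable (agreement off $U$ and $\mathrm{rk}(M-M')\le\sigma$), which also makes the achievability step transparent. On that last step, the paper uses a global MRD code in $\F_q^{m\times s}$ of minimum rank distance $2t+1$ when $2t<|U|$ and treats $2t\ge|U|$ as a trivial case, whereas your product code $\mD\times\F_q^{m\times(s-|U|)}$ handles both cases uniformly — again correct. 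One cosmetic slip: the parenthetical ``$m\ge|U|$ (hence $m\ge s$)'' has the implication reversed; since $|U|\le s$, the hypothesis $m\ge s$ is what gives $m\ge|U|$, so it should read ``$m\ge s$ (hence $m\ge|U|$).'' This does not affect the argument.
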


\begin{proof}[Proof of Theorem~\ref{EXTrank}]
Define $\sigma:=\min\{2t,|U|\}$. By Proposition \ref{relate}, it suffices to show that $\CA(\textnormal{R}_t \langle U \rangle^n)  \le    
n \left( s- \sigma \right)$ for all $n \ge 1$, and that
for $q \ge m \ge s$ we have $\CA(\textnormal{R}_t \langle U \rangle) \ge s-\sigma$.
We only prove the theorem for $\sigma<s$. The case $\sigma=s$ is similar.
Fix sets $\overline{U}^1, \overline{U}^2 \subseteq U$ with $|\overline{U}^1|, |\overline{U}^2| \le t$ and 
$|\overline{U}^1\cup \overline{U}^2|=\min\{2t,|U|\}=\sigma$. Let $\overline{U}:=\overline{U}^1\cup \overline{U}^2$.

Denote by $\pi:\F_q^{m \times s} \to \F_q^{m \times (s-\sigma)}$ the projection on the columns 
indexed by $[s] \setminus \overline{U}$, and extend $\pi$ component-wise to a projection map
$$\Pi:\F_q^{m \times ns} \to \F_q^{m \times n(s-\sigma)}.$$
Reasoning as in the proof of Theorem~\ref{EXTmainthhamming}, one can show that the restriction of 
$\Pi$ to any good code $\mC \subseteq \F_q^{m \times ns}$ for
$\textnormal{R}_t \langle U \rangle^n$ is injective. This shows the desired upper bound on
$\CA(\textnormal{R}_t \langle U \rangle^n)$.

Now assume $q \ge m \ge s$. If $2t \ge |U|$, then we trivially have 
$\CA(\textnormal{R}_t \langle U \rangle) \ge s-\sigma$.
If $2t < |U|$, then let $\mC \subseteq \F_q^{m \times s}$ be a rank-metric code
in matrix representation \cite{del1,alb1}
of minimum rank distance $2t+1$ and cardinality $|\mC|=q^{m(s-2t)}=q^{m(s-\sigma)}$.
It is easy to see that $\mC$ is good for $\textnormal{R}_t \langle U \rangle$.
\end{proof}

Theorem~\ref{EXTrank} can now be ported to the network context to study 
the scenario where an adversary has access to the packets on a given subset $\mU \subseteq \mE$ of edges,
and can corrupt them by reducing by at most $t$ the rank of the $m \times |\mU|$ matrix whose columns are the 
packets.


\section{Conclusions} \label{secconcl}
In this paper, we have proposed a combinatorial framework for adversarial
network coding.  We have derived upper bounds for three notions of
capacity region (by porting results for Hamming-type channels to the
networking context in a systematic way) and have given some
capacity-achieving coding schemes.

The results of this paper determine (for sufficiently large network
alphabets) the three capacity regions of the following multi-source
network adversarial models:
\begin{itemize}\setlength\itemsep{0em}
\item error-free networks,
\item a single error-adversary with access to all the network edges,
\item a single adversary who can corrupt/erase a fraction of the
components of any alphabet symbol.
\end{itemize}
In the three models above, any integer point in the capacity regions can
be achieved using linear network coding, possibly combined with
link-level encoding/decoding.  Moreover, in compound adversarial
models the use of a large network alphabet (whose size grows in general
exponentially with the number of sources) can be avoided by using the
network multiple times.

Even small modifications of these models (such as restricting the
adversary to a subset of edges) give rise to capacity regions whose
integer points cannot be achieved in general with linear network coding.

In this paper, we have derived cut-set upper bounds for the three
capacity regions associated with multiple (possibly coordinated) adversaries, who 
can only operate 
on prescribed subsets of the network edges.  The problem of determining the various 
capacity regions in this generalized adversarial models remains open.


\bibliographystyle{IEEEtran}

\bibliography{NC}

\appendix


\section{Proofs} \label{Appe}

\begin{proof}[Proof of Proposition~\ref{concprod}]
For $1 \le k \le n$ and $1 \le i \le m$, we denote by $\mX_{k,i}$ and $\mY_{k,i}$ the input and output alphabets, respectively, of channel $\Omega_{k,i}$.
By assumption, we have $\mY_{k,i} \subseteq \mX_{k,i+1}$ for all $1 \le k \le n$ and for all
$1 \le i \le m-1$. We need to prove that for all
$(x_{1,1},...,x_{n,1}) \in \mX_{1,1} \times \cdots \times \mX_{n,1}$ one has
$$\left(\prod_{k=1}^n \left( \Omega_{k,1} \concat \cdots \concat \Omega_{k,m} \right) \right)(x_{1,1},...,x_{n,1})= \left(\left( \prod_{k=1}^n \Omega_{k,1}\right)
\concat \cdots \concat   \left( \prod_{k=1}^n \Omega_{k,m}\right) \right)(x_{1,1},...,x_{n,1}).$$
We will only show the inclusion $\subseteq$. The other containment can be seen similarly.
Fix an arbitrary 
$$(y_{1,m},...,y_{n,m}) \in \left(\prod_{k=1}^n \left( \Omega_{i,1} \concat \cdots \concat \Omega_{i,m} \right) \right)(x_{1,1},...,x_{n,1}) \subseteq \mY_{1,m} \times \cdots \times \mY_{n,m}.$$
By definition of product, for all $1 \le k \le n$ we have 
$y_{k,m} \in (\Omega_{k,1} \concat \cdots \concat \Omega_{k,m})(x_{k,1})$. Thus, by definition of concatenation, for any integer $1 \le k \le n$ there exist elements $x_{k,2},...,x_{k,m}$ with the following properties:
$$x_{k,i} \in \Omega_{k,i-1}(x_{k,i-1}) \subseteq \mY_{k,i-1} \subseteq \mX_{k,i}  \mbox{ for all
$2 \le i \le m$}, \quad \mbox{and} \quad y_{k,m} \in 
\Omega_{k,m}(x_{k,m}).$$
Therefore we have $(x_{1,i},...,x_{n,i}) \in (\Omega_{1,i-1} \times \cdots \times \Omega_{n,i-1}) (x_{1,i-1},...,x_{n,i-1})$ for all
$2 \le i \le m$, and $(y_{1,m},...,y_{n,m}) \in (\Omega_{1,m} \times \Omega_{n,m})(x_{1,m},...,x_{n,m})$.
This implies
\begin{equation*}
(y_{1,m},...,y_{n,m}) \in\left(\left( \prod_{k=1}^n \Omega_{k,1}\right)
\concat \cdots \concat   \left( \prod_{k=1}^n \Omega_{k,m}\right) \right)(x_{1,1},...,x_{n,1}). \qedhere
\end{equation*}
\end{proof}

\begin{proof}[Proof of Proposition~\ref{proprunionfamily}]
Denote by $\mX_j$ and $\mY_j$ the input (resp., output) alphabets of 
channel $\Omega_j$, for $j \in \{1,2\}$. 
By definition of adversarial channel, we need to show that 
for all $x_1 \in \mX_1$ we have
$$\left( \bigcup_{i \in I} (\Omega_1 \concat \Omega_i \concat \Omega_2) \right)(x_1)= \left(\Omega_1 \concat 
\left( \bigcup_{i \in I} \Omega_i \right) \concat \Omega_2\right)(x_1).$$
Let us show the inclusion $\subseteq$. Take an arbitrary $y_2 \in 
\left( \bigcup_{i \in I} (\Omega_1 \concat \Omega_i \concat \Omega_2) \right)(x_1)$. Then there exist
$\overline{\iota} \in I$, $y_1 \in \mY_1 \subseteq \mX$ and 
$y \in \mY$ such that: $y_1 \in \Omega_1(x_1)$, $y \in \Omega_{\overline{\iota}}(y_1)$, and $y_2 \in \Omega_2(y)$.
Therefore $y_2 \in \left(\Omega_1 \concat 
\left( \bigcup_{i \in I} \Omega_i \right) \concat \Omega_2\right)(x_1)$, as desired. The other inclusion
can be shown similarly.
\end{proof}

\begin{proof}[Proof of Lemma~\ref{keylong}]
Part~\ref{keylongitem1} is straightforward, and part~\ref{keylongitem2} follows
from the fact that the $U_\ell$'s are pairwise disjoint. Let us show
part~\ref{keylongitem3}. 
For all 
$1 \le \ell \le L$ define the sets $V_\ell, V'_\ell \subseteq U_\ell$ by
$$V_\ell := \overline{U}_\ell^{1} \cup \overline{U}_\ell^{\star}, \quad\quad
V'_\ell := \overline{U}_\ell^{2} \cup \overline{U}_\ell^{\star}.
$$ 
We claim that $\bm V:=(V_1,...,V_L)$ and $\bm {V'}:=(V'_1,...,V'_L)$ have the desired properties.
First of all,  $|\bm{V}|, |\bm{V}'| \le \bm{t} + \bm e$ by construction. Now let $x,x' \in \mA^s$,
and assume $x_i=x'_i$ for all $i \in [s] \setminus \overline{U}$. We will explicitly construct a vector 
$z \in \HB{t}{e}{V} (x) \cap \HB{t}{e}{V'}  (x')$.
Define the auxiliary sets
$$\overline{U}^{\star}:= \bigcup_{\ell=1}^L \overline{U}_\ell^{\star}, \quad\quad
\overline{V}:= \bigcup_{\ell=1}^L V_\ell, \quad \overline{V}':= \bigcup_{\ell=1}^L V'_\ell.$$
The previous unions are disjoint unions. Moreover, by construction, we have $\overline{V} \supseteq \overline{U}^{\star}$ and $\overline{V}' \supseteq \overline{U}^{\star}$.
Define $z \in \mA^s$ as follows:
$$\mbox{for $1 \le i \le s$, } \quad z_i:= \left\{ \begin{array}{cl} \star & \mbox{ if $i \in  \overline{U}^{\star}$,} \\ x'_i & \mbox{ if $i \in \overline{V} \setminus \overline{U}^{\star}$,} \\  
x_i &  \mbox{ if $i \in \overline{V}' \setminus \overline{V}$,} \\
x_i=x'_i &  \mbox{ otherwise.}
\end{array}   \right.$$
Note that the vector $z$ is well defined. Indeed, we have 
$x_i=x'_i$ for all $i \notin \overline{V} \cup \overline{V}'$, as
 $\overline{V} \cup \overline{V}' \supseteq \overline{U}$.
One can  directly check that $z \in \HB{t}{e}{V}(x) 
\cap \HB{t}{e}{V'}(x')$.
\end{proof}
\end{document}